\documentclass[%
  reprint,            %
nofootinbib,
 amsmath,amssymb,
 aps,
]{revtex4-2}

\usepackage{graphicx}%
\usepackage{dcolumn}%
\usepackage[caption=false]{subfig}
\usepackage{threeparttable}
\usepackage{booktabs}%
\usepackage{bm}%
\usepackage{braket}
\usepackage[dvipsnames]{xcolor}
\usepackage{amsfonts}
\usepackage{xfrac} %
\usepackage{hyperref}%
\usepackage{cleveref}
\usepackage{comment}
\usepackage{array}
\usepackage{xspace}

\newcommand\Tstrut{\rule{0pt}{2.6ex}}         %

\DeclareMathOperator{\tr}{tr}

\DeclareMathOperator{\Var}{Var}

\usepackage{amsthm}
\newtheorem{theorem}{Theorem}
\newtheorem{lemma}{Lemma}

\newtheorem{remark}{Remark}
\crefname{equation}{Eq.}{Eqs.}
\Crefname{equation}{Eq.}{Eqs.}
\crefname{figure}{Fig.}{Figs.}
\Crefname{figure}{Fig.}{Figs.}
\crefname{subfigure}{Fig.}{Figs.}
\Crefname{subfigure}{Fig.}{Figs.}
\crefname{table}{Table}{Tables}
\Crefname{table}{Table}{Tables}
\crefname{section}{Sec.}{Secs.}
\Crefname{section}{Sec.}{Secs.}
\crefname{subsection}{Sec.}{Secs.}
\Crefname{subsection}{Sec.}{Secs.}
\crefname{subsubsection}{Sec.}{Secs.}
\Crefname{subsubsection}{Sec.}{Secs.}
\crefname{appendix}{Appendix}{Appendices}
\Crefname{appendix}{Appendix}{Appendices}
\crefname{theorem}{Theorem}{Theorems}
\Crefname{theorem}{Theorem}{Theorems}
\crefname{lemma}{Lemma}{Lemmas}
\Crefname{lemma}{Lemma}{Lemmas}
\crefname{proposition}{Proposition}{Propositions}
\Crefname{proposition}{Proposition}{Propositions}
\crefname{corollary}{Corollary}{Corollaries}
\Crefname{corollary}{Corollary}{Corollaries}
\crefname{definition}{Definition}{Definitions}
\Crefname{definition}{Definition}{Definitions}
\crefname{example}{Example}{Examples}
\Crefname{example}{Example}{Examples}
\crefname{remark}{Remark}{Remarks}
\Crefname{remark}{Remark}{Remarks}
\newtheorem{protocol}{Protocol}
\crefname{protocol}{Protocol}{Protocols}
\Crefname{protocol}{Protocol}{Protocols}
\newtheorem{procedure}{Procedure}
\crefname{procedure}{Procedure}{Procedures}
\Crefname{procedure}{Procedure}{Procedures}

\newcommand{\seqAA}{\ifmmode\mathrm{SeqAA}\else{\emph{SeqAA}}\xspace\fi}
\newcommand{\renAA}{\ifmmode\mathrm{RenAA}\else{\emph{RenAA}}\xspace\fi}
\newcommand{\aao}{\ifmmode\mathrm{A@O}\else{A@O}\xspace\fi}
\newcommand{\snc}{\ifmmode\mathrm{S\&C}\else{S\&C}\xspace\fi}
\providecommand{\dket}[1]{|#1\rangle\!\rangle}
\providecommand{\dbra}[1]{\langle\!\langle #1|}
\providecommand{\dbraket}[2]{\langle\!\langle #1|#2\rangle\!\rangle}
\begin{document}

\title{Exact log-depth preparation of highly entangled matrix product states}

\author{Keisuke Murota}
\affiliation{Quantinuum K.K., Otemachi Financial City Grand Cube 3F, 1-9-2 Otemachi, Chiyoda-ku, Tokyo, Japan}

\author{Fr\'ed\'eric Sauvage}
\affiliation{Quantinuum, Partnership House, Carlisle Place, London SW1P 1BX, United Kingdom}

\author{Marco Ballarin}
\affiliation{Quantinuum, Partnership House, Carlisle Place, London SW1P 1BX, United Kingdom}

\author{Gabriel Matos}
\affiliation{Quantinuum, Partnership House, Carlisle Place, London SW1P 1BX, United Kingdom}

\author{Enrico Rinaldi}
\affiliation{Quantinuum, Partnership House, Carlisle Place, London SW1P 1BX, United Kingdom}

\begin{abstract}
    Preparing matrix product states (MPS) on a quantum device is a key subroutine in many quantum algorithms.
    The most competitive methods, based on the renormalisation group, prepare translationally invariant MPS of size $L$ and bond dimension $\chi$, up to an error $\varepsilon$, in circuit depth $\tilde O(\chi^{4}\log(L/\varepsilon))$ or $\tilde O(\chi^{6}\log\log(L/\varepsilon))$.
    We improve multiple aspects of these methods.
    First, using block-encoded correction maps, whose post-selection succeeds with constant probability, we render the preparation exact without sacrificing the scaling in $L$.
    Second, through a generalisation of oblivious amplitude amplification to isometries, we reduce the bond-dimension dependence, improving the depth to $\tilde O(\chi^{2}\log L + \chi^{4})$ or $\tilde O(\chi^{2}\log\log L + \chi^{4})$, and even to $\tilde O(\chi^{3}\log L)$ for incoherent preparations.
    Finally, we extend the framework to non-translationally invariant MPS and prove logarithmic-depth exact preparation for independent and identically distributed random tensor sequences.
    Confirmed by numerical studies, these results constitute, to the best of our knowledge, the most efficient exact MPS preparation protocols in the relevant parameter regimes.

\end{abstract}

\maketitle

\section{Introduction}

Matrix product states (MPS) provide efficient classical representations of quantum many-body states, and form the basis of tensor-network methods widely used in computational physics~\cite{white1992,fannes1992,vidal_2003,schollwock2011,orus2014, cirac2021mpspeps}. 
They represent the ground states of gapped one-dimensional systems accurately~\cite{verstraete_2006, hastings_2007} and are employed in fields ranging from quantum chemistry~\cite{ stoudenmire_2012, iqbal_2022} to quantum machine learning~\cite{rieser_2023, Berezutskii2025,murota2025}. 
Building on the success of these techniques, it is of great interest to efficiently prepare states represented by MPS on a quantum device. 
A prepared MPS can serve as a warm start for variational optimisation~\cite{Rudolph2023, Huggins_2019, Dborin_2022}, as a structured initial state for quantum simulations~\cite{martin2024,iqbal_2022, lubasch_2020, robertson_2025, jaderberg_2026}, 
or for loading classical data into a quantum register~\cite{Melnikov_2023, jumade2023dataloadableshortdepth, yuichi_2025, ballarin_2025}
The efficiency of this preparation therefore determines the practicality of these applications. 

Several approaches aiming at preparing generic MPS through quantum circuits have been proposed. 
For instance, an MPS of bond dimension $\chi$ and system size $L$ can always be prepared by a sequential circuit with depth $O(\chi^2 L)$~\cite{schon2005}.
Alternatively, by interpreting an MPS as a balanced tree tensor network (TTN),
one can obtain a quantum circuit preparing it in depth $O(\chi^{6}\log L)$~\cite{shi2006ttn, cervero_2023}.
Heuristic algorithms for MPS preparation have also been devised~\cite{ran_2020,jumade2023dataloadableshortdepth, mansuroglu2025}.
While these can provide shallower circuits, they typically involve numerical optimisations of unitary gates, known to become difficult for large system sizes and lead to approximations.

When focusing on specific families of states, the circuit depths of the MPS preparations can be improved. For instance, if an MPS is efficiently expressible using the multi-scale entanglement renormalisation ansatz (MERA)~\cite{vidal_2007, vidal_2008}, it can be prepared in depth $O(\chi^{4} \log L)$, although this condition does not apply generally. 
Other strategies leverage mid-circuit measurement and feed-forward (MCM-FF)~\cite{raussendorf2001oneway, smith_2023, smith2024, stephen_2025}. 
For instance, \citet{smith_2023, smith2024} 
achieve a circuit depth that is constant in the size of the system, at the cost of a post-selection overhead that scales as $O(\chi^4)$ for open boundary conditions and $O(\chi^5)$ for periodic ones. However, their method can only be applied to a restricted class of MPS, which includes the AKLT state~\cite{aklt} or the GHZ state~\cite{Greenberger1989}.

Recently, methods inspired by the renormalisation group (RG) have been explored~\cite{piroli2021,cirac2021mpspeps, malz2024, scheer2025}. 
These circuit constructions rely on a grouping of the tensors constituting the MPS into blocks of $q$ sites $A^{(q)}$, with polar decomposition $A^{(q)} = V^{(q)}P^{(q)}$, 
and leverage the fact that $P^{(q)}$ quickly approaches an RG fixed point $P^{(\infty)}$ that can be prepared in $O(\chi^2)$ depth.
Together with improved implementations of the isometries $V^{(q)}$,
\citet{malz2024}
achieve circuit depths scaling as $\tilde O(\chi^{4}\log(L/\varepsilon))$, or $\tilde O(\chi^{6}\log\log(L/\varepsilon))$ on all-to-all connected quantum devices or when resorting to MCM-FF
for target precision $\varepsilon$. 
(Note that the $\tilde{O}$ notation hides a $\log(\chi)$ dependence and factors depending on the correlation length $\xi$ of the MPS.)
Such schemes have been demonstrated across a transition between a symmetry-protected topological phase and a trivial phase on an 80-qubit superconducting platform~\cite{scheer2025}.

While these circuit constructions have competitive scalings, 
they only achieve approximate preparation and still carry a large bond-dimension dependence, quickly becoming prohibitive when preparing highly entangled states at scale.
Moreover, although the framework of \citet{malz2024} extends to non-TI MPS, logarithmic-depth preparation was only established numerically, without analytical guarantees.  In this work we improve on these limitations. Concretely, we make three contributions:

\begin{table}[!b]
\centering
\setlength{\tabcolsep}{0pt}
\begin{threeparttable}
\begin{tabular}{l c c}
\hline
Method & Depth of $V^{(q)}$ & Depth of MPS preparation \\ \hline
\multicolumn{3}{c}{\textbf{\citet{malz2024}}} \\ \hline
Sequential & $\tilde{O}(q\,\chi^{4})$ & $\tilde{O}(\chi^{4}\log(L/\varepsilon))$ \Tstrut \\[0.25em]
Tree & $\tilde{O}(\chi^{6}\log q)$ & $\tilde{O}(\chi^{6}\log\log(L/\varepsilon))$ \\[0.25em]\hline
\multicolumn{3}{c}{\textbf{Ours} } \\ \hline 
\seqAA & $\tilde{O}(q\chi^{2} + \chi^{4})$ & $\tilde{O}(\chi^{2}\log L + \chi^{4})$ \Tstrut \\[0.25em]
\renAA & $\tilde{O}(\chi^{2}\log q + \chi^{4})$ & $\tilde{O}(\chi^{2}\log\log L + \chi^{4})$ \\[0.25em]
Quasi-probabilistic\tnote{*} & $\tilde{O}(q\,\chi^{3})$ & $\tilde{O}(\chi^{3}\log L)$ \\ \hline\hline
\end{tabular}
\begin{tablenotes}
\item[*] Incoherent implementation.
\end{tablenotes}
\caption{
  Comparison between the circuit depth scalings for the constructions introduced in this work and those in \citet{malz2024}. 
  The left column shows the depth to implement the isometry $V^{(q)}$ for a renormalised block of size $q$ and bond dimension $\chi$ (see \cref{sec:background} for definitions of these concepts).
  The right column shows the overall MPS preparation depth for a system of size $L$, obtained by substituting $q = \tilde{O}(\log L)$.
  The $\tilde{O}$ notation hides $\log(\chi)$ and factors depending on the correlation length $\xi$ of the MPS.
  The methods of~\citet{malz2024} are approximate, with controllable error $\varepsilon$, while ours achieve exact preparation.
}
\label{tab:Vq-depth-summary}
\end{threeparttable}
\end{table}

\begin{enumerate}
    \item For TI MPS, we introduce a \emph{correction map} that converts the RG fixed point $P^{(\infty)}$ into $P^{(q)}$, and identify conditions under which its block-encoded implementation succeeds with \emph{constant} post-selection probability. This makes the preparation exact at constant overhead while retaining the system-size scalings of~\citet{malz2024} (\cref{sec:exact-prep}). This exact preparation can even be made deterministic through amplitude amplification (\cref{subsec:improve_success}).

    \item Employing the correction maps together with a generalisation of oblivious amplitude amplification~\cite{brassard2002, berry2015taylor, gilyen2019qsvt, styliaris_2025} to isometries (OAAI in~\cref{thm:OAAI}), we propose several implementations of $V^{(q)}$ with improved bond-dimension scalings, which we refer to as \seqAA, \renAA, and quasi-probabilistic (\cref{sec:implementation-Vq}).

    \item Leveraging the theory of ergodic quantum processes~\cite{Movassagh}, we rigorously extend the framework to non-TI MPS (\cref{sec:nonTI}) by proving logarithmic-depth exact preparation for independent and identically distributed (IID) random tensor sequences~\cite{garnerone2010typicality, garnerone2010rmps}. 
\end{enumerate}

A representative of our MPS-preparation circuit is depicted in~\cref{fig:intro-overview} and our main circuit-depth scalings are summarised in~\cref{tab:Vq-depth-summary}.
In the \emph{coherent} setting, where the goal is to 
prepare the TI MPS, our best construction achieves exact preparation in depth $\tilde{O}(\chi^{2}\log L + \chi^{4})$ in practically relevant regimes, improving on the approximate $\tilde{O}(\chi^{4}\log(L/\varepsilon))$ of~\citet{malz2024}.
We confirm these results numerically in~\cref{sec:numerics}. 
The numerics further support that this logarithmic scaling persists for non-TI MPS beyond the IID ensembles covered by our theorem, as showcased for ground states of a disordered Heisenberg model (\cref{sec:dmrg_impurity}).
Additionally, when the goal is not to prepare the state itself but rather to estimate expectation values of observables, this does not need to be done using a single circuit. 
We can instead perform a weighted average over an ensemble of states prepared by shorter-depth circuits, arising from a quasi-probabilistic decomposition, at the cost of a multiplicative sampling overhead. In this \emph{incoherent} setting, our circuits achieve depth $\tilde{O}(\chi^{3}\log L)$ while still ensuring that the sampling overhead remains constant.

To the best of our knowledge, these circuit constructions constitute the most efficient exact MPS preparation protocols in the relevant parameter regimes, advancing MPS preparation, especially for highly entangled states.

The manuscript is structured as follows.
In~\cref{sec:background}, we formalise the MPS preparation task and introduce the RG-inspired circuit constructions that we base ourselves on~\cite{malz2024}. \Cref{sec:exact-prep} details our exact preparation protocols, proving a constant post-selection success probability
with logarithmic depth (or double-logarithmic in the all-to-all model, or when resorting to MCM-FF).~\Cref{sec:implementation-Vq} introduces improved implementations of $V^{(q)}$ that leverage OAAI. In \cref{sec:nonTI}, we extend the reach of our methods to families of MPS that do not satisfy translational invariance.
Finally, in~\cref{sec:numerics}, we report numerical results comparing the different methods studied.

\begin{figure}[!ht]
\centering
    \includegraphics[width=\columnwidth]{figs/fig1v2.pdf}
    \caption{
    Circuit construction for the preparation of a TI MPS with bond dimension $\chi$ (a). %
    The representative circuit (b-d) sketched here corresponds to the method \emph{SeqAA} of~\cref{tab:Vq-depth-summary}.
    As detailed in~\cref{sec:background,sec:exact-prep}, the circuit consists of the preparation of an RG fixed point $P^{(\infty)}$ (b), a layer of correction maps $C^{(q)}$ (c), and a layer of isometries $V^{(q)}$ (d).
    Key to our circuit constructions are the correction maps (green rectangles) implemented through block-encoding. These ensure exactness of the circuit preparations and are used to reduce the circuit depths for the isometries $V^{(q)}$.
    }
    \label{fig:intro-overview}
\end{figure}
\section{Background} \label{sec:background}

We begin by establishing the notations and the problem setting in~\cref{sec:bg-notation}. 
We then review the RG based approach that forms the backbone of our algorithm in~\cref{sec:bg-rg}.

\subsection{Problem setting} \label{sec:bg-notation}

Consider a 3-leg tensor $A \in \mathbb{C}^{\chi \times d \times \chi}$, 
with $\chi$ the virtual (bond) dimension and $d$ the local physical dimension. 
In most of what follows, our goal will be to prepare an $L$-site translationally invariant MPS defined as
\begin{equation}
  \label{eq:MPS-def-indices}
  \ket{{A}(L)} = \sum_{i_1,\hdots,i_L=1}^d 
  {\rm tr}\bigl[A^{i_1} \hdots A^{i_L}\bigr]  \ket{i_1 \hdots i_L},
\end{equation}
where we denote as $A^i\in \mathbb{C}^{\chi \times \chi}$ the matrix corresponding to the $i$-th physical dimension of the tensor $A$.
(Throughout this paper, we will use Latin letters for the physical indices and Greek letters for the virtual indices.)
Diagrammatically, the TI MPS of~\cref{eq:MPS-def-indices} is represented as 
\begin{equation}
  \label{eq:MPS-def}
  \ket{{A}(L)} =
  \raisebox{-0.25\height}{\hspace{1em}\includegraphics[width=0.33\textwidth]{figs/TIMPS.pdf}\hspace{3pt}.  \\[6pt] }
\end{equation}

By leveraging a gauge freedom, every TI MPS can be brought to a block-diagonal canonical form with $r$ blocks (see~\cite[Section 2.3]{cirac_2017} or~\cref{app:normal-basic} for a precise definition).
We call an MPS \emph{normal}~\cite{perez_garcia_2007} when $r = 1$. When the tensor $A$ satisfies the left-canonical condition $\sum_i A^{i\dag} A^i  = \mathbb{I}_{\chi}$, where $\mathbb{I}_{\chi}$ is the identity on $\mathbb{C}^\chi$, the state is asymptotically normalised  (see~\cref{app:normal-basic}):
\begin{equation}
  \label{eq:norm}
  \bigg| 1-\| \ket{A(L)} \|^2 \bigg| = O(e^{-L / \xi}),
\end{equation}
with $\|\cdot\|$ the $2$-norm of a vector, 
and $\xi>0$ the correlation length which is set by the MPS transfer matrix
\begin{equation}\label{eq:transfer-matrix-main}
  E := \sum_i A^i \otimes \bar{A}^i \in \mathbb{C}^{\chi^2\times\chi^2}.
\end{equation}
For a normal MPS, $E$ has a unique leading eigenvalue $\lambda_1=1$ and we define $\xi := -1/\log|\lambda_2|$ with $\lambda_2$ the subdominant eigenvalue ($|\lambda_2|<1$).
In the main text, we restrict our focus to \emph{normal} MPS (see~\cref{app:non-normal} for a treatment of non-normal TI MPS), 
such that for the large system sizes $L\gg 1$, we can assume $\| \ket{A(L)} \| = 1$.

Two primitive manipulations of tensors will be repeatedly used,  the \emph{blocking} of tensors and their \emph{polar decompositions}, that are now detailed.
The blocking of $q$ consecutive tensors $A$ gives $A^{(q)} \in \mathbb{C}^{\chi \times d^{q} \times \chi}$ defined through its entries as
\begin{equation}
  \begin{split}
  &\bigl[A^{(q)}\bigr]^{(i_1\dots i_q)}_{\alpha,\beta}
  =
  \sum_{\gamma_1,\dots,\gamma_{q-1} = 1}^{\chi}
  \bigl[A\bigr]^{i_1}_{\alpha, \gamma_1}
  \bigl[A\bigr]^{i_2}_{\gamma_1, \gamma_2}
  \cdots
  \bigl[A\bigr]^{i_q}_{\gamma_{q-1}, \beta},%
  \end{split}
\end{equation}
and we refer to $q$ as the \textit{block size}. It is represented as
\begin{equation}\label{eq:blocking}
  \raisebox{-0.25\height}{\hspace{1em}\includegraphics[width=0.4\textwidth]{figs/block.pdf}\hspace{3pt}. \\[6pt]}
\end{equation}
Interpreting $A^{(q)}$ as a map from the virtual space ($\mathbb{C}^{\chi ^2}$) to the physical space ($\mathbb{C}^{d^q}$), we choose the block size large enough such that this map is injective. This requires $d^q \geq \chi^2$ and is possible for normal TI MPS after finite blocking~\cite{perez_garcia_2007}.
For such block sizes, one can perform a polar decomposition of $A^{(q)}$, yielding
\begin{equation}
  \raisebox{-0.25\height}{\hspace{1em}\includegraphics[width=0.22\textwidth]{figs/polar.pdf}\hspace{3pt}, \\[6pt] }
  \label{eq:polar}
\end{equation}
with $P^{(q)} = \sqrt{\bigl(A^{(q)}\bigr)^{\dag} A^{(q)}} \in\mathbb{C}^{\chi^{2} \times \chi^{2}}$ a
positive-definite matrix
and $V^{(q)}=A^{(q)}(P^{(q)})^{-1}\in\mathbb{C}^{d^{q}\times \chi^2}$ an isometry, i.e. satisfying
$V^{(q)\dag}V^{(q)}=\mathbb{I}_{\chi^2}$. 
Splitting the $L$-site chain~\eqref{eq:MPS-def}
into $L' := L/q$ blocks, and performing $q$-site blockings, yields the identity $\ket{A(L)} = \ket{A^{(q)}(L')}$. The tensor obtained after polar decomposition of each of the blocks is depicted in~\cref{fig:exact-prep}~(a). This will serve us as a starting point for all the circuit constructions presented here.

Finally, as our aim is to prepare TI MPS as quantum circuits, we need to define a measure of circuit complexity. 
Throughout this paper, we measure circuit complexity using CNOT-depth, and refer to it simply as the circuit depth. It is defined as the number of layers of CNOT gates (or equivalent two-qubit entangling operations) required to implement a circuit.
When assessing scaling of circuit depths, we will repeatedly resort to the fact that a state $\ket{\psi}\in \mathbb{C}^M$ can be implemented in depth $O(M)$ while an isometry $V \in \mathbb{C}^{M \times N}$  can be implemented in depth $O(MN)$~\cite{iten2016}.

\subsection{RG approach} \label{sec:bg-rg}

The decomposition of~\cref{eq:polar} can be interpreted as a factorisation of the blocked tensor $A^{(q)}$ into an RG component $P^{(q)}$, which can be approximated by a local state, and an isometry component $V^{(q)}$  handling short-range interactions.
This forms the backbone of an efficient, but approximate, MPS preparation of $\ket{{A}(L)}$~\cite{malz2024, piroli2021}, reviewed here. 

First, consider $\ket{{P^{(q)}}(L')} \in \mathbb{C}^{\chi^{2L'}}$, the MPS associated with a chain of $L'$ tensors $P^{(q)}$.
We refer to this state as the \emph{renormalised state}, and refer to each group of $q$ original sites as a \emph{renormalised block}.
In what follows, $\ket{{P^{(q)}}(L')}$ is approximated by $\ket{{P^{(\infty)}}(L')}$ defined as
\begin{equation}
P^{(\infty)} := \lim_{q \to \infty} P^{(q)},
\end{equation}
the fixed point of $P^{(q)}$ as $q$ grows. As such, we refer to $\ket{{P^{(\infty)}}(L')}$ as the \emph{RG fixed-point state}. As we now detail, this approximation has the merit of having a shallow circuit implementation at the cost of an approximation error that can be controlled by varying the block size.

For normal MPS, $P^{(\infty)}$ admits the factorisation
\begin{equation}\label{eq:pinf_product}
   P^{(\infty)} = \sigma \otimes \mathbb{I}_{\chi},
\end{equation}
where, by injectivity of the normal canonical tensor, $\sigma$ is positive-definite ($\sigma \succ 0$) and satisfies $\tr[\sigma \sigma^\dag] = 1$~\cite{malz2024, piroli2021}. 
In practice, $\sigma$ is obtained from diagonalisation of the transfer matrix $E$ in Eq.~\eqref{eq:transfer-matrix-main} associated with $A$ (see \cref{app:compute-sigma}). Diagrammatically,~\cref{eq:pinf_product} is depicted as
\begin{equation}
  \label{eq:infty-sigma}
  \raisebox{-0.25\height}{\includegraphics[width=0.19\textwidth]{figs/infty_sigma.pdf}} \hspace{3pt},
\end{equation}
showing that a chain of repeated tensors becomes a tensor product of identical locally entangled states $\sigma$:
\begin{equation}\label{eq:p_infinity_line}
  \ket{{P^{(\infty)}}(L')} = \ket{\sigma}^{\otimes L'} \in \mathbb{C}^{\chi^{2L'}},
\end{equation}
or diagrammatically as
\begin{equation}\label{eq:p_infinity}
  \raisebox{-0.05\height}{\includegraphics[width=0.35\textwidth]{figs/infty_chain.pdf}} \hspace{3pt}.
\end{equation}
Note that each $\ket{\sigma}$ is a normalised state of dimension $\chi^2$ since $\tr[\sigma \sigma^\dag] = \|\ket{\sigma}\|= 1$, and 
they can all be synthesised in parallel by a circuit of depth
\begin{equation}
\label{eq:D_inf}
 D_\infty = O(\chi^{2}).
\end{equation}

To capture discrepancies between $P^{(q)}$ and its fixed point $P^{(\infty)}$, let us define 
the residual matrix as
\begin{equation}
\label{eq:residual}
    R^{(q)} = P^{(q)} - P^{(\infty)}.
\end{equation} 
Notably, and assuming that the transfer matrix of $A$ is diagonalisable, we prove that the norm of the residual decays exponentially with the block size: 
\begin{equation}
\|R^{(q)}\|_{2} \leq \Gamma\, e^{-q/\xi}.
\label{residual}
\end{equation}
As seen before, $\xi$ is the correlation length of $A$ while $\|\cdot\|_2$ denotes the Frobenius norm. The constant $\Gamma > 0$ depends only on the MPS tensor and is defined as
\begin{equation}\label{def:gamma}
  \Gamma := \chi\,\|S\|_2\,\|S^{-1}\|_2\,\|(P^{(\infty)})^{-1}\|_2,
\end{equation}
where $S$ is the matrix that diagonalises the transfer matrix $E$ (see \cref{app:residual-bound} for the derivation and a treatment of the non-diagonalisable case).

Define the trace distance (for pure states) as
$D_{\rm tr}(\ket{\psi},\ket{\phi}):=\frac12\|\ket{\psi}\!\bra{\psi}-\ket{\phi}\!\bra{\phi}\|_1 
= \sqrt{1-|\braket{\psi|\phi}|^2}
$.
It follows from~\cref{residual} that a choice of block size
\begin{equation}\label{eq:scaling-q}
    q=O \left( \xi \log \left (\frac{\Gamma L}{\varepsilon} \right )\right)
\end{equation}
ensures $\varepsilon$-error in trace distance: 
\begin{equation}\label{eq:trace-distance-error}
  D_{\rm tr}\!\left(\ket{P^{(q)}(L')},\ket{P^{(\infty)}(L')}\right)
  \le \varepsilon .
\end{equation}
The derivation, together with a treatment of the non-diagonalisable case,
is provided in~\cref{app:trace-distance}.

\begin{figure}[htp]
  \centering
  \includegraphics[width=0.45\textwidth]{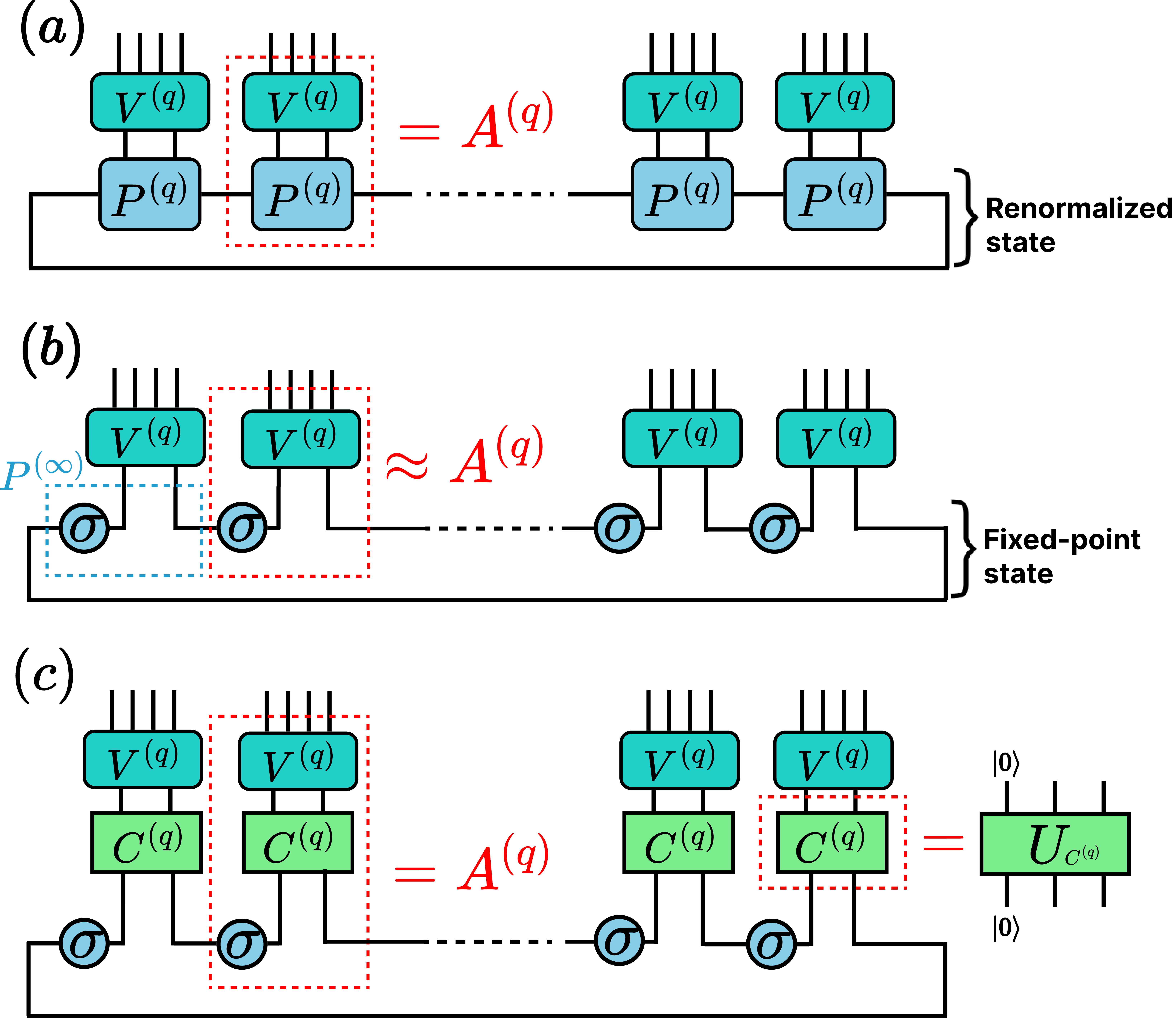}
  \caption{
  (a)~The TI MPS to be prepared~\eqref{eq:MPS-def} after the steps of blocking~\eqref{eq:blocking} and polar decomposition~\eqref{eq:polar}.
  (b)~For normal TI MPS, the tensor $P^{(q)}$ can be approximated through its fixed-point $P^{(\infty)}$~\eqref{eq:infty-sigma} that can be prepared in $O(\chi^2)$ depth~\eqref{eq:p_infinity}.
  Together with the implementation of $V^{(q)}$~\eqref{eq:depth-costs-bg}, this forms the basis of the efficient approximate MPS preparation of \citet{malz2024}, with scalings reported in~\cref{tab:Vq-depth-summary}.
  (c)~Introducing a correction map $C^{(q)}$, implemented through block-encoding, one can achieve similar scalings for exact state preparation.
  }
  \label{fig:exact-prep}
\end{figure}

After preparing the RG fixed-point state $\ket{{P^{(\infty)}}(L')}$, it remains to apply the isometries $V^{(q)}$ to each of the $L'$ renormalised blocks. This can be done in parallel, and 
given that the trace distance is isometry-invariant,
this prepares the desired TI MPS with $\varepsilon$-error. The resulting protocol~\cite{malz2024} is sketched in~\cref{fig:exact-prep}(b) and summarised in the following.

\begin{protocol}[RG-based approximate TI MPS preparation with $\varepsilon$-error]
  \label{protocol:rg-approx-prep}
\leavevmode
\begin{enumerate}
  \item \textbf{Choose a block size.}
        Choose $q$ as per~\cref{eq:scaling-q} to guarantee $\varepsilon$-error~\eqref{eq:trace-distance-error}, and define \(L' = L/q\).

  \item \textbf{Classical preprocessing.}
        Perform the polar decomposition \(A^{(q)} = V^{(q)}P^{(q)}\)~\eqref{eq:polar}, and obtain the RG fixed-point operator $P^{(\infty)}$ (equivalently, $\sigma$)~\eqref{eq:pinf_product}.
        
  \item \textbf{Prepare the RG fixed-point state.}
        Prepare \(\ket{P^{(\infty)}(L')}\) via~\cref{eq:p_infinity} (normal case). 

  \item \textbf{Implement \(V^{(q)}\).}
        Apply \(V^{(q)}\) in parallel to all \(L'\) renormalised blocks.
\end{enumerate}
\end{protocol}

\paragraph*{Circuit depth}
The depth of~\cref{protocol:rg-approx-prep} is given by
\begin{equation}
  \label{eq:def-Dmalz}
  D_{\mathrm{Malz}}(q) := D_\infty + D_V(q),
\end{equation}
with \(D_V(q)\) the depth of the circuit implementing \(V^{(q)}\).
~\citet{malz2024} presents two constructions, referred to here as the \emph{sequential} and \emph{tree} constructions, with depth:
\begin{equation}
  \begin{split}
    D_{V}(q)   &=
      \begin{cases}
        D_{V}^{\rm seq}(q) = O(\chi^{4} q) , \\[0.25em]
        D_{V}^{\rm tree}(q) = O(\chi^{6}\log q) .
      \end{cases}
  \end{split}
  \label{eq:depth-costs-bg}
\end{equation}
The second one can be achieved either on architectures with all-to-all connectivity, 
or by using MCM-FF to mediate non-local interactions.

As per~\cref{eq:trace-distance-error}, by selecting $q = O\left(\xi\log(\frac{\Gamma L}{\varepsilon})\right)$, 
one can prepare the desired TI MPS~\eqref{eq:MPS-def} with $\varepsilon$-error in depth  
\begin{equation}
  \begin{split}
    D_{\mathrm{Malz}}  &=
      \begin{cases}
        D^{\rm seq}_{\mathrm{Malz}} = O(\xi\chi^{4} \log(\frac{\Gamma L}{\varepsilon})), \\[0.25em]
        D^{\rm tree}_{\mathrm{Malz}} = O\!\left(\chi^{6}\log\!\left(\xi \log(\tfrac{\Gamma L}{\varepsilon})\right)\right).
      \end{cases}
  \end{split}
  \label{eq:depth-costs-bg-overall}
\end{equation}
These depths are reported in~\cref{tab:Vq-depth-summary}.

In~\cref{sec:exact-prep}, we show that the preparation of $\ket{P^{(q)}(L')}$ can be corrected, eliminating any approximation error while still
retaining the scalings in $\chi$ and $L$ in~\cref{eq:depth-costs-bg-overall}. 
In~\cref{sec:implementation-Vq}, we propose improved circuit implementations for the isometries, 
further reducing the scaling in $\chi$.

\section{Exact preparation of TI MPS} \label{sec:exact-prep}
In this section, we devise MPS preparation algorithms that are exact ($\varepsilon=0$). 
To that end, in~\cref{subsec:prob-correction}, we define a correction operation that maps $P^{(\infty)}$ to $P^{(q)}$, and
detail its implementation through block-encoding.
Notably, we show that for an appropriate choice of the block size $q$, effectively scaling as before~\eqref{eq:scaling-q}, the probability of failure of the algorithm (due to the post-selection probability inherent to the block-encoding) can be made \emph{constant}.
This yields the first algorithm that achieves exact (but probabilistic) preparation of TI MPS. We study its average complexity in~\cref{subsec:exact-protocol}, and proceed by detailing how the probabilistic correction can be 
improved
or be made deterministic through amplitude amplification in~\cref{subsec:improve_success}.
While these improvements do not change the scalings of the circuit depths, in practice, they can lead to significant circuit depth reductions that are later probed in the numerical studies of~\cref{sec:numerics}.

\subsection{Probabilistic correction} \label{subsec:prob-correction}
Given a block size $q$, let us define the \emph{correction map}
\(C^{(q)} \in \mathbb{C}^{\chi^2\times \chi^2}\)
through
\begin{align}\label{eq:def-Cq-imp}
  P^{(q)} = C^{(q)} P^{(\infty)}.
\end{align}
For normal MPS, 
recall that $P^{(\infty)}=\sigma\otimes\mathbb{I}_{\chi}$ with $\sigma \succ 0$, and thus is invertible (see~\cref{app:non-normal} for the non-normal case).
Hence, the correction map is well defined and can be expressed as
\begin{align}
  \label{eq:def-Cq}
  C^{(q)} := P^{(q)}\,(P^{(\infty)})^{-1}
  = \mathbb{I}_{\chi^2} + R^{(q)} (P^{(\infty)})^{-1},
\end{align}
where \(R^{(q)} := P^{(q)} - P^{(\infty)}\)~\eqref{eq:residual}. 
Assuming that the RG fixed-point state has been prepared,
the following procedure uses block-encoding (see~\cref{app:block-encoding} for a brief review) to  map $\ket{{P^{(\infty)}}(L')}$ into $\ket{{P^{(q)}}(L')}$.
\begin{procedure}[Probabilistic correction]
\label{procedure:prob-mapping}
\leavevmode
\begin{enumerate}
  \item \textbf{Block-encode the correction map.}
        Construct suitable matrices \(M_1,M_2,M_3\)~\cite{nibbi2024}, such that the correction map, rescaled by its spectral norm $\|C^{(q)}\|_\infty$, can be embedded into the unitary
        \begin{align}
          \label{eq:block-encoding-Cq}
          U_{C^{(q)}}=
          \begin{pmatrix}
            C^{(q)}/\|C^{(q)}\|_\infty & M_{1} \\
            M_{2} & M_{3}
          \end{pmatrix} \in \mathbb{C}^{2\chi^2\times 2\chi^2}.
        \end{align}
  \item \textbf{Apply and post-select.}
        For each of the \(L'\) renormalised blocks, add an ancilla qubit initialised in \(\ket{0}\),
        apply \(U_{C^{(q)}}\), and post-select the ancilla on \(\ket{0}\). 
\end{enumerate}
\end{procedure}

Since each $U_{C^{(q)}}$ acts independently on each block and has dimension $O(\chi^2 \times \chi^2)$, they can all be implemented
 in parallel, resulting in a circuit depth for the correction 
\begin{equation}\label{eq:DC}
    D_C = O(\chi^4).
\end{equation}
As for ancillary resources, the $L'$ ancillas required in~\cref{procedure:prob-mapping} 
are reused later to store the physical MPS (i.e. when implementing the isometries). 
That is, the procedure does not necessitate qubits beyond those needed to store the target MPS (see \cref{app:qubit-count} for an extended discussion).

\cref{procedure:prob-mapping} is inherently probabilistic due to the post-selection.
Its \emph{success probability} $p_{\mathrm{succ}}$ is the probability of measuring all ancillas in $\ket{0}$.
Nonetheless, the following theorem shows that by choosing a block size with similar scaling in $L$ as in \cref{eq:scaling-q},
this post-selection succeeds with probability independent of $L$.

\begin{theorem}[Constant-success probability for probabilistic correction]
\label{thm:constant-success}
When applied to the RG fixed-point state $\ket{{P^{(\infty)}}(L')}$, \cref{procedure:prob-mapping} prepares the renormalised state $\ket{P^{(q)}(L')}$.
Choosing a block size scaling as
\begin{equation}
  \label{eq:q-choice-q-2}
    q
    =
    O\!\left(
    \xi
    \log\!\left(
    \frac{\Gamma\, L}{\delta}
    \right)
    \right),
\end{equation}
with $\delta >0 $ a small constant, the post-selection succeeds
with constant probability
\begin{equation}
  \label{eq:succ-unified}
  p_{\mathrm{succ}} \;\ge\; 1 - \bigl(\delta + O(\delta^2)\bigr).
\end{equation}
\end{theorem}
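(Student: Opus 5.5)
Correctness follows from linearity of post-selection. The state $\ket{P^{(\infty)}(L')}$ is the MPS with tensor $P^{(\infty)}$ on each block. Applying $U_{C^{(q)}}$ to block $k$ and projecting its ancilla onto $\ket{0}$ acts on that block's $\chi^2$-dimensional register as $C^{(q)}/\|C^{(q)}\|_\infty$. Since these operators act on disjoint registers, the unnormalised post-selected state is
\begin{equation}
\|C^{(q)}\|_\infty^{-L'}\,(C^{(q)})^{\otimes L'}\ket{P^{(\infty)}(L')}.
\end{equation}
Contracting $C^{(q)}$ into each tensor gives $C^{(q)}P^{(\infty)}=P^{(q)}$ by \cref{eq:def-Cq-imp}. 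The output is therefore proportional to $\ket{P^{(q)}(L')}$, and the success probability is its squared norm:
\begin{equation}
p_{\rm succ}=\frac{\|\ket{P^{(q)}(L')}\|^2}{\|C^{(q)}\|_\infty^{2L'}\,\|\ket{P^{(\infty)}(L')}\|^2}.
\end{equation}
The denominator's state norm is $1$ by \cref{eq:p_infinity_line}, since $\ket{\sigma}$ is normalised. The numerator equals $\|\ket{A(L)}\|^2$, because applying the isometries $V^{(q)}$ preserves norms and $\ket{A(L)}=\ket{A^{(q)}(L')}$. By \cref{eq:norm} this is $1-O(e^{-L/\xi})$, which is $1$ in the regime $L\gg1$ considered in the main text, or can be absorbed into $O(\delta^2)$. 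Hence $p_{\rm succ}\ge \|C^{(q)}\|_\infty^{-2L'}$, up to these negligible corrections.

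It remains to bound the spectral norm of the correction map. From \cref{eq:def-Cq},
\begin{equation}
\|C^{(q)}\|_\infty\le 1+\|R^{(q)}\|_\infty\|(P^{(\infty)})^{-1}\|_\infty\le 1+\|R^{(q)}\|_2\,\|(P^{(\infty)})^{-1}\|_2 .
\end{equation}
By \cref{residual}, $\|R^{(q)}\|_2\le \Gamma e^{-q/\xi}$. Since $\chi\|S\|_2\|S^{-1}\|_2\ge1$, the definition \cref{def:gamma} also gives $\|(P^{(\infty)})^{-1}\|_2\le\Gamma$, so
\begin{equation}
\|C^{(q)}\|_\infty\le 1+\eta,\qquad \eta:=\Gamma^2 e^{-q/\xi}.
\end{equation}
A cleaner route would insert $(P^{(\infty)})^{-1}$ directly into the derivation of \cref{residual} in \cref{app:residual-bound}, avoiding the extra factor of $\Gamma$; either way the constant enters only logarithmically. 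We then have $p_{\rm succ}\ge(1+\eta)^{-2L'}\ge e^{-2L'\eta}\ge 1-2L'\eta$. Choosing $q=\xi\log(2\Gamma^2L/\delta)$, which has the form \cref{eq:q-choice-q-2} because $\log\Gamma^2=2\log\Gamma$, gives $2L'\eta\le 2L\eta=\delta$. If one keeps $e^{-2L'\eta}$ rather than its linearisation, the bound becomes $e^{-\delta}=1-\delta+O(\delta^2)$, matching \cref{eq:succ-unified}.

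The main obstacle is the operator-norm bound on $C^{(q)}$. The decay of \cref{residual} must translate into a bound on the spectral norm, not merely on the trace distance, and its constant must be polynomial in $\Gamma$ so that only $\log\Gamma$ enters $q$. The bound must also hold uniformly, including in the non-diagonalisable case. For that case I would use the variant of \cref{residual} in \cref{app:residual-bound}, which adds a polynomial-in-$q$ prefactor. That prefactor is absorbed by increasing $q$ by $O(\xi\log\log(L/\delta))$, which keeps the scaling \cref{eq:q-choice-q-2} up to lower-order terms.
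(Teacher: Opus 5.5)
Your proof is correct and takes the same route as the paper. Post-selection yields $\|C^{(q)}\|_\infty^{-L'}(C^{(q)})^{\otimes L'}\ket{P^{(\infty)}(L')}\propto\ket{P^{(q)}(L')}$, the state norm is handled through the isometry $V^{(q)}$ and \cref{eq:norm}, and the bound $\|C^{(q)}-\mathbb{I}\|_\infty\le\|R^{(q)}\|_2\|(P^{(\infty)})^{-1}\|$ with $q\sim\xi\log(L/\delta)$ makes $\|C^{(q)}\|_\infty^{2L'}=1+O(\delta)$. Your explicit $\Gamma^2$ bookkeeping simply exposes the extra factor $\|(P^{(\infty)})^{-1}\|$, which the paper silently absorbs into the big-$O$ of \cref{eq:q-choice-q-2}.
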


\begin{proof}
By definition of the correction map~\eqref{eq:def-Cq-imp}, the state prepared upon measuring all ancillas in $\ket{0}$ is proportional to $\ket{P^{(q)}(L')}$.
The post-selection probability is
\begin{equation}
  \label{eq:p-success}
  p_{\mathrm{succ}} =  \frac{1}{\lVert C^{(q)} \rVert_{\infty}^{2L^\prime}} \bigl\lVert \ket{{P^{(q)}}(L')} \bigr\rVert^2,
\end{equation}
that we wish to bound.
From~\cref{residual,def:gamma} and our choice of block size~\eqref{eq:q-choice-q-2}, the norm of the residual scales as
\begin{equation}
\|R^{(q)}\|_{2} = O\!\left(
    \frac{\delta}{L\, \lVert (P^{(\infty)})^{-1} \rVert_{\infty}}
    \right).
\end{equation}
In turn, from~\cref{eq:def-Cq}, we see that
\begin{align}\label{eq:spect_Cq}
\begin{split}
  \lVert C^{(q)} - \mathbb{I} \rVert_{\infty}
  & \leq \lVert R^{(q)}\rVert_{2} \lVert (P^{(\infty)})^{-1} \rVert_{\infty}
  \leq \frac{\delta}{2L} \leq \frac{\delta}{2L'}.
\end{split}
\end{align}
The first inequality combines $C^{(q)}-\mathbb{I}=R^{(q)}(P^{(\infty)})^{-1}$ with submultiplicativity of the spectral norm and $\|R^{(q)}\|_{\infty}\le\|R^{(q)}\|_{2}$.
Hence, the spectral norm of the correction map is bounded through $\lVert C^{(q)} \rVert_{\infty} \leq 1 + \frac{\delta}{2L'}$, and for small $\delta \ll 1$ we get that \(\lVert C^{(q)} \rVert_{\infty}^{2L^\prime} \leq 1 + \delta + O(\delta^2)\).
Finally, using that $\|\ket{{P^{(q)}}(L^\prime)}\| = \| \ket{{A}(L)}\|$ 
as they are related by an isometry~\eqref{eq:polar} and recalling that $\|\ket{A(L)}\|$ is exponentially close to 1~\eqref{eq:norm}, we recover~\cref{eq:succ-unified} up to a correction 
exponentially small in $L$.
\end{proof}
In essence, the probability of success of each individual correction is controlled by the norm of the residual $R^{(q)}$ that decays exponentially with $q$~\eqref{residual}. When compounded over $L'=O(L)$ blocks, it is sufficient to take a block size of the order of $\log(L)$ to guarantee an overall constant probability of success.

Note that for small $\delta \ll 1$, ~\cref{eq:spect_Cq} would also hold for $(C^{(q)})^{-1}$. That is, one can implement the inverse of the correction map through block-encoding with a similar probability of success (we will repeatedly use this fact in~\cref{sec:implementation-Vq}). Finally, we highlight that
the non-normal TI case is treated in~\cref{app:non-normal}.

\subsection{All-at-once preparation of TI MPS}\label{subsec:exact-protocol}
The following protocol, sketched in~\cref{fig:exact-prep}~(c), prepares a TI MPS exactly. It is named \emph{all-at-once} (and labelled as \aao) as all the corrections are applied in parallel.

\begin{protocol}[Exact RG-based TI MPS preparation (all-at-once)]
  \label{protocol:exact-prep}
\leavevmode
\begin{enumerate}
  \item \textbf{Choose a block size.}
        Pick \(q\) as per~\cref{eq:q-choice-q-2}, and define \(L' = L/q\).
        
  \item \textbf{Classical preprocessing.}
        Perform the polar decomposition \(A^{(q)} = V^{(q)}P^{(q)}\)~\eqref{eq:polar}, and obtain the RG fixed-point operator $P^{(\infty)}$ together with the correction map \(C^{(q)} := P^{(q)}(P^{(\infty)})^{-1}\)~\eqref{eq:def-Cq}.

  \item \textbf{Prepare the RG fixed-point state.}
        Prepare \(\ket{P^{(\infty)}(L')}\) via~\cref{eq:p_infinity} (normal case). 

  \item \textbf{Probabilistic correction.}
        Apply~\cref{procedure:prob-mapping} to map \(\ket{P^{(\infty)}(L')}\) onto \(\ket{P^{(q)}(L')}\).
        If the post-selection fails, restart from Step~3.

  \item \textbf{Implement \(V^{(q)}\).}
        Apply \(V^{(q)}\) in parallel to all \(L'\) renormalised blocks.
\end{enumerate}
\end{protocol}

\paragraph*{Circuit depth}
The depth of~\cref{protocol:rg-approx-prep} provided in \cref{eq:depth-costs-bg-overall} depends on an error~$\varepsilon$.
In contrast,~\cref{protocol:exact-prep} is exact ($\varepsilon=0$) but instead may fail (Step 4).
For comparison, we consider
its \emph{expected} depth: 
\begin{equation}
  \begin{split}
  D_{\aao}(L, q)
  =
   \frac{D_{\infty} + D_{C}}{p_{\mathrm{succ}}(L, q)} + D_V(q),
  \end{split}
  \label{eq:Daao}
\end{equation}
with the first term accounting for the necessity of preparing the RG fixed point state and applying the correction maps upon each failure.

According to~\cref{thm:constant-success}, a choice of $q  =  O(\xi\log(\frac{\Gamma L}{\delta}))$ yields
$p_{\mathrm{succ}} \ge 1 - \delta - O(\delta^2)$, such that
\begin{equation}
  \label{eq:cost_aao}
   D_{\aao}(L, \delta) = \frac{D_{\infty} + D_{C}}{1 - \delta - O(\delta^2)} + 
   D_V \left(
    O\bigl(\xi\log(\tfrac{\Gamma L}{\delta})\bigr)
   \right).
\end{equation}
Recall that both $D_{\infty}$~\eqref{eq:D_inf} and $D_C$~\eqref{eq:DC} are independent of the system size $L$.
Thus, for small $\delta \ll 1$, the scaling of~\cref{eq:cost_aao} in $L$ remains dominated by $D_V(O(\xi \log(L)))$ due to the implementation of the isometries.
That is, we can achieve exact MPS preparation with similar scaling as for the approximate preparation of~\cref{sec:bg-rg}.
In addition, one can quantify fluctuations in the depth of the circuits through their variance:
\begin{remark}
The variance of the depth of~\cref{protocol:exact-prep} is
\begin{equation}
  \Var\!\left[D_{\aao}\right] = \frac{1-p_{\mathrm{succ}}}{p_{\mathrm{succ}}^{2}}\,(D_{\infty}+D_{C})^{2}.
\end{equation}
As $D_V$ dominates the expected depth~\eqref{eq:cost_aao}, 
the realised circuit depth concentrates %
around its mean.
\end{remark}

Although the success probability of~\cref{protocol:exact-prep} is already constant in $L$, in practice it can be further improved through refinements developed in the following subsection.

\subsection{Improving the success probability}\label{subsec:improve_success}

With the basis of the exact MPS preparation established, we now detail several ways to increase the success probability of~\cref{protocol:exact-prep}.
While these improvements do not change the asymptotic scalings, in practice they can result in substantial circuit depth reductions. 
In~\cref{subsec:sandc}, we present a variant of the parallel implementation of the correction maps that avoids having to re-attempt the correction of all blocks upon failure. 
In~\cref{subsec:det_corr}, we detail how \emph{deterministic} preparation is achieved through amplitude amplification.
Although not the most competitive, it shows that exact and deterministic TI MPS preparation can be achieved with the same scaling as~\cref{eq:cost_aao}.
Finally, in~\cref{subsec:optimize-parameters}, we highlight scalable numerical optimisations that can readily decrease circuit depths in concrete problems.

\subsubsection{Split-and-concatenate}\label{subsec:sandc}

\begin{figure}[t]
  \centering
  \includegraphics[width=0.5\textwidth]{figs/sandc.pdf}
  \caption{Preparation of the renormalised state $\ket{P^{(q)}(L')}$. a) All-at-once: all the correction maps $C^{(q)}$ are applied in parallel.  b) Split-and-concatenate: the corrections are first performed within splits (a subset of $Q$ blocks, depicted here for $Q=4$). When all splits have been corrected, a final round of correction maps between splits is performed to concatenate the $L'/Q$ splits.
  Both prepare the same state. However, upon failure of one correction map, the whole procedure needs to be repeated for a) while only the preparation of the corresponding split is required for b).
  }
  \label{fig:sac}
\end{figure}

\cref{protocol:exact-prep} can be improved:
all the corrections are applied all-at-once and upon the failure of any single correction map, the whole circuit preparation has to be repeated from scratch (Step~3) with the same probability of failing.
Instead, it is beneficial to perform rounds of correction maps on smaller and independent blocks that can later be concatenated. Such an approach, dubbed the split-and-concatenate strategy (and labelled as \snc), is analysed in this section and compared to the all-at-once approach of~\cref{subsec:exact-protocol}.

As illustrated in~\cref{fig:sac}(b), the chain of length $L'$ is split into
$L_{\rm sp}=L' / Q$ disjoint segments (called a split), each consisting of $Q$ renormalised blocks.
First, the correction maps are applied only to the blocks belonging to the same split ($Q-1$ correction maps are applied per split).
Since the splits are unentangled, a failure propagates locally and only the affected split needs to be attempted again.
Once all the splits have been successfully corrected, neighbouring splits are merged by
applying correction maps at their edges (the concatenation step), yielding the state
$\ket{{P^{(q)}}(L')}$.

\paragraph*{Circuit depth}
Before performing the concatenation step, \emph{all} of the $L_{\rm sp}$ splits must have been successfully corrected. 
Given a block size $q$ and a split size $Q$, 
the success probability of a split being corrected in a single round is  
$p_{\rm sp} = (1 / \lVert C^{(q)} \rVert_{\infty}^2)^{Q-1}$.
Let $G_{i}$ be the number of correction rounds needed for the $i$-th split to succeed for the first time. 
Then $G_{1},\dots,G_{L_{\rm sp}}$ are IID random variables following a geometric distribution with parameter $p_{\rm sp}$.
Since we can attempt to correct all splits in parallel at each round, the total number of rounds needed before concatenation is 
$
  G^{\text{max}}_{L_{\rm sp}} = \max\{G_{1},\dots,G_{L_{\rm sp}}\}.
$
The final concatenation succeeds with probability
\begin{equation}
\label{eq:p-success-cat}
    p_{\rm cat}=(1/\lVert C^{(q)}\rVert_\infty^2)^{L_{\rm sp}}.
\end{equation}
Thus, the expected depth is given by
\begin{equation}
  \label{eq:Dsc}
  \begin{split}
  D_{\snc}(q,Q)
    =
    \frac{(D_{\infty} + D_{C})\,\mathbb{E}[G^{\text{max}}_{L_{\rm sp}}] + D_{C}}{p_{\rm cat}}
    + D_V(q).
  \end{split}
\end{equation}
It accounts for both the probability of failures during the split part 
(through the expectation value $\mathbb{E}[G^{\text{max}}_{L_{\rm sp}}]$) and during the concatenation part
(through the inverse success probability $1/p_{\rm cat}$). See~\cref{app:sac} for further details.

\subsubsection{Deterministic preparation}\label{subsec:det_corr}
So far, we limited ourselves to the implementation of the correction map through
post-selection. Alternatively, one could coherently amplify the probability of preparing
the desired state by means of amplitude amplification~\cite{grover1996, brassard2002}.

Consider a unitary $U$, acting on a system $S$ and an ancillary register $A$
with $n_A$ qubits, that satisfies
\begin{equation}
  \label{eq:prepare-correct}
  \begin{split}
    U\,\ket{0}_A\ket{0}_{S}
    =\sin\theta
    \ket{0}_A\ket{\Psi}_{S}
    +
    \cos\theta
    \ket{\Psi_{\perp}},
  \end{split}
\end{equation}
with $\bra{\Psi_{\perp}} (\ket{0}_A\ket{\cdot}_S) = 0$.
Exact amplitude amplification~\cite{hoyer2000} synthesises a unitary
$\tilde{U}$ such that
$\tilde{U}\,\ket{0}_A\ket{0}_{S} = \ket{0}_A\ket{\Psi}_S$. This is achieved using $O(1/|\sin(\theta)|)$ \emph{generalised reflections} of the form $e^{-i\alpha \ket{r}\bra{r}}$, for an appropriately chosen angle $\alpha$,
around $\ket{r}=\ket{\Psi}_S$ and $\ket{r} = \ket{0}_A$.
In particular, for $\sin^2{\theta} > 1/2$, the resulting $\tilde{U}$ requires a total of three calls to $U$ or $U^\dag$: one $U$
and one $U^\dag$ to realise the generalised reflection around $\ket{\Psi}_S$ (through $U\,(\mathbb{I} \otimes e^{-i\alpha\ket{0}\bra{0}_S} )\,U^\dag$) and an additional call to $U$ for the initial state preparation. 
Further note that a reflection $e^{-i\alpha \ket{0}\bra{0}}$ over a register of $n$ qubits can be
implemented in depth 
$O(\log(n_A))$ provided another $O(n_A)$
ancillas and all-to-all circuit connectivity~\cite{claudon2024}.

In our setting, $U$ is the unitary preparing the fixed-point state followed by
the block-encoding of the $L'$ correction maps (thus $n_A = L'$),
$\ket{\Psi} = \ket{P^{(q)}(L')}$,
and the angle $\theta$ is related to the success probability through
$p_{\mathrm{succ}} = \sin(\theta)^2$.
It follows that
\begin{theorem}[Deterministic preparation of $\ket{{P^{(q)}}(L')}$]
\label{thm:det-aa}
A choice of block size
\begin{equation}
  \label{eq:q-choice-unified}
  q = O\!\bigl(\xi \log(\Gamma L)\bigr),
\end{equation}
guarantees $p_{\mathrm{succ}} > 1/2$ in~\cref{eq:p-success}.
In turn, exact amplitude amplification~\cite{hoyer2000} requires circuit depth
\begin{equation}
  3(D_{\infty} + D_{C}) +
  D_V\left(
    O\bigl(\xi\log(\Gamma L)\bigr)
  \right)
  + O(\log L),
\end{equation}
to prepare exactly $\ket{P^{(q)}(L')}$.
\end{theorem}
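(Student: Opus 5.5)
The plan has two parts: first fix $q$ so that $p_{\mathrm{succ}}>1/2$, then count the depth of exact amplitude amplification at that success probability.

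\textbf{Choice of $q$.} I would reuse the argument of \cref{thm:constant-success} with the constant $\delta$ fixed once and for all, say $\delta=1/4$. By \cref{residual,def:gamma}, taking $q=c\,\xi\log(\Gamma L)$ with a suitable absolute constant $c$ gives
\[
\|R^{(q)}\|_2\,\|(P^{(\infty)})^{-1}\|_\infty\le \delta/(2L).
\]
Then \cref{eq:spect_Cq} gives $\|C^{(q)}\|_\infty\le 1+\delta/(2L')$, so
\[
\|C^{(q)}\|_\infty^{2L'}\le e^{\delta}<4/3 .
\]
Substituting into \cref{eq:p-success}, and using $\|\ket{P^{(q)}(L')}\|^2=\|\ket{A(L)}\|^2=1-O(e^{-L/\xi})$ from \cref{eq:norm}, yields $p_{\mathrm{succ}}\ge e^{-\delta}(1-O(e^{-L/\xi}))>1/2$ for all $L$ beyond a constant threshold. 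For the finitely many small $L$, I would enlarge $q$ by an additive constant, which is absorbed in the $O(\cdot)$. Since $\delta$ is constant, $\log(\Gamma L/\delta)=O(\log(\Gamma L))$, which is exactly \cref{eq:q-choice-unified}.

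\textbf{Amplitude amplification.} Let $U$ be the fixed-point preparation followed by the $L'$ parallel block-encodings $U_{C^{(q)}}$. It has depth $D_\infty+D_C$, and it produces the form \cref{eq:prepare-correct} with $\sin^2\theta=p_{\mathrm{succ}}>1/2$ and $\ket{\Psi}$ the normalised $\ket{P^{(q)}(L')}$. Exact amplitude amplification~\cite{hoyer2000} with $\sin^2\theta>1/2$ needs a single round with tuned phases. Hence $\tilde U$ uses $U$ and $U^\dagger$ three times in total, giving depth $3(D_\infty+D_C)$. It also uses one generalised reflection about $\ket{0}_A$ on $n_A=L'$ ancillas, costing $O(\log L')=O(\log L)$ with extra ancillas by~\cite{claudon2024}. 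The reflection about the initial system state $\ket{0}_S$ inside $U(\mathbb{I}\otimes e^{-i\alpha\ket{0}\bra{0}_S})U^\dagger$ is also a multi-controlled phase on $O(L'\log\chi)$ qubits, again costing $O(\log L)$. Finally, applying $V^{(q)}$ in parallel adds $D_V(q)$. Summing these contributions gives the claimed depth.

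\textbf{Main obstacle.} I expect the main obstacle to be checking that exact amplitude amplification really succeeds with one round at $\sin^2\theta>1/2$. In the H{\o}yer construction one picks phases $\alpha,\beta$ so that the two-dimensional rotation lands exactly on $\ket{0}_A\ket{\Psi}$. I would first verify this by direct solvability in the 2D invariant subspace. The condition amounts to $\sin^2\theta\ge 1/4$ for a single generalised Grover iterate, so $>1/2$ leaves a safety margin. A second point is that the bad component $\ket{\Psi_\perp}$ must stay inside the span of $\{U\ket{0}, \ket{0}_A\ket{\Psi}\}$, which holds by the standard two-reflection argument.
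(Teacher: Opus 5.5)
Your proposal is correct and follows the paper's own argument. You fix a constant $\delta$ in \cref{thm:constant-success} to get $p_{\mathrm{succ}}>1/2$; a single round of exact amplitude amplification then costs three calls to $U$ or $U^\dagger$, i.e.\ $3(D_\infty+D_C)$, plus $O(\log L)$ for the generalised reflections and $D_V(q)$ for the isometries. One detail: the phase conjugated by $U$ must act on the joint state $\ket{0}_A\ket{0}_S$, not as $\mathbb{I}_A\otimes e^{-i\alpha\ket{0}\bra{0}_S}$ as both you and the paper write, for your two-dimensional invariant-subspace argument to hold; it costs the same logarithmic depth.

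Two of your steps are slightly sharper than the paper. Your choice $\delta=1/4$ with the explicit bound $p_{\mathrm{succ}}\ge e^{-\delta}\bigl(1-O(e^{-L/\xi})\bigr)$ certifies $p_{\mathrm{succ}}>1/2$, whereas the paper's $\delta=1/2$ combined with the stated form $1-\delta-O(\delta^2)$ does not by itself do so. Your remark that one tuned round already suffices for $\sin^2\theta\ge 1/4$ is also correct and stronger than the paper's $1/2$ threshold.
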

\begin{proof}
This follows from applying exact amplitude amplification~\cite{hoyer2000} to the unitary $U$ described above, using $\delta = 1/2$ in~\cref{thm:constant-success} to ensure $p_{\mathrm{succ}} > 1/2$. The factor of $3$ in front of $(D_\infty + D_C)$ accounts for the three calls to $U$ (or $U^\dag$) required by exact amplitude amplification discussed above.
The remaining $O(\log L)$ term accounts for the generalised reflections.
\end{proof}

That is, for the sequential implementation of $V^{(q)}$~\eqref{eq:depth-costs-bg},
one can prepare exactly the TI MPS with the same scaling as in~\cref{eq:depth-costs-bg-overall}.
Similarly to the discussion on the number of ancillary qubits in
\cref{procedure:prob-mapping}, we emphasise that the deterministic preparation of
\cref{thm:det-aa} does \emph{not} increase the total number of qubits required, as the additional ancillas are reused later on.

\subsubsection{Further numerical optimisations}\label{subsec:optimize-parameters}

The previous scaling only provides asymptotic guidance for the choice of the block size. 
In practice, for a given target MPS, we instead estimate the circuit depth for each protocol and numerically optimise over the free blocking parameters.
For the all-at-once protocol, this amounts to minimising the expected depth in \cref{eq:Daao} over $q$.
For the split-and-concatenate protocol one instead optimises \cref{eq:Dsc}, %
over the pair of parameters $(q, Q)$. Both optimisations will be used when reporting circuit depths in~\cref{sec:numerics} with additional details about the depth estimation provided in~\cref{app:depth-estimation}.

Beyond these optimisations, the gauge transformation $A\mapsto G^{-1}AG$ leaves the MPS~\eqref{eq:MPS-def} invariant but can affect the success probability of the correction step. This freedom has recently been used to reduce the gate count to implement $V^{(q)}$~\cite{scheer2025}. However, we do not optimise over the gauge in our numerical experiments of~\cref{sec:numerics}, since this freedom is expected to become most relevant for hardware-specific implementations where the circuit must be compiled to a specific gate set and connectivity.

In summary, the techniques presented in this subsection offer complementary
strategies for improving the practical efficiency of the exact preparation
protocol. The overall depth, however,
remains governed by the cost of implementing the isometry $V^{(q)}$, for which
we propose several novel constructions in the next section.

\section{Implementation of $V^{(q)}$}
\label{sec:implementation-Vq}
In the previous section, we presented exact preparation algorithms for the
renormalised state $\ket{{P^{(q)}}(L')}$.
To obtain the target state
$\ket{{A^{(q)}}(L')} = \ket{A(L)}$,
we need to implement the isometry $V^{(q)} = A^{(q)}(P^{(q)})^{-1}$, depicted as
\begin{equation}
  \label{eq:Vq}
  V^{(q)} = 
  \raisebox{-0.4\height}{\hspace{1em}\includegraphics[width=0.28\textwidth]{figs/Vq.pdf}\\[6pt] },
\end{equation}
to each renormalised block.
In this section, we present several 
circuit implementations of the isometry
$V^{(q)}$, for which the depths were reported in~\cref{tab:Vq-depth-summary}. 
In~\cref{subsec:vq_seq}, we review one approach introduced in \citet{malz2024} that will serve us as a starting point. 
In~\cref{subsec:SEQ_OAAI} and~\cref{subsec:ren-OAAI}, we present improvements in the scalings of the circuit depths with respect to the bond dimension $\chi$. As was the case in the previous section, these rely on block-encoding of the correction maps (and their inverses), and can be further refined through amplitude amplification. 
For that purpose, we introduce an exact \emph{oblivious amplitude amplification for isometries} (OAAI), stated in \cref{thm:OAAI}, which was also independently developed in~\cite{styliaris_2025}. 
Alternatively, in~\cref{subsec:quasi-prob}, we consider a quasi-probabilistic (incoherent) implementation of the correction maps, suitable when aiming at estimating expectation values of observables, that further reduces this $\chi$-dependence.

\subsection{Sequential approach}\label{subsec:vq_seq}
We begin by recalling one circuit implementation introduced
in \citet{malz2024}.
The key idea is to define a new local tensor \(
  A^{\prime} := A \otimes \mathbb{I}_{\chi}\)
and reinterpret $V^{(q)}$ given in~\cref{eq:Vq} as %
\begin{equation}
  \label{eq:Vq-chain-open}
  V^{(q)} = 
  \raisebox{-0.35\height}{\hspace{1em}\includegraphics[width=0.32\textwidth]{figs/Vq_chain.pdf}\\[6pt] }. 
\end{equation}
Since $A^{\prime}$ has effective bond dimension $\chi^{2}$, the resulting
representation can be viewed as an open-boundary MPS of length~$q$ with bond dimension
$\chi^{2}$.
By repeatedly applying singular value decompositions, this MPS can be
brought into \emph{left-canonical form}~\cite{schollwock2011}.
As a result, the isometry $V^{(q)}$ is now decomposed as %
\begin{equation}
  \label{eq:Vq-chain-cano}
  V^{(q)} = 
  \raisebox{-0.1\height}{\hspace{1em}\includegraphics[width=0.26\textwidth]{figs/Vq_chain_cano_double.pdf}\\[6pt] },
\end{equation}
where each $V_k$ is itself an isometry 
and thus can be embedded into a unitary operator.
Given that the largest isometries $V_k$ have dimension $(d\chi^{2}) \times \chi^{2}$, their unitary embedding
requires depth $O(\chi^{4})$.
In turn, $V^{(q)}$ can be implemented by \emph{sequentially} applying $q$ of such
unitaries resulting in an overall circuit depth
scaling as 
\begin{equation}\label{eq:dv_seq}
    D_{V}^{\rm seq}(q) = O(q \chi^{4}),
\end{equation}
that was reported earlier in~\cref{eq:depth-costs-bg}.
While straightforward, this construction is relatively costly in $q$ due to the $\chi^4$ prefactor. In the following, we present novel constructions improving upon this scaling.

\subsection{Sequential approach with oblivious amplitude amplification}
\label{subsec:SEQ_OAAI}
The reason why the sequential approach incurs a depth scaling as
$O(q\chi^{4})$ is that the inverse operator $\bigl(P^{(q)}\bigr)^{-1}$
acts on both ends of the chain of tensors $A$ in~\cref{eq:Vq},
which requires treating the chain as an MPS with an effective bond dimension~$\chi^{2}$ rather than $\chi$.
To circumvent this issue, we exploit the decomposition
\begin{equation}
  \label{eq:Pq-inv-decomp}
  \begin{split}
  (P^{(q)})^{-1}
  &=
  (P^{(\infty)})^{-1} (C^{(q)})^{-1} \\
  &= 
  \left( \sigma^{-1}\otimes \mathbb{I}_{\chi}\right) (C^{(q)})^{-1},
  \end{split}
\end{equation}
where we used $P^{(\infty)} = \sigma \otimes \mathbb{I}_{\chi}$ with $\sigma \succ 0$~\eqref{eq:pinf_product} for normal MPS (see~\cref{app:non-normal} for the non-normal case).
Using this factorisation, the isometry becomes

\begin{equation}
  \label{eq:Vq_decomp_chain}
  V^{(q)} = 
  \raisebox{-0.4\height}{\hspace{1em}\includegraphics[width=0.3\textwidth]{figs/Vq_decomp_chain.pdf}\\[6pt] }.
\end{equation}
Let us block the first $q_h$ sites on the left
and decompose the resulting tensor $A^{(q_h)}$
using~\cref{eq:polar,eq:def-Cq-imp}. Noting that the 
$\sigma$ appearing in this decomposition cancels out $\sigma^{-1}$, we obtain the isometry recast in a form amenable to implementation:
\begin{equation}
  \label{eq:Vq_decomp_chain_2}
  V^{(q)} = 
  \raisebox{-0.4\height}{\hspace{1em}\includegraphics[width=0.3\textwidth]{figs/Vq_decomp_chain2.pdf}\\[6pt] }.
\end{equation}

First, we apply the inverse correction map $(C^{(q)})^{-1}$.
Then, from its right virtual leg, we construct an MPS of length $(q-q_h)$ and bond dimension $\chi$. Finally, we implement the correction map $C^{(q_h)}$, followed by the isometry $V^{(q_h)}$.
Both the inverse correction map $(C^{(q)})^{-1}$ and the correction map $C^{(q_h)}$ are block-encoded (\cref{subsec:prob-correction}), which requires a circuit depth $O(\chi^4)$~\eqref{eq:DC}. 
Construction of the MPS is carried out in circuit depth $O((q-q_h)\chi^{2})$,
while $V^{(q_h)}$ is realised in circuit depth $O(q_h\chi^{4})$ using the sequential approach of~\cref{subsec:vq_seq}.
Hence, the circuit depth entailed by~\cref{eq:Vq_decomp_chain_2} scales as
\begin{equation}\label{eq:vq_seq_oaai_partial}
  O(q\chi^{2}) + O(q_h\chi^{4}).
\end{equation}
However, this does not account for the potential failure of the post-selection of the block-encodings $(C^{(q)})^{-1}$ and $C^{(q_h)}$.
We now show that the post-selection can be made deterministic, for each individual implementation of $V^{(q)}$, retaining the scaling of~\cref{eq:vq_seq_oaai_partial} while simultaneously taking $q_h$ to be independent of $q$.

Central to the deterministic post-selection is an exact oblivious amplitude amplification for isometries (\emph{exact OAAI}, proven in~\cref{app:proofs_OAAI}) that generalises results of Ref.~\cite{Berry2014} to isometries. A similar construction was independently introduced in~\cite{styliaris_2025} in a different context.
\begin{lemma}[Exact oblivious amplitude amplification for isometries (exact OAAI)]
\label{thm:OAAI}
Let $U$ be a unitary and $V$ be an isometry ($V^\dag V=\mathbb{I}$), both acting on a system $S$ and ancillary register $A$, the latter with $n_A$ qubits, such that 
\begin{equation}
  \label{eq:OAAI-assump}
U\bigl(\ket{0}_{A}\ket{\Psi}_{S}\bigr)
=
\sin(\theta)\ket{0}_{A'}V\ket{\Psi}_{S}
+\cos(\theta)\ket{\Psi_\perp}
\end{equation}
holds for any $\ket{\Psi}_S$.
Here, $(\bra{0}_{A'}\otimes I_{S})\ket{\Psi_\perp}=0$ and $A'$ is the output ancillary register with $n_{A'} \leq n_{A}$.%
Then, one can synthesise a unitary $\tilde{U}$ such that
\begin{equation}
\tilde{U}\bigl(\ket{0}_{A}\ket{\psi}_{S}\bigr)
=
\ket{0}_{A'}V\ket{\psi}_{S},
\end{equation}
using $r=O(1/|\sin(\theta)|)$ amplification rounds, each consisting of one call to $U^\dag$, two generalised reflections around $\ket{0}\bra{0}_A$ and $\ket{0}\bra{0}_{A'}$, and one additional call to $U$.
Including the initial application of $U$, this gives $2r+1$ calls to $U$ or $U^\dag$ and $2r$ generalised reflections.
Whenever $\sin^2{\theta}\ge 1/2$, one round is sufficient.
\end{lemma}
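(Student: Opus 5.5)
The plan is to reduce to the standard two-dimensional rotation picture of oblivious amplitude amplification \cite{Berry2014}, with the input and output ancilla registers allowed to differ. Write the total Hilbert space as $A\otimes S$ on input and, since $U$ is unitary on the full register, regard the output as $A'\otimes S'$ with $\dim(A'\otimes S')=\dim(A\otimes S)$ ($S'$ is the enlarged system on which $V$ lands). Define the projectors $\Pi_{\rm in}=\ket{0}\bra{0}_A\otimes\mathbb{I}_S$ and $\Pi_{\rm out}=\ket{0}\bra{0}_{A'}\otimes\mathbb{I}_{S'}$. The hypothesis \eqref{eq:OAAI-assump}, holding for \emph{every} $\ket{\Psi}_S$, says exactly that
\begin{equation}
\Pi_{\rm out}\,U\,\Pi_{\rm in} = \sin(\theta)\,\bigl(\ket{0}_{A'}\bra{0}_A\otimes V\bigr),
\end{equation}
i.e.\ $U$ is a block-encoding of the isometry $\sin(\theta) V$ between the two flagged subspaces. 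The key step is to show that $\Pi_{\rm in}U^\dag\Pi_{\rm out}U\Pi_{\rm in}=\sin^2(\theta)\Pi_{\rm in}$, which follows from $V^\dag V=\mathbb{I}$. This is where the isometry (rather than a general matrix) is used: all singular values of the encoded block equal $\sin\theta$, independently of $\ket{\Psi}$.

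From this I would invoke Jordan's lemma / the qubitisation decomposition. For each $\ket{\Psi}_S$, the vectors $\ket{0}_A\ket{\Psi}$ and $U^\dag\ket{0}_{A'}V\ket{\Psi}$ span a two-dimensional subspace invariant under $U^\dag(\text{reflection about }\Pi_{\rm out})U$ and the reflection about $\Pi_{\rm in}$. Similarly, $\ket{0}_{A'}V\ket{\Psi}$ and $\ket{\Psi_\perp}$ span the image subspace. Because the principal angle $\theta$ is the same for all $\ket{\Psi}$, the same sequence of operations acts identically on every such subspace. Hence it acts linearly and obliviously on the unknown input. Each amplification round $U\,e^{-i\alpha_j\Pi_{\rm in}}U^\dag e^{-i\beta_j\Pi_{\rm out}}$ acts inside the output-side two-dimensional space as a product of generalised reflections. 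Its count matches the statement: one $U^\dag$, two generalised reflections and one $U$ per round, plus the initial $U$, for $2r+1$ calls.

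The remaining task is the exact version. Ordinary reflections ($\alpha=\beta=\pi$) rotate by $2\theta$ per round and generically overshoot. Following \cite{hoyer2000} (as used for \cref{thm:det-aa}), I would instead pick $r=\lceil \pi/(4\theta)-1/2\rceil=O(1/|\sin\theta|)$ and choose phases $\alpha,\beta$ so that the final amplitude on $\ket{0}_{A'}V\ket{\Psi}$ has modulus exactly one. Equivalently, I would use QSVT with a degree-$(2r+1)$ odd phase polynomial $p$ satisfying $p(\sin\theta)=1$ (up to a global phase absorbed into the last reflection). Since the single singular value is $\sin\theta$, only one interpolation constraint must be met, which is always feasible once $(2r+1)\theta\ge\pi/2$. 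For $\sin^2\theta\ge1/2$, i.e.\ $\theta\ge\pi/4$, one round ($3\theta\ge\pi/2$) suffices. An explicit solution of the single phase equation for the two angles in the one-round case can be written down, as in the case $\sin^2\theta>1/2$ discussed before \cref{thm:det-aa}.

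I expect the main obstacle to be the bookkeeping with $n_{A'}\le n_A$: the reflections $e^{-i\alpha\Pi_{\rm in}}$ and $e^{-i\beta\Pi_{\rm out}}$ act on different factorisations of the same space. One must check that $\ket{\Psi_\perp}$, and the image of $\Pi_{\rm out}$'s complement under $U^\dag$, stay in the invariant two-dimensional subspaces. This follows from the identity above, together with the fact that the orthogonal component's norm $\cos\theta$ is also $\Psi$-independent. Once that is verified, the exactness and oblivious linearity follow verbatim from the unitary-case argument.
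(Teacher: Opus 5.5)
Your proposal is correct and follows essentially the same route as the paper: two-dimensional invariant subspaces spanned by $\ket{0}_A\ket{\Psi}$ and its orthogonal partner, with the key identity $\Pi_{\rm in}U^\dag\Pi_{\rm out}U\Pi_{\rm in}=\sin^2\theta\,\Pi_{\rm in}$ coming from $V^\dag V=\mathbb{I}$. The paper obtains this identity as $\mathcal{Q}=\sin^2\theta\,\mathbb{I}$ via a quadratic-form argument, uses it to show $\Pi_{\rm in}\ket{\Psi^\perp}=0$, and achieves exactness with ordinary ($\phi=\varphi=\pi$) rounds followed by a single tuned final round, which is one particular instance of the phase choice you describe.
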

By employing exact OAAI, the success probability of implementing $V^{(q)}$ can be boosted to unity without any prior knowledge of (i.e. oblivious to) the input state on which $V^{(q)}$ acts. 
We stress that exact OAAI is applicable only when the target operation is an isometry:
While it cannot be applied to the individual block-encoded corrections, it can be applied to the full circuit implementing $V^{(q)}$ that encompasses the block-encoding of both $(C^{(q)})^{-1}$ and $C^{(q_h)}$. 
With this result, it remains to identify adequate choices of the blocking sizes $q$ and $q_h$.

Recall
that the success probability of a block-encoded correction map scales as the inverse of its squared spectral norm. With two corrections involved, we therefore need to bound $\|C^{(q_h)}\|_{\infty}\,\|(C^{(q)})^{-1}\|_{\infty}$. For $C^{(q_h)}$ we have
\begin{align}
\begin{split}
  \lVert C^{(q_h)} - \mathbb{I} \rVert_{\infty}
  &\leq \lVert R^{(q_h)}\rVert_{2} \lVert (P^{(\infty)})^{-1} \rVert_{\infty} \\
  &\leq \Gamma\,\lVert (P^{(\infty)})^{-1} \rVert_{\infty}\, e^{-q_h/\xi},
\end{split}
\end{align}
akin to~\cref{eq:spect_Cq}.
It follows that a choice of
\begin{equation}\label{eq:qh-choice}
q_h=\Omega(\xi\log(\Gamma/\tilde \delta)),
\end{equation}
for a small constant $\tilde \delta$, yields $1/\| C^{(q_h)}\|^2_{\infty}\geq 1-\tilde \delta - O(\tilde \delta^2)$. Given that $q \gg q_h$, we do not need to worry about the contribution of $\|(C^{(q)})^{-1}\|_{\infty}$ in our scaling analysis, such that~\cref{eq:qh-choice} is sufficient to guarantee an overall probability of post-selection greater than a half. We stress that~\cref{eq:qh-choice} does not scale with $L$, in contrast to~\cref{eq:q-choice-q-2}, as we only need to account for the post-selection of a single block-encoding.
This results in the following theorem.
\begin{theorem}[Sequential implementation of $V^{(q)}$ with exact OAAI]
\label{thm:Vq-OAAI}
A choice of $q$ as per~\cref{eq:q-choice-unified} and of $q_h$ as per~\cref{eq:qh-choice} 
guarantees the implementation of the isometry $V^{(q)}$ through~\eqref{eq:Vq_decomp_chain_2} with success probability at least $1/2$.
By applying exact OAAI (\cref{thm:OAAI}), the implementation of $V^{(q)}$ can be made fully deterministic.
As per~\cref{eq:vq_seq_oaai_partial}, the resulting circuit has depth
\begin{equation}
  \label{eq:depth-Vq-OAAI-seq}
  D_V^{\seqAA}= O(q\,\chi^{2}) + O(\chi^{4}\xi\log(\Gamma)),
\end{equation}
where the first contribution corresponds to the construction of the $(q-q_h)$-site MPS with bond dimension~$\chi$,
and the second to the sequential implementation of $V^{(q_h)}$.
\end{theorem}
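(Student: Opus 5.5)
The proof has three parts: an algebraic identity, a bound on the success probability, and a depth count that uses exact OAAI.

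\emph{Correctness.} I would first check that the circuit \eqref{eq:Vq_decomp_chain_2}, after post-selection, implements $V^{(q)}$ up to a positive scalar. Start from $V^{(q)}=A^{(q)}(P^{(q)})^{-1}$. Substitute \eqref{eq:Pq-inv-decomp}, so that $(P^{(q)})^{-1}=(\sigma^{-1}\otimes\mathbb{I}_\chi)(C^{(q)})^{-1}$. Split $A^{(q)}$ into the first $q_h$ sites and the remaining $q-q_h$ sites, and write $A^{(q_h)}=V^{(q_h)}C^{(q_h)}(\sigma\otimes\mathbb{I}_\chi)$ using \eqref{eq:polar} and \eqref{eq:def-Cq-imp}. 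One must verify that the $\sigma$ factor acts on the same virtual leg as $\sigma^{-1}$, namely the left boundary leg that is threaded through to the right end, so that the two cancel. This index bookkeeping is the one point where I would reason diagrammatically.

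Once the cancellation holds, the circuit has four stages. First apply the block-encoding of $(C^{(q)})^{-1}$. Then apply the sequential bond-dimension-$\chi$ MPS of length $q-q_h$; it is built from $A$ alone and can be put in left-canonical form, so each tensor is an isometry of size $d\chi\times\chi$ with depth $O(\chi^2)$. Then apply the block-encoding of $C^{(q_h)}$, and finally $V^{(q_h)}$ via the sequential construction of \cref{subsec:vq_seq}. Post-selecting both ancillas on $\ket{0}$ therefore realises $V^{(q)}/(\|C^{(q_h)}\|_\infty\|(C^{(q)})^{-1}\|_\infty)$.

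\emph{Success probability.} Because the implemented operator is proportional to an isometry, the success probability is independent of the input state and equals
\[
\bigl(\|C^{(q_h)}\|_\infty\|(C^{(q)})^{-1}\|_\infty\bigr)^{-2}.
\]
For the first factor, the choice \eqref{eq:qh-choice} gives $\|C^{(q_h)}-\mathbb{I}\|_\infty\le\tilde\delta/2$. For the second, \eqref{eq:q-choice-unified} together with the argument of \cref{thm:constant-success} gives $\|C^{(q)}-\mathbb{I}\|_\infty\le \delta/(2L)$ with $\delta=1/2$. A Neumann series then bounds $\|(C^{(q)})^{-1}\|_\infty\le 1/(1-\delta/(2L))$. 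Multiplying the two bounds, a small enough constant $\tilde\delta$ gives success probability at least $1/2$.

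Normalisation needs care. If the MPS is in canonical form with $\|A^{(q)}\|$ normalised, the overall scaling is fixed by the isometry property: $\sin\theta$ equals the inverse product of the two spectral norms times the exact relation. Exactness should therefore follow without any approximation.

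\emph{Determinism and depth.} Let $U$ be the whole unitary circuit with ancillas $A$, namely the two block-encoding qubits plus the physical qubits of the output. It satisfies the hypothesis \eqref{eq:OAAI-assump} of \cref{thm:OAAI}, with $V=V^{(q)}$ and $\sin^2\theta\ge 1/2$. The lemma then gives a deterministic implementation using one round, i.e.\ three calls to $U$ or $U^\dagger$ and two generalised reflections on $O(1)$ flag qubits. A subtlety is that the reflection about $\ket{0}_{A'}$ should act only on the flag qubits. The lemma allows this because the output register is smaller than the input register.

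Depth adds up as follows, all within a constant factor of three:
\begin{itemize}
\item $O(\chi^4)$ for each block-encoding;
\item $O((q-q_h)\chi^2)$ for the MPS;
\item $O(q_h\chi^4)$ for $V^{(q_h)}$, which is $O(\chi^4\xi\log\Gamma)$ after substituting $q_h$.
\end{itemize}
The block-encoding cost is absorbed into the last term, which yields \eqref{eq:depth-Vq-OAAI-seq}.
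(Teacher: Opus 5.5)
Your proof follows the paper's route: the cancellation of $\sigma$ against $\sigma^{-1}$ after blocking the first $q_h$ sites, the same four-stage circuit, the bound on $\|C^{(q_h)}\|_\infty$ from \cref{eq:qh-choice}, one round of exact OAAI, and the depth count $O(q\chi^2)+O(q_h\chi^4)$. Your Neumann-series bound $\|(C^{(q)})^{-1}\|_\infty\le(1-\delta/(2L))^{-1}$ is more explicit than the paper, which only remarks that this factor is negligible because $q\gg q_h$. Your normalisation paragraph, by contrast, is unnecessary. $V^{(q)}$ is an exact isometry by construction, so $\sin\theta=1/(\|C^{(q_h)}\|_\infty\|(C^{(q)})^{-1}\|_\infty)$ holds exactly for every input.

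One claim in the OAAI step is wrong as written. You correctly place the fresh qubits that become the physical output in the input ancilla register. You then say both reflections act on the $O(1)$ flag qubits. Only the output-side reflection, about $\ket{0}_{A'}$, does. The input-side reflection must be about $\ket{0}$ on the whole input ancilla register, that is, the flags plus the fresh qubits.

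The reason is that $\ket{\Psi^\perp}$ in the proof of \cref{thm:OAAI} generically has weight on flags in $\ket{00}$ with fresh qubits not in $\ket{0}$. When you compute $U^\dagger\ket{\Phi}$, the operator $C^{(q_h)\dagger}C^{(q_h)}$ does not preserve the range of the $(q-q_h)$-site MPS isometry. The inverse of that isometry's unitary embedding therefore moves weight off the vacuum of those qubits. Since $\ket{\Psi}$ has them in $\ket{0}$, this weight survives in $\ket{\Psi^\perp}$. A flag-only reflection then does not fix $\ket{\Psi^\perp}$, so the two-dimensional invariant subspace is lost. The final round then no longer yields $\ket{0}_{A'}V^{(q)}\ket{\psi}$ exactly.

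The correct reflection is a multi-controlled phase on $O(q\log d)$ qubits. It has depth $O(q)$, or $O(\log q)$ with extra ancillas, and is absorbed into the $O(q\chi^2)$ term. Your final depth bound is therefore unaffected once this is fixed.
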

Compared to~\cref{eq:dv_seq}, the $\chi$-dependence associated to $q$ has been quadratically improved from $\chi^{4}$ to $\chi^{2}$, at the price of an extra $q$-independent $\chi^4$ term.

\subsection{Renormalisation approach with oblivious amplitude amplification}~\label{subsec:ren-OAAI}
We can further improve the previous scaling~\eqref{eq:depth-Vq-OAAI-seq}, by repeating the nested steps of blocking and decomposition, in a renormalisation-based fashion.
Concretely, we introduce a block size $q_2 < q$ and block the $q$ tensors $A$, in~\cref{eq:Vq_decomp_chain}, into $q/q_2$ tensors $A^{(q_2)}$. Further decomposing these blocks,~\cref{eq:Vq_decomp_chain} takes the form
\begin{equation}
  \label{eq:Vq_rg}
  V^{(q)} = 
  \raisebox{-0.4\height}{\hspace{1em}\includegraphics[width=0.22\textwidth]{figs/V_q_RG.pdf}\\[6pt] },
\end{equation}
which can be used for implementation. 
After implementation of $(C^{(q)})^{-1}$ and preparation of the product state $\ket{\sigma}^{\otimes (q/q_2 -1)}$, the additional correction maps $C^{(q_2)}$ followed by the isometries $V^{(q_2)}$ are applied (both in parallel).
This results in a circuit depth scaling as
\begin{equation}\label{eq:vq_ren_oaai_partial}
  O(\chi^{4}) + D_V(q_2),
\end{equation}
with $O(\chi^4)$ contributions from the block-encoding implementations of the $1+q/q_2$ correction maps and the depth $D_V(q_2)$ for implementing the isometries given a block size $q_2$ that is still to be fixed.

As before, the choice of $q_2$ is taken to guarantee an 
overall probability of success 
greater than a half. 
In particular, a choice of 
\begin{equation}\label{eq:q2-choice}
    q_2 = \Omega\!\bigl(\xi \log(\Gamma q / \delta)\bigr),
\end{equation}
ensures that this probability is bounded through
\begin{equation}
\begin{split}
    p_{\mathrm{succ}} &= \|C^{(q_2)}\|_{\infty}^{-2q/q_2} 
    \|(C^{(q)})^{-1}\|_{\infty}^{-2} \\
    &\geq 1 - \delta - O(\delta^2).
\end{split}
\end{equation}
This results in the following theorem.
\begin{theorem}[Renormalised implementation of $V^{(q)}$ with exact OAAI]
\label{thm:Vq-RG-OAAI}
A choice of $q$ as per~\cref{eq:q-choice-unified} and of $q_2$ as per~\cref{eq:q2-choice} 
guarantees the implementation of the isometry $V^{(q)}$ through~\cref{eq:Vq_rg} with success probability at least $1/2$.
By applying exact OAAI (\cref{thm:OAAI}), the implementation of $V^{(q)}$ can be made fully deterministic.
As per~\cref{eq:vq_ren_oaai_partial}, and performing the isometries $V^{(q_2)}$ through~\cref{thm:Vq-OAAI}, the resulting circuit has depth
\begin{equation}
  \label{eq:depth-Vq-OAAI-ren}
  D_V^{\renAA}= O(\chi^{2} \xi \log(\Gamma q)) + O(\chi^{4} \xi\log(\Gamma)).
\end{equation}
\end{theorem}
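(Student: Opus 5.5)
The plan has three parts. First verify that \cref{eq:Vq_rg} really equals $V^{(q)}$. Then bound the success probability before amplification. Finally apply exact OAAI and count depth, nesting \cref{thm:Vq-OAAI} for the inner isometries.

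\textbf{Correctness of the decomposition.} Start from \cref{eq:Vq_decomp_chain}, i.e.\ $V^{(q)} = A^{(q)}(\sigma^{-1}\otimes\mathbb{I}_\chi)(C^{(q)})^{-1}$. Write $A^{(q)}$ as a chain of $q/q_2$ blocked tensors $A^{(q_2)}$. Replace each block by $A^{(q_2)} = V^{(q_2)}C^{(q_2)}(\sigma\otimes\mathbb{I}_\chi)$, using \cref{eq:polar,eq:def-Cq-imp,eq:pinf_product}. The factors $\sigma$ sitting on the internal virtual bonds then contract pairwise into copies of $\ket{\sigma}$. The leftmost $\sigma$ is cancelled by the $\sigma^{-1}$ of \cref{eq:Pq-inv-decomp}, exactly as in the derivation of \cref{eq:Vq_decomp_chain_2}. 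What remains is the circuit of \cref{eq:Vq_rg}:
\begin{itemize}
\item apply $(C^{(q)})^{-1}$ to the input;
\item prepare $\ket{\sigma}^{\otimes(q/q_2-1)}$;
\item apply the corrections $C^{(q_2)}$ in parallel;
\item apply the isometries $V^{(q_2)}$ in parallel.
\end{itemize}
Each post-selected block-encoding contributes its operator divided by its spectral norm. So on success the circuit outputs $V^{(q)}\ket{\Psi}$ times the constant
\[
\bigl(\|(C^{(q)})^{-1}\|_\infty\,\|C^{(q_2)}\|_\infty^{q/q_2}\bigr)^{-1},
\]
independently of the input $\ket{\Psi}$. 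This input-independence is precisely the hypothesis \cref{eq:OAAI-assump} of \cref{thm:OAAI}.

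\textbf{Success probability.} This is the step I expect to be the only real work.
\begin{itemize}
\item For the inner corrections, use \cref{residual,def:gamma} as in \cref{eq:spect_Cq}: $\|C^{(q_2)}-\mathbb{I}\|_\infty \le \Gamma\|(P^{(\infty)})^{-1}\|_\infty e^{-q_2/\xi}$. Choosing $q_2=\Omega(\xi\log(\Gamma q/\delta))$ makes this at most $\delta q_2/(2q)$. Raising $\|C^{(q_2)}\|_\infty$ to the power $2q/q_2$ then gives at most $1+\delta+O(\delta^2)$.
\item For the outer inverse correction, a Neumann-series bound gives $\|(C^{(q)})^{-1}\|_\infty \le (1-\|C^{(q)}-\mathbb{I}\|_\infty)^{-1}$. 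Since $q=O(\xi\log(\Gamma L))$ as in \cref{eq:q-choice-unified}, with $q\ge q_2$, this factor is even closer to $1$ and can be absorbed into $\delta$.
\end{itemize}
Combining the two, $p_{\rm succ}\ge 1-\delta-O(\delta^2)\ge 1/2$ for a small constant $\delta$. A subtle point is that $\Gamma$ already contains $\|(P^{(\infty)})^{-1}\|$, so the constants inside the logarithm must be tracked.

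\textbf{Amplification and depth.} Since $p_{\rm succ}=\sin^2\theta\ge 1/2$, \cref{thm:OAAI} needs a single round, i.e.\ three calls to the circuit plus two reflections. The reflections act on $n_A=O(q/q_2)$ ancillas, so they cost at most $O(\log q)$ depth (or $O(q/q_2)$ without extra ancillas); this is dominated by the other terms. One call costs:
\begin{itemize}
\item $O(\chi^4)$ for the parallel block-encodings, from \cref{eq:DC};
\item $O(\chi^2)$ for the $\ket{\sigma}$ preparation, from \cref{eq:D_inf};
\item $D_V(q_2)$ for the inner isometries.
\end{itemize}
Implement each $V^{(q_2)}$ deterministically via \cref{thm:Vq-OAAI}, at depth $O(q_2\chi^2)+O(\chi^4\xi\log\Gamma)$. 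Substituting $q_2=O(\xi\log(\Gamma q))$ and multiplying by the constant factor $3$ yields \cref{eq:depth-Vq-OAAI-ren}.

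One more point must be checked: that nesting the inner OAAI inside the outer OAAI is legitimate. The inner OAAI returns an exact unitary implementation of $V^{(q_2)}$ with its ancillas reset to $\ket{0}$. The outer circuit is therefore again a unitary satisfying \cref{eq:OAAI-assump}, and calling its inverse simply costs the same depth.
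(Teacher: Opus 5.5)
Your proposal is correct and follows the paper's own route. The ingredients match one for one: the decomposition \cref{eq:Vq_rg} with one $(C^{(q)})^{-1}$ and $q/q_2$ inner corrections $C^{(q_2)}$; the input-independent success probability $\|C^{(q_2)}\|_\infty^{-2q/q_2}\|(C^{(q)})^{-1}\|_\infty^{-2}\ge 1-\delta-O(\delta^2)$ from the choice of $q_2$; a single round of exact OAAI; and \cref{thm:Vq-OAAI} for the inner $V^{(q_2)}$. Your Neumann-series and monotonicity argument for $(C^{(q)})^{-1}$ (using $q\ge q_2$), and your explicit checks of input-independence and of the legitimacy of nesting, are in fact more careful than the paper's text.

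The only slip is the parenthetical fallback on the reflections. Reflections of depth $O(q/q_2)$ would not be dominated by $O(\chi^{2}\xi\log(\Gamma q))$, since $q/q_2$ grows like $\log L/\log\log L$. Moreover, the input-ancilla reflection in \cref{thm:OAAI} acts on all $O(q)$ freshly initialised qubits, not just $O(q/q_2)$. So you genuinely need the $O(\log q)$-depth all-to-all reflections, as used in \cref{thm:det-aa}.
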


Exact OAAI (\cref{thm:OAAI}) requires $q/q_2$ ancillary qubits. However, noting that these ancillas can be reused during the implementation of the $V^{(q_2)}$, and following the argument made in~\cref{subsec:prob-correction}, we see that this does not increase the overall qubit requirement.
Compared to the sequential depth~\cref{eq:depth-Vq-OAAI-seq}, the dependence on $q$ is reduced from linear to logarithmic while retaining the $\chi^2$ prefactor.

\subsection{Quasi-probabilistic implementation approach}
\label{subsec:quasi-prob}
In the constructions presented so far, the implementation of the unitary matrices that block-encode the correction maps (or their inverses) requires circuit depths scaling as $O(\chi^{4})$~\eqref{eq:DC}. In terms of the bond dimension, these remain the largest contributions to our circuit depths.
In this section we show that if we allow incoherent preparation, this scaling can be improved through the use of \emph{quasi-probabilistic decomposition} (QPD).

Generally speaking, QPD techniques implement a superoperator (possibly non-physical, as is the case in quantum error
mitigation~\cite{temme2017,endo2018,cai2023,dai_koczor_2026}) through a probabilistic mixture of physical channels and re-weighting, at the cost of an increased sampling overhead.
For our purposes, the correction superoperator
$\mathcal{C}(\cdot) = C(\cdot)\,C^{\dag}$ is decomposed as
\begin{equation}\label{eq:qdp}
\mathcal{C} = \sum_i \gamma_i \mathcal{B}_i
\end{equation}
with $\gamma_i \in \mathbb{R}$ and physical channels $\mathcal{B}_i$ implementable with $O(1)$ circuit depth (we take tensor products of single-qubit operations~\cite{temme2017,endo2018}).
Let us define the $\ell^1$-norm $|\gamma|_1 = \sum_i |\gamma_i|$.
When measuring expectation values, a QPD implementation of $\mathcal{C}$ consists of applying one of the channels $\mathcal{B}_i$, sampled with probability $|\gamma_i|/|\gamma|_1$, and rescaling the measurement outcome by a factor ${\rm sign}(\gamma_i)|\gamma|_1$.
The latter incurs an increase in the variance, and thus in the required number of samples, by a factor $|\gamma|_1^2$ called the \emph{sampling overhead}. When $Q$ occurrences of the corrections are implemented through QPD, this overhead becomes $|\gamma|_1^{2Q}$
(see~\cref{app:incoherent} for further details about QPD).

Given that the distance of $C^{(q)}$ to the identity operator decays exponentially with $q$,
as we see in~\cref{residual,eq:spect_Cq}, 
we expect a similar scaling for the sampling overhead.
This is captured by the following theorem (proven in~\cref{app:incoherent}).
\begin{theorem}[Quasi-probabilistic decomposition of $C^{(q)}$]
\label{thm:Cq-qpd}
Let $C^{(q)}$ be the correction map for a block size~$q$~\eqref{eq:def-Cq-imp}, and $\mathcal{C}(\cdot) = C^{(q)}(\cdot)\,{C^{(q)}}^{\dag}$ the corresponding superoperator.
The QPD of $\mathcal{C}$ in a basis of
quantum channels implementable in $O(1)$ depth~\cite{temme2017,endo2018}, has an $\ell^1$-norm bounded through
\begin{equation}\label{eq:bound_l1}
  |\gamma|_1 = 1 + O\!\bigl(\Gamma\,\mathrm{poly}(\chi)\, e^{-q/\xi}\bigr).
\end{equation}
\end{theorem}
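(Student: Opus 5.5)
Write $C^{(q)} = \mathbb{I} + K$ with $K := R^{(q)}(P^{(\infty)})^{-1}$. As in \cref{eq:spect_Cq}, \cref{residual} gives $\|K\|_2 \le \|R^{(q)}\|_2\,\|(P^{(\infty)})^{-1}\|_\infty \le \Gamma\,\|(P^{(\infty)})^{-1}\|_\infty\, e^{-q/\xi}$. The correction superoperator then splits as
\begin{equation}
  \mathcal{C}(\rho) = \rho + K\rho + \rho K^\dag + K\rho K^\dag =: \mathrm{id}(\rho) + \mathcal{K}(\rho).
\end{equation}
The identity channel is a valid physical channel of depth zero and enters the QPD with coefficient $1$. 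It therefore suffices to show that the perturbation $\mathcal{K}$ admits a decomposition in the chosen $O(1)$-depth basis $\{\mathcal{B}_i\}$ whose $\ell^1$-norm is $O(\mathrm{poly}(\chi)\,\|K\|)$. The triangle inequality for $\ell^1$-norms then gives $|\gamma|_1 \le 1 + O(\Gamma\,\mathrm{poly}(\chi)\,e^{-q/\xi})$. The factor $\|(P^{(\infty)})^{-1}\|_\infty$ is a constant depending only on the tensor, so it can be absorbed into $\Gamma$ or the $O(\cdot)$.

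To decompose $\mathcal{K}$, I expand $K$ in the Pauli basis on the $n=\lceil 2\log_2\chi\rceil$ qubits of a renormalised block (embedding $\mathbb{C}^{\chi^2}$ if needed), $K = \sum_{P} k_P P$. For each Pauli, $|k_P| = |\tr(PK)|/2^n \le \|K\|_2/2^{n/2}$, so $\sum_P |k_P| \le 2^{n/2}\|K\|_2 = O(\chi\|K\|_2)$. The terms $K\rho + \rho K^\dag$ and $K\rho K^\dag$ then become sums of maps $\rho\mapsto P\rho P'^{\dag}$ and their Hermitian-symmetrised combinations. The linear part $\rho\mapsto K\rho+\rho K^\dag$ has coefficient norm $O(\chi\|K\|_2)$, and the quadratic part has norm $O(\chi^2\|K\|_2^2)$, which is subleading.

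Next, each Hermiticity-preserving map $\rho\mapsto P\rho P'^\dag + P'\rho P^\dag$ must be expressed in the standard universal basis of \cite{temme2017,endo2018}. That basis is built from tensor products of single-qubit operations: unitaries $I,X,Y,Z$, rotations $(I+iX)/\sqrt2$-type gates, and projective preparations. The single-qubit basis of 16 elements spans all single-qubit superoperators, and each single-qubit superoperator of unit norm has $\ell^1$-cost $O(1)$. The key observation is that $P\rho P'^\dag$ is a tensor product over qubits of single-qubit superoperators $\rho_j\mapsto \sigma_j\rho_j\sigma'_j$. Its decomposition is therefore the product of single-qubit decompositions, and its $\ell^1$ cost is $c^n = \mathrm{poly}(\chi)$ for a constant $c$, with $c^n=\chi^{2\log_2 c}$. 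Combining this with the Pauli coefficient bound gives $\ell^1(\mathcal{K}) = O(\mathrm{poly}(\chi)\,\|K\|_2)$, which establishes \cref{eq:bound_l1}.

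The main obstacle is this last step: controlling the $\ell^1$-cost of non-CPTP cross terms $\rho\mapsto P\rho P'^\dag + \mathrm{h.c.}$ in the $O(1)$-depth basis while keeping the blow-up polynomial in $\chi$ rather than, say, $4^{n}$ per term times $4^n$ terms. My first attempt would use the identity
\begin{equation}
  A\rho B^\dag + B\rho A^\dag = \tfrac12\big[(A+B)\rho(A+B)^\dag - (A-B)\rho(A-B)^\dag\big]
\end{equation}
together with the variant using $A\pm iB$. This reduces each cross term to differences of conjugations by operators that are single-qubit products up to normalisation, each with $O(1)$ per-qubit cost. Since the number of Pauli pairs is at most $16^n=\chi^8$, everything remains $\mathrm{poly}(\chi)$. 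A sharper exponent is not needed for the statement as written.
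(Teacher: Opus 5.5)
Your argument is correct in substance, but it takes a somewhat different route from the paper. The common ground is that both proofs split $\mathcal{C}=\mathcal{I}+\mathcal{D}$, use that the identity is itself a basis channel (coefficient $1$), and rely on the tensor-product structure of the basis together with $n=O(\log\chi)$ to turn any $c^{n}$ blow-up into $\mathrm{poly}(\chi)$.

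The paper never Pauli-expands the perturbation $\Delta_C=R^{(q)}(P^{(\infty)})^{-1}$ (your $K$). Instead it:
\begin{itemize}
\item works with the real Pauli transfer matrix (PTM) $T=\mathcal{I}+\mathcal{E}$ of the whole Hermiticity-preserving map;
\item bounds every entry by H\"older, $|[\mathcal{E}]_{\alpha,\beta}|\le 2\eta+\eta^2$ with $\eta\ge\|\Delta_C\|_\infty$;
\item inverts the synthesis matrix $B=b^{\otimes n_v}$, using multiplicativity of $\|\cdot\|_{1\to1}$ to get $|\gamma|_1\le 1+\|b^{-1}\|_{1\to1}^{n_v}\,16^{n_v}(2\eta+\eta^2)$.
\end{itemize}
That route stays in real coefficients throughout and needs no case analysis of cross terms, at the price of a crude $16^{n_v}$ count. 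Your operator-level expansion is more explicit and yields a smaller power of $\chi$. Each Pauli sandwich $\rho\mapsto P\rho P'$ has a PTM with exactly one unit-modulus entry per column, so it costs at most $(4\|b^{-1}\|_{1\to1})^{n}$. The linear part then carries only $\sum_P|k_P|=O(\chi\|K\|_2)$.

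Two steps need repair, neither fatal.

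First, $\sum_P|k_P|\le 2^{n/2}\|K\|_2$ does not follow from summing your per-Pauli bound, which gives only $2^{3n/2}\|K\|_2$. It does follow from Cauchy--Schwarz combined with Parseval, $\sum_P|k_P|^2=\|K\|_2^2/2^n$.

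Second, the polarisation detour is wrong as stated. For distinct multi-qubit Paulis, $P\pm P'$ (or $P\pm iP'$) is generally not a tensor product, so the cross terms do not reduce to conjugations by product operators. The detour is also unnecessary. The 16 basis channels have linearly independent real PTMs, so they span all single-qubit superoperators over $\mathbb{C}$. Each factor $\rho_j\mapsto\sigma_j\rho_j\sigma'_j$ therefore has a complex expansion of $O(1)$ $\ell^1$-cost, and $\rho\mapsto P\rho P'$ is expanded as the product of these. You then pair terms with their conjugate partners:
\begin{itemize}
\item $k_PP\rho$ with $\bar k_P\rho P$;
\item $k_P\bar k_{P'}P\rho P'$ with $\bar k_P k_{P'}P'\rho P$.
\end{itemize}
In each pair the two coefficient vectors are complex conjugates of each other, because $B^{-1}$ is real. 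Their sum is therefore real, with $\ell^1$-norm at most twice that of either one. Equivalently, the total is the unique real vector $B^{-1}\mathrm{vec}(T)$, and the triangle inequality bounds it as you intended.
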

That is, one can substitute each $O(\chi^4)$-depth circuit implementing the block-encodings of the correction maps, with randomly chosen constant-depth circuits, at the cost of an increased number of shots bounded through~\cref{eq:bound_l1}. Then, for a given number of QPDs to be employed per circuit, one can identify a block size $q$ that ensures constant sampling overhead. 

First, consider the circuit preparation of the renormalised state $\ket{P^{(q)}(L')}$ through~\cref{procedure:prob-mapping}.
Since we have a number $L'=O(L)$ of maps $C^{(q)}$ to be replaced through QPD, choosing
$q = O\!\left(\xi \log(\chi \Gamma L)\right)$
is enough to ensure $O(1)$ sampling overhead, and to decrease the circuit depth from $O(\chi^4)$ to $O(\chi^2)$, with the preparation of the fixed-point state~\eqref{eq:p_infinity} now becoming the dominant contribution.

Second, and as detailed in~\cref{app:qpd_vq}, the isometry $V^{(q)}$ in~\cref{eq:Vq_decomp_chain} admits an alternative implementation that preserves the bond dimension $\chi$ of the underlying MPS rather than the $\chi^2$ of~\cref{eq:Vq-chain-open}, via a Bell-state measurement.
While the latter introduces a post-selection probability of $1/\chi^2$, it can be made deterministic using exact OAAI (\cref{thm:OAAI}) at the cost of $O(\chi)$ calls to the corresponding unitary or its dagger. Considering the $L'=O(L)$ isometries to be implemented, this yields $O(\chi L)$ correction maps to be replaced by QPD.
Thus, a block size
\begin{equation}\label{eq:scaling-q-inc}
q = O\!\left(\xi \log(\chi \Gamma L)\right)
\end{equation}
ensures constant sampling overhead, and a circuit depth
\begin{equation}
    D^{\rm QPD}_V = O(q \chi^3).
\end{equation}

To conclude, in this section we have presented several implementations for the isometry $V^{(q)}$ 
improving on its depth scaling with respect to the bond dimension. These improvements are summarised in \cref{tab:Vq-depth-summary}.

\section{Non-translationally invariant MPS}
\label{sec:nonTI}
So far, we have limited ourselves to the preparation of TI MPS and broken down their preparations into two subroutines: an exact preparation of the renormalised state $\ket{P^{(q)}(L')}$ (\cref{sec:exact-prep}), and the implementation of the isometries $V^{(q)}$ (\cref{sec:implementation-Vq}).
In this section, we detail how both are generalised to non-TI MPS.
For the sake of exposition, we focus on the all-at-once preparation (\cref{subsec:exact-protocol}) and the sequential implementation of the isometry (\cref{subsec:vq_seq}), and leave details of the other constructions to the appendices (\cref{app:nonTI-other-Vq}).

For notational simplicity we consider the case of a non-TI MPS with open boundary conditions and uniform bond dimension $\chi$.
Such an MPS is defined as
\begin{equation}
  \ket{\tilde A(L)}
  =
  \raisebox{-0.25\height}{\hspace{1em}\includegraphics[width=0.3\textwidth]{figs/NTIMPS.pdf}\hspace{3pt},  \\[6pt] }
  \label{eq:nonTI_MPS_def}
\end{equation}
and is specified by a collection of site-dependent tensors
$\{ A^{[k]}\}_{k=1}^{L}$ with $A^{[k]}\in\mathbb{C}^{\chi\times d\times \chi}$.
Throughout, we use the same canonical convention as for the TI case: For all
$k$,
\begin{equation}
  \sum_{s=1}^{d} A^{[k],s\dag} A^{[k],s}=\mathbb{I}_{\chi}.
  \label{eq:left_canonical_nonTI}
\end{equation}

Unlike the TI case, we allow non-uniform blocking, with
$\vec q := (q_1,\ldots,q_{L'}) $
a list of $L'$ block sizes satisfying $\sum_{\ell=1}^{L'} q_{\ell} = L$.
For each block labelled by $\ell$, let $A_{\ell}^{(q_{\ell})}$ be the
corresponding blocked tensor~\eqref{eq:blocking}, with polar decomposition~\eqref{eq:polar}
\begin{equation}
    A_{\ell}^{(q_{\ell})} = V_{\ell}^{(q_{\ell})}\; P_{\ell}^{(q_{\ell})}.
  \label{eq:polar_nonTI_main}
\end{equation}
Let us further take $\sigma_{\ell} \succ 0$ to be some positive-definite matrix associated with the block $\ell$ 
and define the block-dependent \emph{fixed-point operator} to be
\begin{equation}
  P_{\ell}^{(\infty)} := \sigma_{\ell} \otimes \mathbb{I}_{\chi}.
  \label{eq:pinf_nonTI}
\end{equation}

To port results from the TI setting we wish to identify a $\sigma_\ell$, for each block, ensuring that $P_{\ell}^{(q_{\ell})}$ exponentially converges to $P_{\ell}^{(\infty)}$ as the block size $q_\ell$ increases, similar to~\cref{residual}.
With no particular structure, we do not have guarantees of their existence.
However when the tensors $A^{[k]}$ are IID we can leverage the theory of ergodic sequences of the associated channels~\cite{Movassagh}, to show (see~\cref{thm:ergodic-blockwise-fixedpoint}) that there exists a $\sigma_\ell$ such that
\begin{equation}
  \label{eq:Rq-upper-nonti_blocks_main}
  \|P^{(q_\ell)}_{\ell} - P^{(\infty)}_\ell\|_2 =: \|R^{(q_\ell)}_{\ell}\|_2
  \leq \Gamma_\ell \,e^{-q_\ell/\tilde{\xi}},
\end{equation}
for some constants (independent of $L$) $\Gamma_\ell$ and $\tilde{\xi}$, the latter also being independent of the block index $\ell$.
These results guarantee the existence of $\sigma_\ell$, $\Gamma_\ell$ and $\tilde{\xi}$ such that~\cref{eq:Rq-upper-nonti_blocks_main} holds, but in general, these quantities cannot be evaluated directly. 

With the existence of adequate fixed-point operators, we can define the non-TI RG fixed-point state as
\begin{equation}
  \ket{\tilde P^{(\infty)}(L')} := \bigotimes_{\ell=1}^{L'} \ket{\sigma_{\ell}},
  \label{eq:p_infinity_nonTI}
\end{equation}    
and the correction maps for each block through
\begin{equation}
  C_{\ell}^{(q_{\ell})} := P_{\ell}^{(q_{\ell})}(P_{\ell}^{(\infty)})^{-1}.
  \label{eq:def-Cq-nonTI}
\end{equation}
As before, each correction map is implemented via block-encoding, introducing a probability of failure.
Unlike the TI case, where a uniform block size $q = O(\xi\log L)$ suffices to achieve constant success probability~(\cref{subsec:exact-protocol}), the correction maps $ C_{\ell}^{(q_{\ell})}$, and the corresponding probabilities of success, now may vary from block to block, so one must choose block sizes that keep all local correction map norms (and thus the overall probability of failure) controlled simultaneously.
To account for this, define the block-size dependent local errors
\begin{equation}
  \delta_{\ell} := 2\bigl(\|C_{\ell}^{(q_{\ell})}\|_{\infty} - 1\bigr),
  \label{eq:local_budget_nonTI}
\end{equation}
together with the constraint that
\begin{equation}
  \sum_{\ell=1}^{L'} \delta_{\ell} \le \delta
  \label{eq:global_budget_nonTI}
\end{equation}
holds for some global error $\delta$. 
By the same argument as in \cref{subsec:exact-protocol} (and detailed in~\cref{proof_th6_app}), 
it follows from~\cref{eq:global_budget_nonTI} that the probability that all the corrections succeed simultaneously 
can be bounded by $p_{\mathrm{succ}}(L,\vec{q}) \ge 1 - \delta - O(\delta^2)$.
After successful correction, we apply the isometries $V_{\ell}^{(q_{\ell})}$
in parallel to each block $\ell$ to prepare the desired state $\ket{\tilde A(L)}$. This is summarised in the following.

\begin{protocol}[Exact RG-based non-TI MPS preparation (all-at-once)]
\label{protocol:nonTI-exact}
\leavevmode
\begin{enumerate}
\item \textbf{Choose a blocking.}
Choose $\vec q=(q_1,\ldots,q_{L'})$ so that~\cref{eq:global_budget_nonTI} is satisfied.

\item \textbf{Block and polar-decompose.}
For each block $\ell$, form the blocked tensor $ A_{\ell}^{(q_{\ell})}$, compute its
polar decomposition
$ A_{\ell}^{(q_{\ell})}=V_{\ell}^{(q_{\ell})}P_{\ell}^{(q_{\ell})}$.
Then, obtain the RG fixed-point operator $P_{\ell}^{(\infty)}$~\eqref{eq:pinf_nonTI} together with the correction map
$C_{\ell}^{(q_{\ell})} = P_{\ell}^{(q_{\ell})}(P_{\ell}^{(\infty)})^{-1}$~\eqref{eq:def-Cq-nonTI}.

\item \textbf{Prepare the fixed-point state.}
Prepare $\ket{\tilde P^{(\infty)}(L')}$~\eqref{eq:p_infinity_nonTI}, the product state formed by the block-dependent local states $\sigma_{\ell}$.

\item \textbf{Apply the correction maps.}
Apply the correction map $ C_{\ell}^{(q_{\ell})}$ on every block in parallel via~\cref{procedure:prob-mapping}.
If the correction fails, restart from Step~3.

\item \textbf{Apply the block isometries.}
Apply $ V_{\ell}^{(q_{\ell})}$ in parallel for all $\ell$.
\end{enumerate}
\end{protocol}

\paragraph*{Circuit depth}
With the sequential implementation, 
and since all isometries are applied in parallel, the isometry depth
is given by $D_V^{\mathrm{seq}}(q_{\max}) = O(\chi^4 q_{\max})$ and depends on the largest block size $q_{\max} := \max_{\ell}(q_{\ell})$.
Hence, for
the all-at-once construction, the overall depth of~\cref{eq:Daao} becomes in the non-TI setting
\begin{equation}
  \tilde{D}_{\aao}(L, \vec{q})
  =
   \frac{D_{\infty} + D_{C}}{p_{\mathrm{succ}}(L, \vec{q})} + D_V^{\rm seq}(q_{\max}).
  \label{eq:cost_nonTI_main}
\end{equation}
Thus, exact preparation of the MPS in depth scaling as $O(\log L)$ is possible provided that one can choose the block sizes
such that $q_{\max}=O(\log L)$ while keeping constant probability of success, $p_{\mathrm{succ}}(L, \vec{q})=O(1)$, through~\cref{eq:global_budget_nonTI}.

As before, we cannot show it for an arbitrary non-TI state, but provided that the random tensors $A^{[k]}$ are IID, we can prove that this is possible. This is captured by the following theorem (see~\cref{app:random-nonTI} for the proof).

\begin{theorem}[IID random non-TI MPS]
\label{thm:random-nonTI}
Assume that the random tensors $\{ A^{[k]}\}_{k=1}^{L}$ in~\cref{eq:nonTI_MPS_def} are IID and satisfy the non-negligible-decoherence condition of Ref.~\cite{Movassagh}.
Then there exists a
constant $\tilde{\xi}>0$, depending only on the underlying distribution, such that for small $\delta$ one can choose a uniform block size $q_\ell = q$ for all $\ell$
in \cref{protocol:nonTI-exact} so that
\begin{equation}
  q=O\!\left(\tilde \xi \log\!\frac{L}{\delta}\right),
  \label{eq:qmax_random_main_v2}
\end{equation}
and the success probability of the correction step obeys
\begin{equation}
  p_{\mathrm{succ}} \ge 1-\delta-O(\delta^2)
  \label{eq:psucc_random_main_v2}
\end{equation}
almost surely over the IID draw.
\end{theorem}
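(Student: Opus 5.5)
The proof follows the template of \cref{thm:constant-success}, with the translation-invariant residual bound~\eqref{residual} replaced by an almost-sure, block-uniform bound derived from the theory of ergodic quantum processes~\cite{Movassagh}.

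\textbf{Step 1: identify the fixed point.} For a block $\ell$ spanning sites $k_\ell+1,\dots,k_\ell+q$, note that $(P^{(q)}_\ell)^2=(A^{(q)}_\ell)^\dag A^{(q)}_\ell$. This matrix is the Choi-type contraction of the product of adjoint transfer channels $\mathcal{E}^*_{k_\ell+1}\circ\cdots\circ\mathcal{E}^*_{k_\ell+q}$, applied to the maximally entangled operator on the right pair of virtual legs. Under the non-negligible-decoherence condition, Movassagh's ergodic theorem says that the backward products $\mathcal{E}_{k+1}\circ\cdots\circ\mathcal{E}_{k+q}$ converge almost surely, exponentially fast, to a rank-one replacement channel $X\mapsto \tr[X]\,\rho_{k}$. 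The rate is a deterministic Lyapunov-type exponent $1/\tilde\xi$ set by the distribution. This gives
\[
(P^{(q)}_\ell)^2 \to \sigma_\ell^2\otimes\mathbb{I}_\chi ,
\]
with $\sigma_\ell^2$ built from the random fixed point $\rho_{k_\ell}$. Taking the operator square root is Lipschitz on matrices with spectrum bounded away from zero, with constant $\sim\|\sigma_\ell^{-1}\|$. This yields \cref{eq:Rq-upper-nonti_blocks_main}:
\[
\|R^{(q)}_\ell\|_2\le \Gamma_\ell e^{-q/\tilde\xi},
\]
which I would state as the blockwise lemma (\cref{thm:ergodic-blockwise-fixedpoint}).

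\textbf{Step 2: control the constants uniformly over blocks.} This is the main obstacle. In the TI proof a single $\Gamma$ appears, whereas here we need
\[
\max_{\ell\le L'}\Gamma_\ell\,\|(P^{(\infty)}_\ell)^{-1}\|_\infty
\]
to grow at most polynomially in $L$, almost surely. My first attempt is to use that the IID sequence is stationary, so the random prefactor $K_\ell:=\Gamma_\ell\|\sigma_\ell^{-1}\|$ has the same law for every $\ell$. Movassagh's results give the exponential rate almost surely with a random but finite prefactor. I would aim to show that $K_\ell$ has a finite moment, for example that $\mathbb{E}[\log^+K_\ell]<\infty$. Borel--Cantelli then gives, almost surely, $\max_{\ell\le L'}K_\ell\le L^{c}$ for all large $L$; alternatively, a union bound gives a bound of order $L/\delta$ with high probability. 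If moments are unavailable, the fallback is to absorb the randomness by enlarging $\tilde\xi$ slightly: an almost-sure rate $1/\tilde\xi'>1/\tilde\xi$ with a subexponential prefactor still permits $q=O(\tilde\xi\log(L/\delta))$.

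\textbf{Step 3: budget and success probability.} Using $C^{(q)}_\ell-\mathbb{I}=R^{(q)}_\ell(P^{(\infty)}_\ell)^{-1}$, the submultiplicativity argument of~\cref{eq:spect_Cq} gives
\[
\delta_\ell\le 2K_\ell e^{-q/\tilde\xi}.
\]
Choosing $q=O(\tilde\xi\log(L\max_\ell K_\ell/\delta))=O(\tilde\xi\log(L/\delta))$ ensures $\delta_\ell\le\delta/L'$, so the global budget~\eqref{eq:global_budget_nonTI} holds. Then
\[
\prod_\ell\|C^{(q)}_\ell\|_\infty^{-2}\ge\prod_\ell(1+\delta_\ell/2)^{-2}\ge 1-\delta-O(\delta^2).
\]
Finally, I multiply by $\|\ket{\tilde A(L)}\|^2$. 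This equals $1$ for open boundary conditions with left-canonical tensors, up to the boundary-vector normalisation (only exponentially small corrections arise otherwise). This establishes \cref{eq:psucc_random_main_v2}.
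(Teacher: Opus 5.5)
Your Steps 1 and 3 follow the paper's route: Movassagh--Schenker convergence of $\Phi_{a,b}$ to the replacement channel, then $\|R\|_2\le\|(P^{[a,b]})^2-(P^{(\infty)}_a)^2\|_2/\lambda_{\min}(\sigma_a)$, then $\|C_\ell-\mathbb{I}\|_\infty\le\Gamma'_\ell e^{-q/\tilde\xi}$, then the product bound on $p_{\mathrm{succ}}$. The gap is Step 2. You rightly call it the crux, but you do not close it, and none of the proposed closures works.

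First, $\mathbb{E}[\log^+K]<\infty$ with Borel--Cantelli only gives $\log K_a\le\epsilon a$ eventually. That means $\max_{a\le L}K_a=e^{o(L)}$, not $L^{c}$. This cannot be improved in general. Take independent $K_a$ with $\Pr(\log K>t)\sim 1/(t\log^2 t)$: the log-moment is finite, yet the second Borel--Cantelli lemma gives $K_a>a^{c}$ infinitely often for every $c$. Then $q\ge\tilde\xi\log(\max_\ell K_\ell/\delta)$ is not $O(\log L)$.

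Second, the fallback of slowing the rate only absorbs prefactors that grow subexponentially in the block \emph{length}, like the $q^{m-1}$ Jordan factor in the appendix. Here the randomness is indexed by block \emph{position}. A tempered bound $K_a\le K_\epsilon e^{\epsilon a}$ only gives $\log\max_{a\le L}K_a\le\epsilon L+O(1)$, which still forces $q$ linear in $L$.

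Third, the union bound gives a high-probability statement, not the almost-sure one claimed.

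Most importantly, the ergodic theorem only asserts $\kappa_a<\infty$ almost surely. Nothing in the stated hypotheses supplies a moment of $K$, so ``aim to show'' cannot be discharged from them.

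The paper handles this by adding an explicit hypothesis, stated in a footnote of its proof: $\mathbb{E}[\Gamma'_0]<\infty$. It then applies the law of large numbers to get $\sum_\ell\Gamma'_\ell=O(L')$ almost surely, and budgets with the sum, $q=\lceil\tilde\xi\log(2\sum_\ell\Gamma'_\ell/\delta)\rceil$.

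Your max-based route can be completed under a weaker assumption. Suppose $\mathbb{E}[K^{p}]<\infty$ for some $p>0$. Markov's inequality gives $\Pr(K_a>a^{c})\le\mathbb{E}[K^p]\,a^{-cp}$, which is summable for $c>1/p$. The first Borel--Cantelli lemma then yields $\max_{a\le L}K_a=O(L^{c})$ almost surely.

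That lemma needs no independence, which matters here. Each $K_a$ depends on the whole tail $\phi_a,\phi_{a+1},\dots$ through $\rho_a$, so the $K_a$ are stationary but not independent. Take the maximum over all sites $a\le L$ rather than over block starts; this removes the dependence of the index set on $q$. With that, $q=O(\tilde\xi\log(L/\delta))$ follows as you wrote.
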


This shows that the exact preparation framework can be extended to IID non-TI MPS,
with the guarantee that a block size growing as
$O(\log L)$ suffices to maintain a constant overall success probability.
That is, for these non-TI MPS, similar results and scalings as for the TI case can be achieved.
In practice, at a given choice of the uniform block-size $q$, any unknown fixed-point $\sigma_\ell$ can be approximated numerically from the leading right eigenvector of the corresponding block transfer matrix (see~\cref{rem:practical-sigma} of the appendices) such that the corresponding correction maps and, ultimately, the circuit depths~\eqref{eq:cost_nonTI_main} can be evaluated. In turn, the choice of an adequate uniform block size $q$ is identified through optimisation.
In the next section, we use numerical simulations to validate that even with these approximations, and even in regimes not covered by our theory, the scalings of~\cref{thm:random-nonTI} apply. 
We further note that while focusing on the IID case, the results presented here could extend to more general ergodic sequences~\cite{Movassagh} of non-TI tensors.

As a final remark, the split-and-concatenate protocol of~\cref{subsec:sandc} can be particularly advantageous in this non-TI setting. Some local tensors may have larger residual prefactors $\Gamma_\ell$~\eqref{eq:Rq-upper-nonti_blocks_main} or larger norms of the correction maps~\eqref{eq:def-Cq-nonTI}, both contributing to larger circuit depths. By introducing splits that isolate neighbourhoods around such ill-conditioned tensors, one can prevent their adverse properties from propagating and reduce the circuit depth.

\section{Numerical results}\label{sec:numerics}
The preceding sections show that MPS preparation can be achieved exactly and with improved scalings with respect to the bond dimension, compared to existing methods.
However, these scalings do not account for the different overheads involved. They also leave open two practical questions: which construction is preferable, and how much, if any, our methods improve in a specific preparation task. Concretely, this means preparing an MPS at a given system size, bond dimension and required precision.
Furthermore, the applicability of the non-TI extension to physically relevant states beyond the regime covered by~\cref{thm:random-nonTI} remains to be assessed numerically.
The aim of this section is to address these questions through concrete numerical examples.

In~\cref{subsec:meth} we start by reviewing the methodology adopted in our studies.
In~\cref{sec:vq-selection}, we compare different implementations of the isometries, in particular
contrasting constructions with depths scaling as $O( \log (L))$ and $O(\log \log (L))$. 
This shows that the former is preferable in the settings of interest, namely bond dimension $\chi>2$ and not exceedingly large system sizes, despite its worse asymptotic scaling.
In~\cref{sec:compare-approx-rg}, we proceed by comparing different methods for the exact preparation of the renormalised state, showcasing the improvements brought by our techniques.
This confirms the benefits of the split-and-concatenate layout of~\cref{subsec:sandc}. It further shows that exact preparation is competitive with, or even more efficient than, previous approximate MPS preparations, except when the approximation is crude, with error $\varepsilon\geq 10\%$.
Finally, in~\cref{sec:nonTI_numerics} %
we confirm applicability of our constructions to non-TI MPS.
\subsection{Methodology}\label{subsec:meth}
In what follows, we compare different MPS preparations in terms of their optimised circuit depths for different ensembles of MPS (either TI or non-TI, either randomly generated or taken as ground states of physical models). Here, we detail how such comparisons are carried out.

\paragraph{Methods considered}
For the exact preparation of the renormalised state (\cref{sec:exact-prep}) we focus on the all-at-once (\aao of~\cref{subsec:exact-protocol}) and the split-and-concatenate (\snc of~\cref{subsec:sandc}) constructions.
For the implementations of the isometry $V^{(q)}$ (\cref{sec:implementation-Vq}), we focus on the sequential implementation (\emph{Seq} of~\cref{subsec:vq_seq}), the sequential implementation with exact OAAI (\seqAA of~\cref{subsec:SEQ_OAAI}), and the renormalised implementation with OAAI (\renAA of~\cref{subsec:ren-OAAI}).
We leave aside the quasi-probabilistic implementation of~\cref{subsec:quasi-prob}, as it targets incoherent estimation of expectation values rather than coherent state preparation, and the deterministic correction strategy of~\cref{subsec:det_corr}, whose additional overhead does not improve the asymptotic depth scaling.
Finally, when comparing our methods to existing competitive MPS preparations, we consider as a benchmark the RG-based approximate preparation algorithm of \citet{malz2024} detailed in~\cref{sec:bg-rg}.
We note that this RG-based approximate implementation is the one employed in the recent experimental demonstration of Ref.~\cite{scheer2025}, 
making it a natural baseline.

\paragraph{Circuit depth estimation}
All the circuit-depth scalings reported so far were based on the fact that any $M\times N$ isometry can be realised as a circuit with depth scaling as $O(M\times N)$.
Although an exact figure will depend on the details of the isometry to be implemented, and on the underlying quantum platform 
(e.g. its connectivity and primitive gate set),
we will use as a proxy the CNOT-depth lower bound of Ref.~\cite{iten2016}.
When the isometry is defined from a register of $n$ qubits ($n = \log_2(N)$) to a register of $m$ qubits ($m = \log_2(M)$), the CNOT-depth is given by:
\begin{equation}
  N_{\mathrm{iso}}(m,n)
  =
  \tfrac{1}{4}\Bigl(2^{\,n+m+1} - 2^{\,2m} - 2n - m - 1\Bigr).
  \label{eq:Niso-bound}
\end{equation}
With this, we can estimate the different circuit depths at play
(see~\cref{app:depth-estimation} for detail).

\paragraph{Circuit depth optimisation}
As previously discussed in~\cref{subsec:optimize-parameters}, some of our constructions exhibit parameters that can be optimised given a specific instance of an MPS to be prepared. 
We systematically proceed to such optimisations and consistently report \emph{optimised depths}.
In particular, the block size $q$ that appears in any of the circuit constructions discussed can be chosen to minimise their depths. For the implementation of \citet{malz2024}, and given a target error $\varepsilon$, we choose the smallest block size that achieves $\varepsilon$-error~\eqref{eq:trace-distance-error}.
For the \aao protocol, we instead optimise $q$ based on~\cref{eq:Daao}.
For the \snc protocol we optimise both $q$ and the split size $Q$ based on~\cref{eq:Dsc}.
(Representative optimisations for $q$ and $Q$ are reported in~\cref{app:optimisation-plots}.)

\subsection{Comparison of linear and logarithmic OAAI implementations}
\label{sec:vq-selection}
As a preliminary study, we compare two implementations of $V^{(q)}$
introduced in \cref{sec:implementation-Vq}, namely the \seqAA
construction of \cref{subsec:SEQ_OAAI} and the \renAA construction of
\cref{thm:Vq-RG-OAAI}. While \renAA has a better asymptotic
dependence on $L$, namely $O(\log \log L)$ scaling~\eqref{eq:depth-Vq-OAAI-ren} rather than $O(\log L)$ scaling~\eqref{eq:depth-Vq-OAAI-seq}, it also carries a larger constant overhead. This comparison aims at determining which implementation is favourable at varied system sizes ($L \in [10^2, 10^6]$) and bond dimensions ($\chi \in \{2, 4, 8\}$).

For each setting, we generate $64$ independent Haar-random TI MPS~\cite{garnerone2010typicality} with
correlation length $\xi \approx 3.3$. For each MPS instance, we estimate the circuit depths for \aao preparation of the RG fixed-point state followed by either the \seqAA~\eqref{eq:depth-Vq-OAAI-seq} or \renAA~\eqref{eq:depth-Vq-OAAI-ren} circuit constructions for the isometries $V^{(q)}$ and compute their ratios. A ratio smaller than one indicates that the former is preferable to the latter.
Averages of these ratios, over the random MPS instances, are reported in~\cref{fig:d_vs_L_oaa_linear_log} together with their standard deviations.

As can be seen, except for the smallest bond dimension $\chi=2$, 
in most of the settings considered 
the constant-factor overhead of \renAA dominates, 
making it less competitive than \seqAA.
Although \renAA is preferable asymptotically, 
the system size at which \renAA becomes favourable
roughly scales as $\Omega(\exp(\chi^2))$ and 
is far beyond the regime studied here
(e.g. for $\chi=8$ this would occur around $L \approx 10^{34}$).
Given our focus on larger bond dimensions, 
and based on the random MPS instances studied here, 
we conclude that implementations of the isometry 
that only scale in $\log L$ (rather than $\log \log L$) are preferable.
As such, in the subsequent studies, 
we will only consider such implementations.

\begin{figure}[htp]
  \centering
  \includegraphics[width=0.45\textwidth]{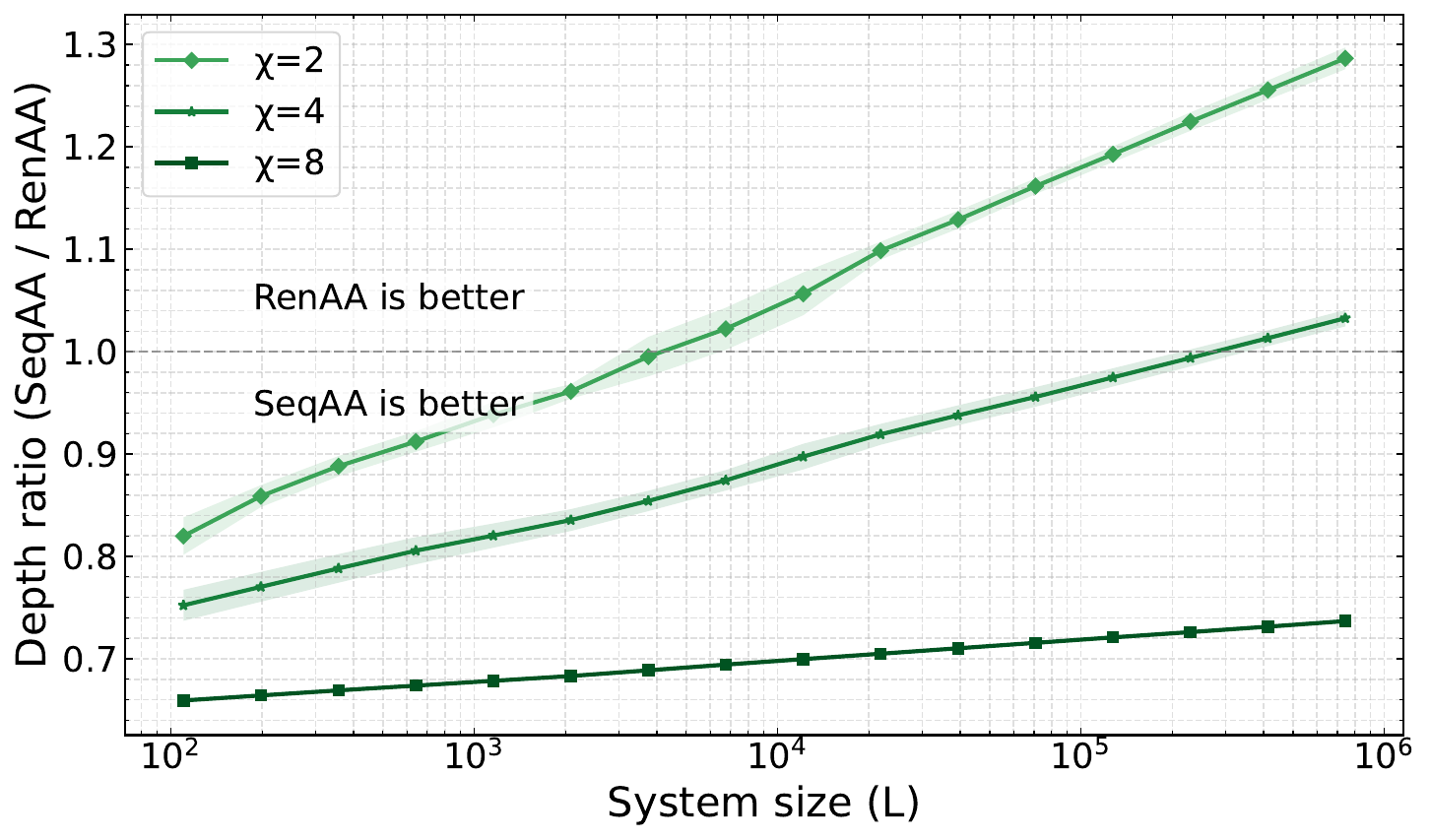}
  \caption{
  Ratio of estimated circuit depths between the \seqAA and \renAA
  implementations of $V^{(q)}$, shown as a function of
  the system size $L$ (log scale) for bond dimensions $\chi\in\{2,4,8\}$.
  Each data point shows the mean $\pm$ one standard deviation over $N_{\mathrm{MPS}}=64$
  Haar-random TI MPS instances with correlation length $\xi \approx 3.3$.
  Extrapolating from the data, the crossover at which \renAA becomes
  favourable for $\chi=8$ is estimated to occur at around $L \sim 10^{34}$.
  }
  \label{fig:d_vs_L_oaa_linear_log}
\end{figure}

\subsection{Comparison with the approximate RG method}
\label{sec:compare-approx-rg}
Next, we compare three ways of preparing the renormalised state $\ket{P^{(q)}(L')}$: the approximate RG protocol of
\citet{malz2024}, 
and our exact preparations including the \aao and \snc protocol. 
The purpose of this benchmark is twofold. First, we wish to probe the benefits of the \snc approach compared to the \aao (see~\cref{fig:sac}). 
Second, we wish to characterise the trade-off between approximate and exact preparation, and
in particular to identify the trace-distance error regimes in which our exact protocols become competitive with existing approximate ones.
As these studies aim at exploring the preparation of the renormalised state, for those we fix the implementation of the isometry to be the sequential implementation of~\cref{subsec:SEQ_OAAI}.
In addition, we also assess improvements brought by an exact preparation coupled to an improved isometry circuit implementation. All these studies are based on the same ensemble of $64$ Haar-random TI MPS instances with $\xi \approx 3.3$ that was used previously. The estimated optimised depths for the various circuit constructions are reported in~\cref{fig:7deg:a} (for $\chi =4$) and in~\cref{fig:7deg:c} (for $\chi =8$). 

\begin{figure}[htp]
\begin{subfloat}[$\chi = 4$\label{fig:7deg:a}]{%
  \includegraphics[width=0.96\columnwidth]{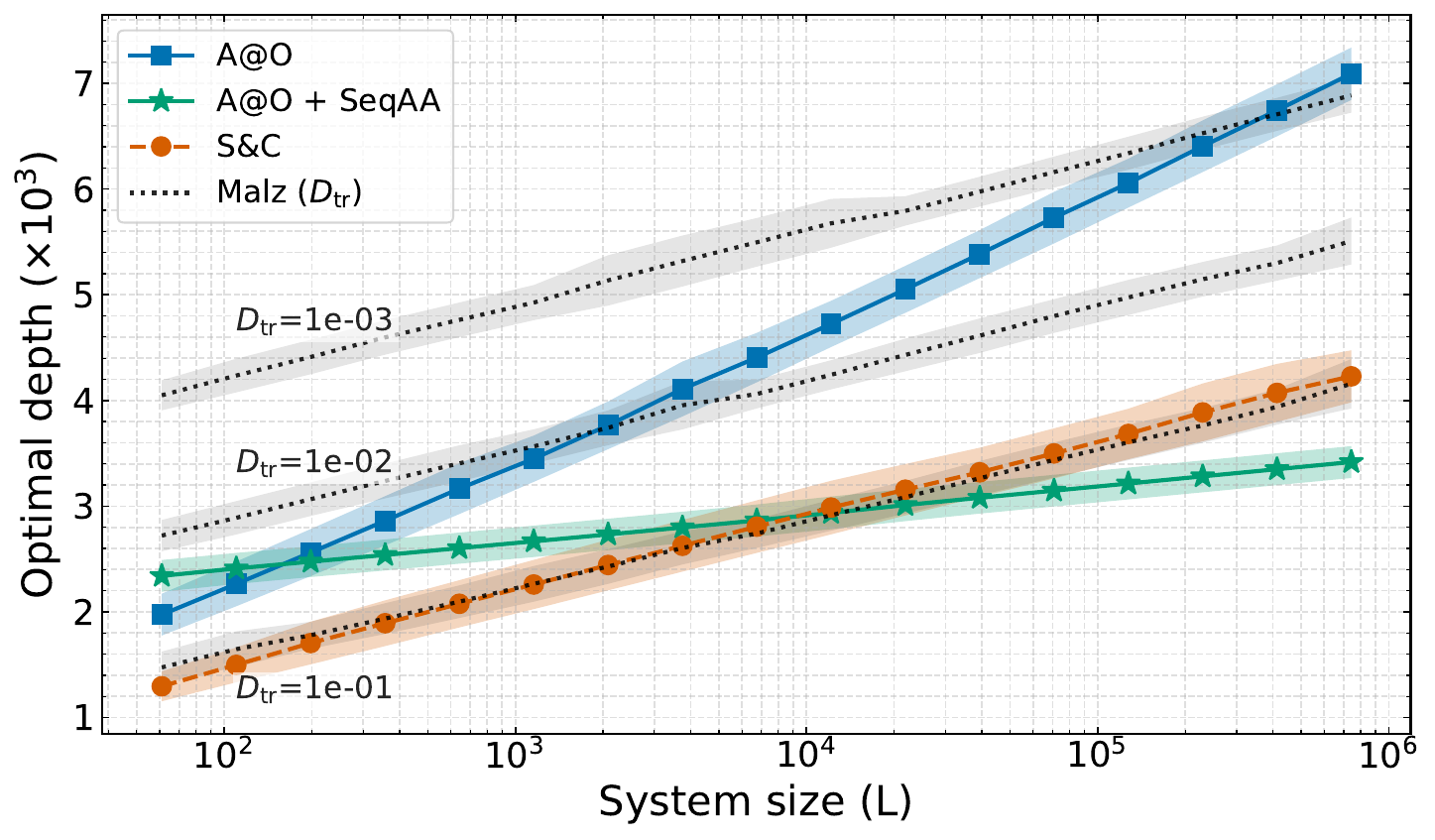}%
}\end{subfloat}

\begin{subfloat}[$\chi = 8$\label{fig:7deg:c}]{%
  \includegraphics[width=0.96\columnwidth]{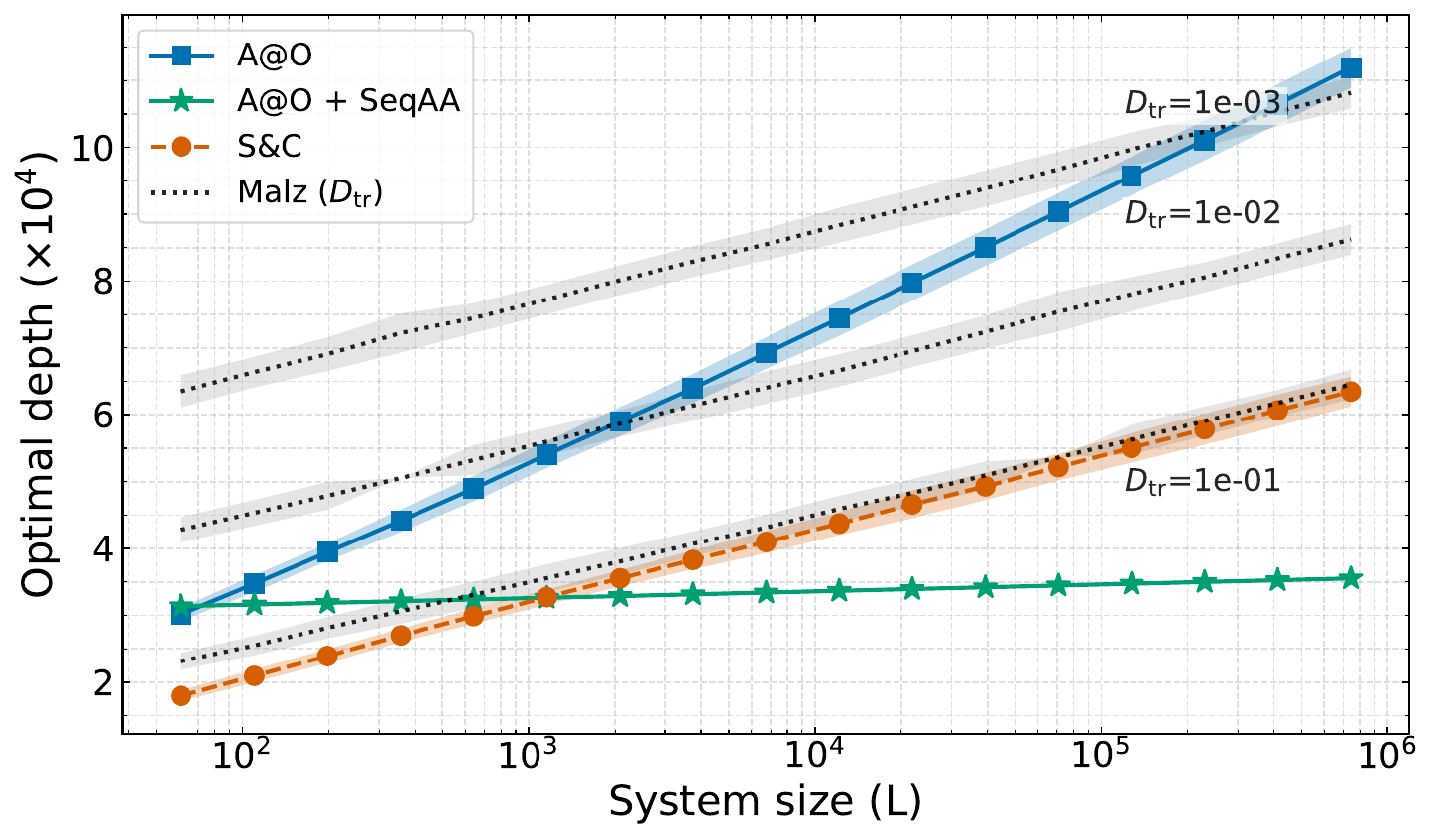}%
}\end{subfloat}
  \caption{
    Estimated circuit depth as a function of the system size $L$ (log scale) for the same ensemble of
    Haar-random TI MPS as in~\cref{fig:d_vs_L_oaa_linear_log}
    for bond dimensions $\chi\in \{4,8\}$.
    Each data point shows the mean $\pm$ one standard deviation over $N_{\mathrm{MPS}}=64$ instances.
    We compare the approximate RG-based approach of \citet{malz2024}
    for trace-distance errors $\varepsilon \in \{10^{-1}, 10^{-2}, 10^{-3}\}$ (grey)
    with our exact protocols:
    all-at-once (\aao) correction with sequential implementation of $V^{(q)}$ (blue),
    all-at-once (\aao) correction with \seqAA implementation of $V^{(q)}$ (green),
    and split-and-concatenate (\snc) correction with sequential $V^{(q)}$ (orange).  }
  \label{fig:d_vs_L_comp}
\end{figure}

Let us start by comparing the two exact protocols \aao and \snc, both with the sequential implementation of $V^{(q)}$.
While \snc is guaranteed to be at least as good as \aao, since we can always take $Q=L'$ to recover \aao from \snc, we wish to more precisely understand how much it improves on the latter. 
As can be seen, while both have similar scalings in terms of the system size (linear in $\log L$),
 the \snc curve (orange line) has essentially half the slope of the \aao curve (blue line).
This is understood as the \snc protocol halving the effective block size required to achieve a given probability of successful post-selection. It translates directly into an approximately factor-two
reduction in the circuits implementing $V^{(q)}$ (as these are linear in $q$). 
Supporting discussion of this effect is provided in~\cref{app:sac}.

Let us now compare exact and approximate preparation (dashed grey lines for different values of the trace-distance error $\varepsilon$).
Given the dependence of $\log(1/\varepsilon)$ in the circuit depth for the approximate MPS preparation~\eqref{eq:depth-costs-bg-overall}, we always expect exact preparation to be favourable for small enough error $\varepsilon$.
As seen in~\cref{fig:d_vs_L_comp}, this crossover already occurs for relatively large $\varepsilon$.
For both the bond dimensions probed, the \snc circuits become on par with the circuits for approximate preparation for values of the error $\varepsilon\approx 10\%$.
In other words, exact preparation becomes comparable to, or better than, the
approximate protocol once the desired error is less than $10\%$.

Finally, we study a combination of exact preparation together with improved implementation of the isometry.
Concretely, we report circuit depths entailed by \aao combined with \seqAA (green line).
Again, the scaling of the circuit depths obtained remains linear in $\log L$. However, we see that the slope of the combined methods is substantially decreased. 
Already, for $\chi=8$ and $L\geq10^4$ this shows that our circuit constructions would be smaller than approximate preparation even when accepting errors of $\varepsilon\approx 10\%$.
The dependence on the bond dimension can also be read off directly by comparing the \aao curves in~\cref{fig:7deg:a} and~\cref{fig:7deg:c}. At $L\approx 10^{2}$, the estimated depth grows from roughly $2200$ at $\chi=4$ to roughly $38000$ at $\chi=8$, a factor of approximately $16$. Since $(8/4)^{4}=16$, this ratio is in quantitative agreement with the $O(q\chi^{4})$ depth of the sequential implementation~\eqref{eq:dv_seq}.
While not always the most competitive over all the system sizes and bond dimensions probed, comparison of~\cref{fig:7deg:a} and~\cref{fig:7deg:c} shows that their advantage becomes more pronounced as $\chi$ increases.
For the \aao with \seqAA construction (green curves in~\cref{fig:7deg:a,fig:7deg:c}), the depth grows by roughly $10^3$ between $L=10^2$ and $L=10^6$ at $\chi=4$, and by roughly $4\times 10^3$ at $\chi=8$, a factor of approximately $4$. Since $(8/4)^{2}=4$, this $L$-growth ratio is
in agreement with the theoretical reduction of the $q$-dependent cost of $V^{(q)}$ from
$O(q\chi^{4})$ to $O(q\chi^{2})$ (see~\cref{tab:Vq-depth-summary}).
Overall, this shows that for entangled states, captured with MPS with relatively large bond dimensions, our methods become the most competitive with substantial improvements in the corresponding circuit depths especially as $\chi$ increases.

\subsection{Benchmarks for non-translationally invariant MPS}
\label{sec:nonTI_numerics}
Finally, we assess whether the extensions to the preparation of non-TI MPS, as presented in~\cref{sec:nonTI},
preserve the logarithmic depth scaling with respect to the system size.
To that end, we consider two classes of non-TI states: randomly generated non-TI MPS
with IID Haar-random tensors in~\cref{sec:random-nonTI-numerics}, and ground states of a disordered
Heisenberg chain obtained by DMRG in~\cref{sec:dmrg_impurity}.
While~\cref{thm:random-nonTI} guarantees preparability in $O(\log L)$ depth for the former, the latter falls outside the scope of the theorem as the local tensors can be correlated.
This allows us to probe the practical reach of~\cref{protocol:nonTI-exact} beyond the regime covered by the theoretical guarantees.
In both cases, we use the sequential implementation of the isometries, since it applies in parallel across non-TI instances and is agnostic to the detailed local structure of the state.
Although allowing non-uniform block sizes could provide additional room for optimisation, in the following
we fix the block sizes to be the same across all blocks as per~\cref{thm:random-nonTI}.
\subsubsection{Random non-TI MPS from IID Haar distribution}
\label{sec:random-nonTI-numerics}
\begin{figure}[!htp]
  \centering
    \includegraphics[width=0.5\textwidth]{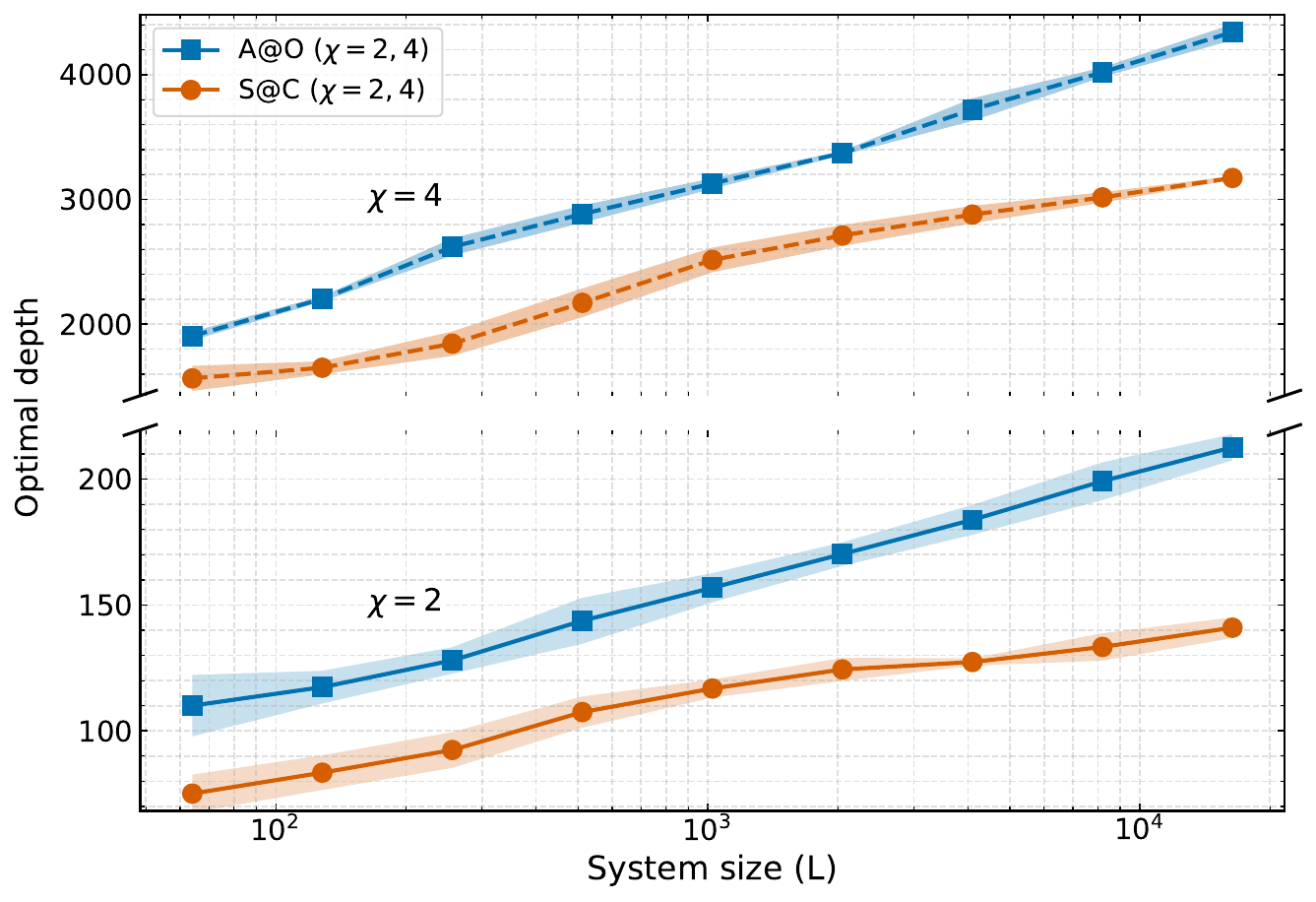}
  \caption{
    Estimated circuit depth as a function of the system size $L$ (log scale) for randomly generated
    non-TI MPS with open boundary conditions, with $\chi=2$ (solid lines) and $\chi=4$ (dashed lines).
    Each data point shows the mean $\pm$ one standard deviation over $N_{\mathrm{MPS}}=64$ instances.
    Two strategies for implementing $\ket{P^{(q)}(L')}$ are compared:
    all-at-once (\aao, blue) and split-and-concatenate (\snc, orange),
    both using the sequential implementation of $V^{(q)}$.
  }
  \label{fig:nonTI_depth_vs_L}
\end{figure}

We generate random non-TI MPS by independently sampling each left-canonical tensor $A^{[k]}$~\eqref{eq:nonTI_MPS_def} from the Haar measure.
For each system size $L$, we draw $N_{\mathrm{MPS}}=64$ independent instances and 
report the sample mean and standard deviation of the estimated depths.
As in the TI scenario, for \aao we optimise over the block size $q$ while for \snc we optimise over both $q$ and $Q$.
We consider bond dimensions $\chi \in \{2,4\}$, with results reported
in~\cref{fig:nonTI_depth_vs_L}.
As can be seen, the circuit depths scale approximately as $\log L$, with mild fluctuations, consistent with the asymptotic scaling of~\cref{thm:random-nonTI}.
Moreover, the absolute depths for $\chi=4$ are comparable to those reported for the TI case in~\cref{fig:7deg:a}, indicating that the loss of translational invariance does not introduce significant overhead.
The dependence on the bond dimension also mirrors the TI case:
at $L\approx 10^{2}$, the \aao depths for $\chi=2$ and $\chi=4$ are approximately $130$ and $2100$ respectively, yielding a ratio of about $16 = (4/2)^{4}$.
This is consistent with the dominant depth contribution $D^{\rm seq}_V(q) = O(\chi^4 q)$ coming from the sequential implementation of the isometries $V^{(q_{\ell})}$, which is the same for both TI and non-TI MPS.

\subsubsection{Depth for disordered Heisenberg model}
\label{sec:dmrg_impurity}

To assess if a scaling of the block sizes (and thus the circuit depth) in $\log L$ can also be expected in more general situations, we now consider a physically motivated family of non-TI MPS: ground states of a
disordered Heisenberg chain with random longitudinal fields~\cite{pal2010, luitz2015}. The Hamiltonian of the chain is defined as
\begin{equation}
  H = -J \sum_{i=1}^{L-1} \mathbf{S}_i \cdot \mathbf{S}_{i+1}
      - \sum_{i=1}^{L} h_i\, S_i^{z},
  \label{eq:impurity_ham}
\end{equation}
where $S_i^{\alpha} = \sigma_i^{\alpha}/2$ are spin-$\tfrac{1}{2}$ operators,
$J = 1$ sets the uniform exchange scale, and
$h_i \stackrel{\mathrm{IID}}{\sim} \mathcal{N}(0, h_{\mathrm{imp}}^{2})$
are independent Gaussian random longitudinal fields with strength
$h_{\mathrm{imp}}$.
Open boundary conditions are used throughout.
Because the random disorder breaks translational invariance, the ground state
is a non-TI MPS, and 
provides a natural physical instance for benchmarking our 
circuit constructions.
We compute these ground states via DMRG~\cite{white1992, schollwock2011} finding that the energy converges
rapidly with $\chi$.
We therefore restrict our depth analysis to the MPS obtained at $\chi = 2$ and $\chi = 4$,
which are already well-converged representatives of the ground-state manifold.

\begin{figure}[!htp]
  \centering
    \includegraphics[width=0.5\textwidth]{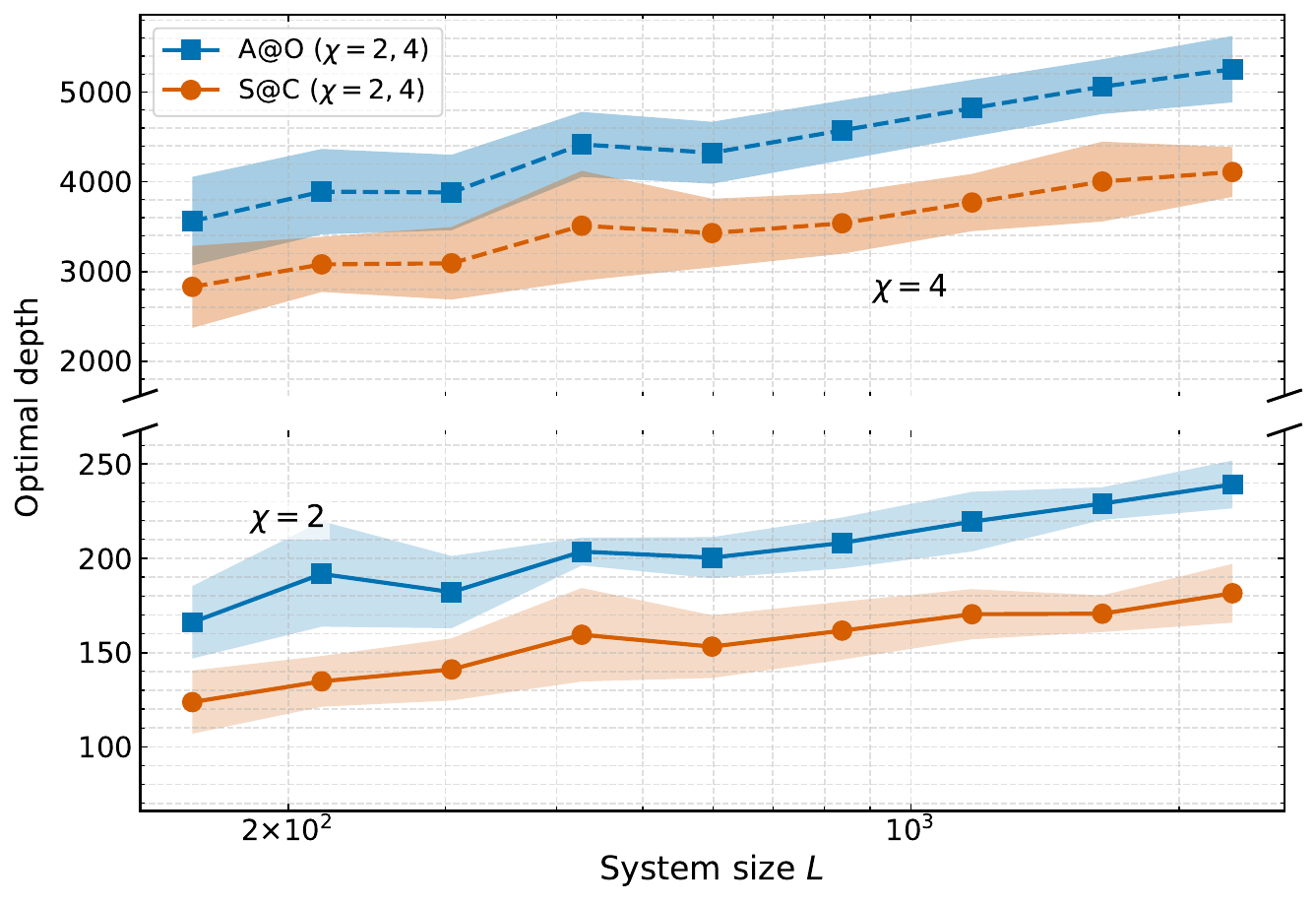}  
    \caption{
    Estimated circuit depth as a function of the system size $L$ (log scale) for ground states of the
    disordered Heisenberg chain~\cref{eq:impurity_ham}
    with $J=1$ and $h_{\mathrm{imp}}=0.5$.
    Ground states are obtained through DMRG and expressed as MPS with $\chi=2$ (solid lines) and $\chi=4$ (dashed lines).
    Each data point shows the mean $\pm$ one standard deviation over $N_{\mathrm{MPS}}=64$
    independent disorder realisations.
    Similarly to~\cref{fig:nonTI_depth_vs_L}, we compare the all-at-once (\aao) and
    split-and-concatenate (\snc) protocols, both using the sequential implementation of $V^{(q)}$.
  }
  \label{fig:dmrg_impurity_depth}
\end{figure}

As shown in~\cref{fig:dmrg_impurity_depth}, the required circuit depths
exhibit an approximately logarithmic growth with the system size $L$ for both bond dimensions, consistent with the behaviour observed for Haar-random non-TI MPS in~\cref{fig:nonTI_depth_vs_L}.
The $\chi$-dependence is also similar: at comparable system sizes, the ratio of depths between $\chi=4$ and $\chi=2$ is again consistent with an $O(\chi^4)$ scaling, as expected from the sequential implementation of the isometries.
This agreement suggests that the logarithmic scaling is a robust feature of non-TI MPS preparation that can extend beyond random ensembles to physically relevant ground states.

Overall, the numerical studies of this section support the practicability of our exact preparation protocols across the TI and non-TI settings. The studies indicate that \seqAA dominates \renAA at most relevant system sizes, and that our exact constructions match or improve on the approximate RG depths of~\citet{malz2024}. 
They also provide numerical support that the $O(\log L)$ scaling for non-TI MPS can persist beyond IID Haar-random tensors, as illustrated for ground states of a disordered Heisenberg chain.

\section{Conclusion}

In this work, we improved on RG methods~\cite{malz2024} for the preparation of MPS on quantum devices. First, by introducing correction maps implemented through block-encoding, we eliminated the approximation errors inherent to previous schemes (\cref{sec:exact-prep}). While these block-encodings introduce post-selection, we proved that its success probability can be made constant without altering the circuit-depth scalings. We further showed that this probability can be increased by careful layout of the circuits (the split-and-concatenate strategy) and that the preparation can even be made deterministic through exact amplitude amplification (\cref{subsec:improve_success}).

Second, combining the block-encoded correction maps with a generalisation of oblivious amplitude amplification to isometries (\cref{thm:OAAI}), we developed new implementations of the isometry $V^{(q)}$ (\cref{sec:implementation-Vq}), with the circuit depths of these implementations, along with end-to-end scalings summarised in~\cref{tab:Vq-depth-summary}. The resulting \seqAA circuit construction prepares a TI MPS exactly in depth $\tilde{O}(\chi^{2}\log L + \chi^{4})$, improving over $\tilde{O}(\chi^{4}\log(L/\varepsilon))$ for the sequential approximate preparation of~\citet{malz2024}. A double-logarithmic variant, \renAA, achieves $\tilde{O}(\chi^{2}\log\log L + \chi^{4})$ with all-to-all connectivity, improving over $\tilde{O}(\chi^{6}\log\log(L/\varepsilon))$ for the tree construction of~\citet{malz2024}. However, as shown in~\cref{sec:vq-selection}, while the latter is favourable asymptotically, its advantage only materialises at impractically large system sizes. We further proposed a quasi-probabilistic implementation of the circuits, suitable for the estimation of expectation values, that achieves $\tilde{O}(\chi^{3}\log L)$ circuit depth at the cost of a constant multiplicative sampling overhead.

The benefits of the different constructions were probed in numerical studies (\cref{sec:numerics}), confirming the constant-factor gains of the split-and-concatenate strategy, the improved bond-dimension dependences, and ultimately the improved depths of our circuits in practice.

While most of our results concerned TI MPS, we extended the theory to the non-TI case. In particular, leveraging the theory of ergodic quantum channels~\cite{Movassagh}, we proved logarithmic-depth exact preparation for non-TI MPS generated by IID random tensor sequences. Furthermore, our numerical results indicate that this scaling persists beyond the IID regime, as illustrated by ground-state preparations for a disordered Heisenberg model (\cref{sec:nonTI}).

Taken together, these constructions constitute, to the best of our knowledge, the most efficient exact MPS preparation protocols, and advance the preparation of highly entangled states on quantum computers. Several directions are left for future work.
First, the measure of circuit complexity adopted here was the CNOT-depth. When targeting fault-tolerant compilation, and depending on the details of the quantum error correction code employed, other measures can become more relevant. For instance, we expect the T-count of our circuits to follow the same scaling in $\chi$ and $L$ as the CNOT-depth reported here, with an additional multiplicative overhead scaling as $O(\log(1/\varepsilon_{\rm s}))$ which arises from the compilation of  rotations into Clifford+T gates~\cite{ross_selinger2014} up to a synthesis error $\varepsilon_{\rm s}$. Recent results on T-count-optimal state preparation~\cite{gosset2024} suggest that further improvements may be achievable. Still, a more precise analysis of the T-count (or of other metrics relevant to the fault-tolerant regime) would be valuable.
Second, our numerical studies probed only a subset of the proposed constructions. In particular, the quasi-probabilistic method and the split-and-concatenate variant of \seqAA, both likely to yield the shortest circuits but more demanding to benchmark, were not considered. A more systematic comparison across these combinations is a natural next step. 
Finally, several techniques can further reduce the depth of our circuits. For instance, a gauge optimisation of the MPS tensors was used in~\cite{scheer2025} to reduce the gate count when compiled to a specific gate-set. Also, while our numerical studies for non-TI MPS used a uniform block size across the chain, we expect that allowing this block size to vary across the blocks, and optimising over this choice, will have a substantial effect on the circuit depths.
All these refinements will contribute to advancing the preparation of highly entangled states of matter on quantum devices.
Experimentally, the RG scheme of Ref.~\cite{malz2024} has already been demonstrated on hardware~\cite{scheer2025} for system size of up to $L=80$ qubits but at small bond dimension $\chi=2$. The constructions presented here, despite additional but scalable classical preprocessing, will be instrumental in preparing states with higher entanglement relevant to many condensed-matter~\cite{haghshenas_2024, robertson_2025} and quantum-chemistry~\cite{White1999, stoudenmire_2012, chan_sharma_2011, baiardi_reiher_2020} simulation problems.

\section{Acknowledgements}

We thank Lewis Wright and Marcello Benedetti for carefully reading the manuscript and for their valuable feedbacks.

\bibliography{main}

\appendix

\newpage
\onecolumngrid

\setcounter{secnumdepth}{3}

\section*{Supplementary Material}
\makeatletter
\renewcommand{\bibnumfmt}[1]{[S#1]}
\renewcommand{\citenumfont}[1]{S#1}
\counterwithout{equation}{section}
\setcounter{equation}{0}

\renewcommand\theequation    {S\arabic{equation}}

\renewcommand\thefigure      {S\arabic{figure}}
\renewcommand\thetable       {S\arabic{table}}

\renewcommand\thetheorem     {S\arabic{theorem}}
\renewcommand\thelemma       {S\arabic{lemma}}
\renewcommand\thecorollary   {S\arabic{corollary}}
\renewcommand\thedefinition  {S\arabic{definition}}
\makeatother

In \cref{sec:sm-notation}, we recall the main notation used in the main text and review our tensor conventions.
\Cref{app:normal_mps} provides additional information for the treatment of normal TI MPS. \Cref{app:block-encoding} reviews basics of block-encoding. \Cref{app:qubit-count} details the use of ancillas in the different circuit constructions proposed. \Cref{app:non-normal} introduces extensions to non-normal MPS while~\cref{app:nonTI} addresses extensions to non-TI MPS. 
\Cref{app:numerics} provides additional details supporting the numerics section of the main text.
\Cref{app:proofs_OAAI} is dedicated to proving the exact oblivious amplitude amplification for isometries.
Finally,~\cref{app:sac} analyses the cost of the split-and-concatenate strategy, and~\cref{app:incoherent} develops the quasi-probabilistic implementation of the correction map.

\section{Notation and conventions}
\label{sec:sm-notation}
In this section we recall the main tensors and notations introduced in the main text. These will repeatedly be used in the following appendices. We further clarify (sometimes ad-hoc) conventions taken when dealing with tensors. 

Recall that the TI MPS~\eqref{eq:MPS-def-indices}, that we wish to prepare as a circuit, is built from a 3-leg tensor $A\in\mathbb{C}^{\chi\times d\times\chi}$ with $\chi$ the bond (or virtual) dimension and $d$ the physical dimension. Blocking of $q$ of such tensors yields the \emph{blocked tensor} $A^{(q)}\in \mathbb{C}^{\chi \times d^q \times \chi}$~\eqref{eq:blocking}.
To explicitly indicate entries of tensors, we decorate them with brackets and 
reserve Greek letters for the virtual indices ($\alpha,\beta,\gamma,\delta \in [\chi]:= \{1,\cdots ,\chi\}$) together with Latin letters for the physical indices ($i, j,k,l\in[d]$). For instance, $[A]^i_{\alpha, \beta}$ is a single entry of $A$ while $[A^{(q)}]^\mathbf{i}_{\alpha,\beta}$ is a single entry of the blocked tensor $A^{(q)}$. For the latter, we collected all the physical indices into a single multi-index $\mathbf{i}=(i_1,\dots,i_q)\in[d]^q$.
Alternatively, when fixing only a partial number of indices we refer to a slice of the original tensor. In particular, $[A]^i\in \mathbb{C}^{\chi \times \chi}$ (often simply written as $A^i$) is used to define the \emph{transfer matrix} associated to $A$ as 
\begin{equation}
    E=\sum_i A^i\otimes\bar{A}^i\in\mathbb{C}^{\chi^2\times\chi^2},
\end{equation}
which maps a pair of virtual spaces ($\mathbb{C}^{\chi^2}$) to another pair of virtual spaces ($\mathbb{C}^{\chi^2}$). To indicate its entries, we use a pair of virtual indices such that
\begin{equation}\label{eq:transfer_entries}
    [E]_{(\alpha,\gamma),(\beta,\delta)}=\sum_i [A]^i_{\alpha,\beta}\,\overline{[A]^i_{\gamma,\delta}}.
\end{equation}

At times, we overload notations and reshape tensors as matrices. 
In particular, tensors $A$ or $A^{(q)}$ are usually viewed as maps from the virtual-pair space to the physical space such that $A\in \mathbb{C}^{d\times\chi^2}$ and $A^{(q)}\in\mathbb{C}^{d^q\times\chi^2}$.
For instance, this reshaping is used when defining the polar decomposition $A^{(q)}=V^{(q)}P^{(q)}$~\eqref{eq:polar} that produces a \emph{positive factor} $P^{(q)}\in\mathbb{C}^{\chi^2\times\chi^2}$ together with an \emph{isometry} $V^{(q)}\in\mathbb{C}^{d^q\times\chi^2}$.
To distinguish it from other spaces involved, we refer to the output space of $P^{(q)}$ as the \emph{renormalised physical space} (effectively isomorphic to a pair of virtual spaces) and index it through a renormalised physical index $\mu\in [\chi^2]$.
Fixing $\mu$ then yields the matrix $[P^{(q)}]^\mu \in \mathbb{C}^{\chi \times \chi}$.

Another form of matrix manipulation, that will be used in a few places, does not change the dimension of the matrix but only the arrangement of its entries, and we talk about a change of \emph{matrix view}. Consider a tensor with four virtual indices $(\alpha, \beta, \gamma, \delta)$ reshaped as a $\chi^2\times\chi^2$ matrix. There is freedom in how the pairs of virtual indices are grouped, yielding different matrix views. For instance, for the transfer matrix~\cref{eq:transfer_entries}, we could also define $F=A^\dag A$ (the Gram matrix associated to $A\in \mathbb{C}^{d \times \chi^2}$) for which it can be verified that $[F]_{(\alpha, \gamma), (\beta, \delta)} = [E]_{(\alpha, \beta), (\gamma, \delta)}$. That is, $E$ and $F$ are different matrix views of the same tensor (we denote this equivalence as $E \equiv F$).
Given that only the position of the matrix entries changes when adopting a new view, the Frobenius norm, denoted by $\|\cdot\|_2$, is preserved. For the transfer matrix, we see that
\begin{equation}\label{eq:frob_view}
  \|E\|_2
  = \bigl\|\sum_i A^i\otimes\bar A^i\bigr\|_2
  = \|A^\dag A\|_2.
\end{equation}
By contrast, the spectral norm $\|\cdot\|_\infty$, defined as the largest singular value, can depend on the chosen view.
Note however that for any finite-rank matrix $M$, the Frobenius and spectral norms can be related through the inequalities
\begin{equation}\label{eq:ineq_schatten}
  \|M\|_\infty \le \|M\|_2 \le \sqrt{\operatorname{rank}(M)}\,\|M\|_\infty.
\end{equation}

Finally, recall that when $q\to\infty$ the positive factor $P^{(q)}$ converges to the \emph{RG fixed-point operator} $P^{(\infty)}\in \mathbb{C}^{\chi^2 \times \chi^2}$. For normal MPS, this fixed point has the form~\eqref{eq:pinf_product}
\begin{equation}\label{eq:pinf_product_app}
    P^{(\infty)}=\sigma\otimes\mathbb{I}_\chi\in\mathbb{C}^{\chi^2\times\chi^2},
\end{equation}
in terms of a $\sigma\in \mathbb{C}^{\chi \times \chi}$ which is positive-definite ($\sigma\succ0$) and satisfies $\tr[\sigma \sigma^\dag] = 1$~\cite{malz2024, piroli2021}.
We further define the \emph{residual} operator~\eqref{eq:residual} as their difference
\begin{equation}\label{def:residual_app}
    R^{(q)}:=P^{(q)}-P^{(\infty)}\in\mathbb{C}^{\chi^2\times\chi^2}
\end{equation}
and the \emph{correction} map~\eqref{eq:def-Cq} as the operation mapping one to the other, $C^{(q)}:=P^{(q)}(P^{(\infty)})^{-1}\in\mathbb{C}^{\chi^2\times\chi^2}$.

\section{Normal MPS, MPS transfer matrix, and residual bound}
\label{app:normal_mps}
In this appendix, we provide additional details concerning the treatment of normal TI MPS. This includes a proof of the asymptotic normalisation stated in~\cref{eq:norm} of the main text, a derivation of the bound on the residual norm reported in~\cref{residual} and used throughout the main text, and a proof that~\cref{eq:scaling-q} is indeed the scaling of the block size $q$ needed to achieve $\varepsilon$-error in the MPS preparation task.

\subsection{Basic properties of normal MPS}
\label{app:normal-basic}
By using the gauge freedom of a TI MPS, every MPS tensor $A$ can be brought to a block-diagonal canonical form~\cite[Section~2.3]{cirac_2017}, such that each of the $A^i\in \mathbb{C}^{\chi \times \chi}$ (for $i =1, \hdots,d$) can be block decomposed as
\begin{equation}\label{eq:canonical-form}
  A^i \;=\; \bigoplus_{b=1}^{g} \nu_b\, A^{i}_b,
\end{equation}
with $\nu_b \in \mathbb{C}$ with $|\nu_b|\leq 1$,
and where each block $A^i_b \in \mathbb{C}^{\chi_b \times \chi_b}$ (such that $\sum \chi_b = \chi$) satisfies the left-canonical condition $\sum_i A^{i\dag}_b A^i_b = \mathbb{I}_{\chi_{b}}$.
When the decomposition consists of a single block ($g=1$), we say that the MPS is \emph{normal}~\cite{perez_garcia_2007}, which is the setting considered in this section.

Since the MPS transfer matrix $E = \sum_i A^i \otimes \bar{A}^i$ also represents the channel
$X\mapsto\sum_i A^i X A^{i\dag}$,
the left-canonical condition $\sum_i A^{i\dag}A^i=\mathbb{I}_{\chi}$ implies that the identity $\mathbb{I}_\chi$ is a left eigenvector of $E$ with eigenvalue $1$, in the sense that
\begin{equation}\label{eq:transfer-left-ev}
  \dbra{\mathbb{I}_\chi}E=\dbra{\mathbb{I}_\chi}.
\end{equation}
Here $\dket{X}$ denotes the vectorisation of an operator $X\in\mathbb{C}^{\chi\times\chi}$ through the mapping $\ket{i}\bra{j}\mapsto\ket{i}\otimes\ket{j}$, and $\dbra{X}$ is the dual of $\dket{X}$ under this inner product $\dbraket{Y}{X}=\tr(Y^\dag X)$.
For a normal MPS~\cite{cirac_2017}, the eigenvalue $\lambda_1=1$ is unique and every other eigenvalue $\lambda_j$ satisfies $|\lambda_j|<1$. In addition to this unit eigenvalue, we denote the subleading one (second largest in modulus) as $\lambda_2$, and define the correlation length of $A$ as 
\begin{equation}\label{eq:correll}
  \xi := \frac{-1}{\log |\lambda_2|} > 0.
\end{equation}
It is known that the right eigenvector corresponding to $\lambda_1$ is the vectorisation of a unique positive-definite matrix $\rho\in\mathbb{C}^{\chi\times\chi}$ satisfying~\cite{cirac_2017,piroli2021}
\begin{equation}\label{eq:fixed_point_one}
  E\dket{\rho}=\dket{\rho},
\end{equation}
or equivalently $\sum_i A^i \rho A^{i\dag} = \rho$, where $\rho$ is normalised as $\dbraket{\rho}{\mathbb{I}_\chi} = \tr(\rho) = 1$.
The matrix $\sigma = \sqrt{\rho}$ then entirely specifies the RG fixed-point operator through~\cref{eq:pinf_product_app}.
\paragraph{Computing $\sigma$.}
\label{app:compute-sigma}
In practice, $\rho$ is obtained by diagonalising $E$ and reshaping the resulting right eigenvector from $\mathbb{C}^{\chi^2}$ back into a $\chi \times \chi$ matrix (i.e. the inverse of the vectorisation map).
The matrix $\sigma$ is then given by the unique positive square root $\sigma = \sqrt{\rho}$.
Since $\rho \succ 0$ for a normal MPS, $\sigma$ is also positive-definite, as stated in the main text after~\cref{eq:pinf_product}.
Straightforward diagonalisation of $E \in \mathbb{C}^{\chi^2 \times \chi^2}$ requires $O(\chi^6)$ arithmetic operations. 
Alternatively, the dominant right eigenvector can be obtained through power iteration of $E$ (equivalently by repeatedly applying the channel $X \mapsto \sum_i A^i X A^{i\dag}$), where each iteration necessitates $O(d\chi^3)$ operations for the matrix multiplications and convergence is exponential in the iteration count, at rate $1/\xi$.
Taking the square root of the $\chi \times \chi$ matrix $\rho$ requires $O(\chi^3)$ additional operations.
That is, obtaining $\sigma$ is efficient in $\chi$, and can easily be done for the bond dimensions typically encountered in practice.

\subsection{Asymptotic normalisation of the normal MPS}
\label{app:asymptotic-normalisation}
For the $L$-site periodic TI MPS~\eqref{eq:MPS-def-indices},
the squared norm can be expressed in terms of the transfer matrix, as
\begin{equation}\label{eq:norm-transfer}
  \|\ket{A(L)} \|^2 := \braket{A(L) | A(L)} = \tr\!\bigl(E^L\bigr).
\end{equation}
We now show that this norm converges to unity exponentially in the system size $L$, as per~\cref{eq:norm} of the main text. In what follows, we consider both the case when $E$ is diagonalisable and when it is not.

If $E$ is diagonalisable, it can be decomposed as
\begin{equation}
  E = S\Lambda S^{-1},
\end{equation}
for some invertible matrix $S$ and with $\Lambda = \mathrm{diag}(\lambda_1, \ldots, \lambda_{\chi^2})$ the diagonal matrix of eigenvalues. Hence,
\begin{equation}
  \tr\!\bigl(E^L\bigr) = \tr\!\bigl(\Lambda^L\bigr) = \sum_{j=1}^{\chi^2} \lambda_j^L
  = 1 + \sum_{j=2}^{\chi^2} \lambda_j^L.
\end{equation}
In turn, we can bound deviation of the norm from $1$ through
\begin{equation}\label{eq:norm-bound-diag}
  \bigl|\braket{A(L)|A(L)} - 1\bigr|
  \leq \sum_{j=2}^{\chi^2} |\lambda_j|^L
  \leq (\chi^2 - 1)\,|\lambda_2|^L,
\end{equation}
where we recall that $|\lambda_2|<1$.
Inserting the \emph{correlation length}~\eqref{eq:correll}, 
the previous bound becomes 
\begin{equation}\label{eq:norm-bound-diag-2}
  \bigl|\braket{A(L)|A(L)} - 1\bigr|
  \leq(\chi^2 - 1)\,e^{-L/\xi}.
\end{equation}

If $E$ is not diagonalisable, the bound~\eqref{eq:norm-bound-diag-2} still applies.
To see this, put $E$ into Jordan normal form $E = S\,J\,S^{-1}$ with $J = \bigoplus_\alpha J_\alpha(\lambda_\alpha)$, where each $J_\alpha(\lambda_\alpha)$, with size $m_\alpha \times m_\alpha$, is the Jordan block corresponding to the eigenvalue $\lambda_\alpha$.
Since the trace is similarity-invariant,
\begin{equation}\label{eq:non_diag_trace}
  \tr\!\bigl(E^L\bigr)
  = \tr\!\bigl(J^L\bigr)
  = \sum_\alpha \tr\!\bigl(J_\alpha(\lambda_\alpha)^L\bigr).
\end{equation}
Write each Jordan block as $J_\alpha(\lambda_\alpha) = \lambda_\alpha \, I_\alpha + N_\alpha$, where $I_\alpha$ is the identity on $\mathbb{C}^{m_{\alpha}}$ and $N_\alpha$ is the nilpotent shift matrix (satisfying $N_\alpha^{m_\alpha} = 0$ and $\|N_\alpha\|_\infty \leq 1$). Then its $L$-th power is given by
\begin{equation}\label{eq:jordan-power}
  J_\alpha(\lambda_\alpha)^L
  = \sum_{k=0}^{m_\alpha - 1} \binom{L}{k}\, \lambda_\alpha^{L-k}\, N_\alpha^k.
\end{equation}
For $k \geq 1$, any of the matrices $N_\alpha^k$ is strictly upper-triangular and therefore traceless.
Only the $k = 0$ term contributes to the sum in~\cref{eq:non_diag_trace}, giving
\begin{equation}\label{eq:tr-jordan-block}
  \tr\!\bigl(J_\alpha(\lambda_\alpha)^L\bigr) = m_\alpha\, \lambda_\alpha^L.
\end{equation}
Summing over all blocks and using that $\lambda_1 = 1$ is simple ($m_1 = 1$), one obtains
\begin{equation}
  \braket{A(L) | A(L)} = 1 + \sum_{\lambda_\alpha \neq 1} m_\alpha\, \lambda_\alpha^L.
\end{equation}
Since $|\lambda_\alpha| \leq |\lambda_2| < 1$, we get
\begin{equation}\label{eq:norm-bound}
  \bigl|\braket{A(L) | A(L)} - 1\bigr|
  \leq \sum_{\lambda_\alpha \neq 1} m_\alpha\, |\lambda_\alpha|^L
  \leq C\, |\lambda_2|^L,
\end{equation}
where $C = \sum_{\lambda_\alpha \neq 1} m_\alpha \leq \chi^2 - 1$, akin to~\cref{eq:norm-bound-diag}. Hence, in both cases we have
\begin{equation}\label{eq:A-norm}
  |\braket{A(L)|A(L)}-1| =  O\!\bigl(\chi^2\, e^{-L/\xi}\bigr),
\end{equation}
which justifies the assumption $\|\ket{A(L)}\| = 1$ for $L \gg \xi$.
For non-normal MPS ($g > 1$), the above properties hold independently within each block $A_b$ appearing in~\cref{eq:canonical-form}.
The detailed treatment is provided in~\cref{app:non-normal}.

\subsection{Derivation of the residual bound}
\label{app:residual-bound}

We now derive the bound on the norm of the residual
$R^{(q)} := P^{(q)} - P^{(\infty)}$ that was stated in~\cref{residual} of the main text, using the MPS transfer matrix $E$, its spectral properties, and the correlation length $\xi$ introduced above.

As before, when $E$ is diagonalisable we have
$E = S\Lambda S^{-1}$ with
$\Lambda = \mathrm{diag}(\lambda_1, \ldots, \lambda_{\chi^2})$,
such that $|\lambda_j| \leq e^{-1/\xi} <1$ for all $j \geq 2$.
Since $E^q = S\Lambda^q S^{-1}$, only the leading eigenvalue survives in the limit:
\begin{equation}\label{eq:Einfty-rank-one}
  E^\infty := \lim_{q\to\infty}E^q = S\,\mathrm{diag}(1,0,\ldots,0)\,S^{-1} = \dket{\rho}\dbra{\mathbb{I}_\chi} = \dket{\sigma^2}\dbra{\mathbb{I}_\chi},
\end{equation}
The rightmost equality identifies $\dket{\sigma^2}$ and $\dbra{\mathbb{I}_\chi}$ as the right and left eigenvectors of $E^\infty$ and corresponds, up to vectorisation, to $P^{(\infty)} = \sigma\otimes\mathbb{I}_\chi$ of \cref{eq:pinf_product_app}. In turn,
\begin{equation}
  E^q - E^\infty = S\,\mathrm{diag}(0,\lambda_2^q,\ldots,\lambda_{\chi^2}^q)\,S^{-1}.
\end{equation}
Note that the spectral norm of a diagonal matrix equals its largest entry in absolute value.  
Then, submultiplicativity of the spectral norm together with $|\lambda_j| \le e^{-1/\xi}$ for $j \ge 2$ from~\cref{eq:correll} gives
\begin{equation}
  \label{eq:Eq-Frob-bound}
  \|E^q - E^\infty\|_\infty
  \leq \|S\|_\infty\,\|S^{-1}\|_\infty\,\max_{j\geq 2}|\lambda_j^q|
  \leq \|S\|_\infty\,\|S^{-1}\|_\infty\,e^{-q/\xi}.
\end{equation}

Recall that $P^{(q)} = \sqrt{(A^{(q)})^\dag A^{(q)}}$ is the positive factor
in the polar decomposition of the blocked tensor $A^{(q)}$, with
\begin{equation}
  \label{eq:P2-as-Eq}
  P^{(q)} P^{(q)} = (A^{(q)})^\dag A^{(q)} = \sum_{\mathbf{i}} [A^{(q)}]^{\mathbf{i}\dag} [A^{(q)}]^{\mathbf{i}} \equiv E^q.
\end{equation}
That is, as per the discussion in~\cref{sec:sm-notation}, $(P^{(q)})^2$ (or its difference from the fixed point) is equivalent to the corresponding $E^q$ (or its difference from the fixed point), up to a change of view. In turn, their Frobenius norms are the same.
Hence,
\begin{equation}
  \label{eq:P2-bound}
  \|(P^{(q)})^2 - (P^{(\infty)})^2\|_2
  = \|E^q - E^\infty\|_2 \leq \chi \|E^q - E^\infty\|_{\infty}
  \leq \chi \|S\|_\infty\,\|S^{-1}\|_\infty\,e^{-q/\xi}.
\end{equation}
For the second inequality, we used~\cref{eq:ineq_schatten} together with the fact that the operators act on a space of dimension $\chi^2$. For the last inequality, we made use of~\cref{eq:Eq-Frob-bound}.

The identity $A^2 - B^2 = A(A - B) + (A - B)B$ applied with $A = P^{(q)}$, $B = P^{(\infty)}$ and $R^{(q)}=P^{(q)}- P^{(\infty)}$ yields:
\begin{equation}
  \label{eq:sylvester-Rq}
  P^{(q)}\,R^{(q)} + R^{(q)}\,P^{(\infty)}
  = (P^{(q)})^2 - (P^{(\infty)})^2.
\end{equation}

Let us take a Hilbert-Schmidt inner product
with the Hermitian matrix $(R^{(q)})^{\dag}$.
For the left hand side of~\cref{eq:sylvester-Rq}, we get
\begin{equation}
\begin{split}
\operatorname{tr}\!\bigl((R^{(q)})^\dag (   P^{(q)}\,R^{(q)} + R^{(q)}\,P^{(\infty)})\bigr)
  &= \operatorname{tr}\!\bigl( P^{(q)}\,R^{(q)}(R^{(q)})^\dag\bigr) + \operatorname{tr}\!\bigl(P^{(\infty)}\, (R^{(q)})^\dag R^{(q)}\bigr) \\
  &\geq  \bigl(\lambda_{\min}(P^{(q)}) + \lambda_{\min}(P^{(\infty)})\bigr)
    \|R^{(q)}\|_2^2.
\end{split}
\end{equation}
where we used that $\operatorname{tr}\!(AB)\geq \lambda_{\min}(A)\operatorname{tr}\!(B)$ as long as $B\geq0$.
For the right hand side of~\cref{eq:sylvester-Rq} and using Cauchy-Schwarz inequality, we get
\begin{equation}
\begin{split}
    \operatorname{tr}\!\bigl((R^{(q)})^\dag
      \bigl[(P^{(q)})^2 - (P^{(\infty)})^2\bigr]\bigr)
    \leq
    \|R^{(q)}\|_2\,\|(P^{(q)})^2 - (P^{(\infty)})^2\|_2.
\end{split}
\end{equation}
Taken together these yield:
\begin{equation}
\begin{split}
    \bigl(\lambda_{\min}(P^{(q)}) + \lambda_{\min}&(P^{(\infty)})\bigr)
    \|R^{(q)}\|_2^2 \leq
    \|R^{(q)}\|_2\,\|(P^{(q)})^2 - (P^{(\infty)})^2\|_2.
\end{split}
\end{equation}

Dividing by $\|R^{(q)}\|_2$ and dropping the non-negative
$\lambda_{\min}(P^{(q)})$ term, as $P^{(q)}\succ0$, we get
\begin{equation}
  \label{eq:Rq-upper}
  \|R^{(q)}\|_2
  \leq \frac{\|(P^{(q)})^2 - (P^{(\infty)})^2\|_2}
            {\lambda_{\min}(P^{(\infty)})}.
\end{equation}
Substituting~\cref{eq:P2-bound} into~\cref{eq:Rq-upper}, and 
recalling that $\|(P^{(\infty)})^{-1}\|_\infty = \lambda_{\min}(P^{(\infty)})^{-1}$, we finally obtain
\begin{equation}\label{eq:app_res_bound}
  \|R^{(q)}\|_2 \leq \Gamma\,e^{-q/\xi},
  \quad {\rm where} \quad
  \Gamma
  := \chi\,\|S\|_\infty\,\|S^{-1}\|_\infty \|(P^{(\infty)})^{-1}\|_\infty.
\end{equation}
Note that the definition of $\Gamma$ stated in~\cref{def:gamma}
of the main text uses the Frobenius norm rather than the spectral norm.
Since $\|\cdot\|_\infty \le \|\cdot\|_2$, the main text version yields looser (but still equally valid) bounds.

For completeness, we include the case of non-diagonalisable MPS transfer matrices, although this case is often not relevant in practice. In particular, none of the MPS instances generated during our numerical experiments~\cref{sec:numerics} produced a non-diagonalisable $E$.
If $E$ is not diagonalisable, we use the Jordan decomposition $E = S\,J\,S^{-1}$ and the power formula~\eqref{eq:jordan-power} introduced in~\cref{app:normal_mps}.
The limit $J^\infty := \lim_{q\to\infty} J^q$ has a single $1$ in the leading-eigenvalue position and zeros elsewhere, since $\lambda_1=1$ while all other $|\lambda_\alpha|<1$.
Since $\|N_\alpha^r\|_\infty = \|N_\alpha\|_\infty^r = 1$, taking norms of~\eqref{eq:jordan-power} and using $\binom{q}{r}\le q^r/r!$,
$|\lambda_\alpha|\le|\lambda_2|$, and $q^r\le q^{m_\alpha-1}$ for $r\le m_\alpha-1$, gives
\begin{equation}\label{eq:jordan-block-bound}
  \|J_\alpha(\lambda_\alpha)^q\|_\infty
  \le \sum_{r=0}^{m_\alpha-1} \binom{q}{r}\,|\lambda_\alpha|^{q-r}
  \le |\lambda_2|^q \sum_{r=0}^{m_\alpha-1} \frac{q^r}{r!\,|\lambda_2|^r}
  \le C\, q^{m_\alpha-1}\,|\lambda_2|^q,
\end{equation}
where $C := e^{1/|\lambda_2|}$ bounds the partial sum of its exponential series.
Let $m := \max\{m_\alpha:\ |\lambda_\alpha|<1\}$ be the largest Jordan block size among the
subdominant eigenvalues. Since $J^q - J^\infty$ is block diagonal with a vanishing leading block, its spectral norm equals the largest of the subdominant block norms, so~\eqref{eq:jordan-block-bound} gives $\|J^q - J^\infty\|_\infty \le C\,q^{m-1}\,|\lambda_2|^q$ and
\begin{equation}\label{eq:Ejordan-bound}
  \|E^q - E^\infty\|_\infty
  = \|S(J^q - J^\infty)S^{-1}\|_\infty
  \le C\, \|S\|_\infty\|S^{-1}\|_\infty\,  q^{m-1}\,|\lambda_2|^q
  = C\, \|S\|_\infty\|S^{-1}\|_\infty\,  q^{m-1}\,e^{-q/\xi},
\end{equation}
where we also used $E^\infty = S\,J^\infty\,S^{-1}$, submultiplicativity of $\|\cdot\|_\infty$, and $|\lambda_2| = e^{-1/\xi}$ from~\cref{eq:correll}.
Following the same steps as in the diagonalisable case, the bound~\eqref{eq:Ejordan-bound}
implies
\begin{equation}
  \|(P^{(q)})^2 - (P^{(\infty)})^2\|_2
  = \chi \|E^q - E^\infty\|_\infty
  \le C\, \chi \|S\|_\infty\|S^{-1}\|_\infty\, q^{m-1}\,e^{-q/\xi},
\end{equation}
Plugging this expression into the bound~\eqref{eq:Rq-upper} gives the
non-diagonalisable analogue of the residual estimate:
\begin{equation}
  \|R^{(q)}\|_2
  \le \Gamma\, q^{m-1}\,e^{-q/\xi},
  \quad {\rm where} \quad
  \Gamma := C \chi \|S\|_\infty\|S^{-1}\|_\infty\, \|(P^{(\infty)})^{-1}\|_\infty,
\end{equation}
with an additional prefactor $q^{m-1}$ compared to~\cref{eq:app_res_bound}, the absolute constant $C$ being absorbed into the redefinition of $\Gamma$.
The polynomial prefactor can be absorbed into a purely exponential bound at the cost of a slightly worse correlation length.
For any $\xi' > \xi$, define $\eta = 1/\xi - 1/\xi' > 0$.
The function $q \mapsto q^{m-1}\,e^{-\eta q}$ is maximized at $q^* = (m-1)/\eta$, giving
\begin{equation}
  \sup_{q \ge 0}\, q^{m-1}\,e^{-\eta q}
  = \left(\frac{m-1}{e\,\eta}\right)^{m-1}
  = \left(\frac{(m-1)\,\xi\,\xi'}{e\,(\xi'-\xi)}\right)^{m-1}
  =: C_m.
\end{equation}
Since $q^{m-1}\,e^{-q/\xi} = q^{m-1}\,e^{-\eta q} e^{-q/\xi'} \le C_m\,e^{-q/\xi'}$, the non-diagonalisable residual bound becomes
\begin{equation}
  \|R^{(q)}\|_2 \le \tilde{\Gamma}\,e^{-q/\xi'},
  \qquad
  \tilde{\Gamma} := \Gamma\,C_m.
\end{equation}
All results from the diagonalisable case therefore carry over with $\xi \to \xi'$ and $\Gamma \to \tilde{\Gamma}$.
In particular, the block size scaling $q = O(\xi\log(\Gamma L/\varepsilon))$ from~\cref{eq:scaling-q} becomes $q = O(\xi'\log(\tilde{\Gamma} L/\varepsilon))$, and likewise for the choice of $q_h$ in the OAAI construction.

However, the cost of this absorption can, in the worst case, be substantial.
Since $m$ can be as large as $\chi^2 - 1$ (the dimension of the subdominant eigenspace of $E$), the constant $C_m$ can be exponential in $\chi^2$:
\begin{equation}
  \log C_m = (m-1)\log\!\left(\frac{(m-1)\,\xi\,\xi'}{e\,(\xi'-\xi)}\right)
  = O\!\left(\chi^2\,\log\frac{\chi^2  \xi\,\xi'}{\xi'-\xi}\right).
\end{equation}
This contributes an $O(\chi^2\,\xi')$ term to the required block size $q$ (through $\xi'\log \tilde{\Gamma}$), which can dominate the $O(\xi\log(\Gamma L/\varepsilon))$ scaling of the diagonalisable case.
If one wishes to turn this bound into a concrete estimate, the free parameter $\xi'$ must be fixed. Choosing $\xi' = 2\xi$ sets $\xi'-\xi = \xi$ and renders $C_m$ a finite constant determined by $\xi$ and $m$.

In practice,
non-diagonalisable matrices form a measure-zero set in the space of
$\chi^2 \times \chi^2$ matrices,
and since $E$ depends smoothly on the MPS tensor $A$,
for generic $A$ the MPS transfer matrix is diagonalisable ($m = 1$) such that the tighter bound of~\cref{eq:app_res_bound} applies.
The non-diagonalisable bound above is included for mathematical completeness.

\subsection{Trace distance bound}
\label{app:trace-distance}

In this appendix, we wish to translate the previous bound~\eqref{eq:app_res_bound} onto a bound over the MPS-preparation error, and ultimately to derive the block-size prescription (\cref{eq:scaling-q} of the main text) guaranteeing $\varepsilon$ state-preparation error for~\cref{protocol:rg-approx-prep}.
Errors between states are quantified by the trace distance. In particular, as discussed in the main text, we have that
\begin{equation}\label{def:td_appendices}
D_{\rm tr}(\ket{P^{(\infty)}(L')},\ket{P^{(q)}(L')})
= \sqrt{1-|\braket{P^{(\infty)}(L')\,|\,P^{(q)}(L')}|^2}.
\end{equation}
Given that the trace distance is invariant under isometries,~\cref{def:td_appendices} is also the trace distance between the state prepared by~\cref{protocol:rg-approx-prep} and the targeted state.
In what follows our goal is thus to bound the overlap \begin{equation}
    |\braket{P^{(\infty)}(L')\,|\,P^{(q)}(L')}|.
\end{equation}
We note that a similar treatment appears in Ref.~\cite{piroli2021}. However, our definition of the residual $R^{(q)}$~\eqref{def:residual_app} differs slightly from the corresponding quantity used there. Moreover, because we care about dependences beyond the system size (in particular the prefactors depending on $\chi$ and $\Gamma$) we give a self-contained proof based on the residual bound above.
Throughout, we work in the normal case but the non-normal case can be treated similarly by applying the same steps within each block of the MPS transfer matrix, 
as we discuss in~\cref{app:non-normal}.

We start by rewriting the desired overlap as a trace. For that, define the operator
\begin{equation}\label{eq:mixed-transfer}
  E^{q,\infty} := \sum_{\mu=1}^{\chi^2} [P^{(q)}]^\mu \otimes \overline{[P^{(\infty)}]^\mu} \in \mathbb{C}^{\chi^2 \times \chi^2}.
\end{equation}
It follows that the overlap can be recast as
\begin{equation}\label{eq:overlap-trace}
  \braket{P^{(\infty)}(L')|P^{(q)}(L')} = \tr\bigl((E^{q,\infty})^{L'}\bigr).
\end{equation}
At $q = \infty$, the operator $E^{q,\infty}$ reduces to the operator $E^\infty = \dket{\sigma^2}\dbra{\mathbb{I}_\chi}$ encountered in~\cref{eq:Einfty-rank-one}. We further note that since $\dbraket{\mathbb{I}_\chi}{\sigma^2}=\tr(\sigma^2)=1$, this is a rank-one projector with a single eigenvalue $1$, and its non-zero eigenspace is spanned by the vector $\dket{\sigma^2}$, while its zero eigenspace is the kernel space $\ker(\dbra{\mathbb{I}_\chi})$. To analyse the spectrum of $E^{q,\infty}$, we consider perturbations of the spectrum $\operatorname{spec}(E^\infty) = (1, 0, \ldots, 0)$, and define such perturbation as
\begin{equation}\label{eq:Delta-decomp}
  \Delta^{q,\infty} := E^{q,\infty} - E^\infty = \sum_{\mu=1}^{\chi^2} [R^{(q)}]^\mu \otimes \overline{[P^{(\infty)}]^\mu}\in \mathbb{C}^{\chi^2 \times \chi^2}.
\end{equation}
Using the inequality between the spectral and Frobenius norms, alternative matrix views~\eqref{eq:frob_view} and submultiplicativity of the Frobenius norm, the spectral norm of the perturbation is bounded through
\begin{equation}\label{eq:Delta-q-bound}
  \|\Delta^{q,\infty}\|_\infty \le \|\Delta^{q,\infty}\|_2 =  \|R^{(q)}P^{(\infty)}\|_2\le \|R^{(q)}\|_2\,\|P^{(\infty)}\|_2 = O\!\bigl(\sqrt{\chi}\Gamma\, e^{-q/\xi}\bigr).
\end{equation}
The last step made use of the residual bound~\cref{eq:app_res_bound} previously derived together with the fact that $\|P^{(\infty)}\|_2 = \sqrt{\chi}$ following from $\tr(\sigma\sigma^\dag)= 1$.

The spectrum of $E^{q,\infty} = E^\infty + \Delta^{q,\infty}$ is then controlled by the Bauer-Fike theorem.
\begin{lemma}[Bauer-Fike~\cite{bauer1960}]\label{lem:bauer-fike}
Let $P \in \mathbb{C}^{n\times n}$ be diagonalisable such that $P = V \Lambda V^{-1}$ with $\Lambda = \mathrm{diag}(\Lambda_1, \ldots, \Lambda_n)$ and $V$ the matrix of eigenvectors. Let $A \in \mathbb{C}^{n\times n}$ be a perturbation.
Then every eigenvalue $\lambda$ of $P + A$ satisfies
\begin{equation}\label{eq:bauer-fike}
  \min_{j} |\lambda - \Lambda_j| \le \kappa(V)\,\|A\|_\infty,
\end{equation}
where $\kappa(V) := \|V\|_\infty\,\|V^{-1}\|_\infty$ is the condition number of the eigenvector matrix.
\end{lemma}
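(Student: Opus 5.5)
The plan is to move to the eigenbasis of $P$ and reduce the claim to an elementary statement about diagonal matrices. Let $\lambda$ be an eigenvalue of $P+A$. If $\lambda=\Lambda_j$ for some $j$, the left-hand side of~\cref{eq:bauer-fike} vanishes and there is nothing to prove. So I assume $\lambda\neq\Lambda_j$ for all $j$; then $\Lambda-\lambda\mathbb{I}$ is invertible and diagonal.

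Pick an eigenvector $x\neq 0$ with $(P+A)x=\lambda x$. Rearranging gives $(\lambda\mathbb{I}-P)x = Ax$. Substituting $P=V\Lambda V^{-1}$ and setting $y:=V^{-1}x\neq 0$, this becomes $V(\lambda\mathbb{I}-\Lambda)y = AVy$, hence
\begin{equation*}
y=(\lambda\mathbb{I}-\Lambda)^{-1}V^{-1}AV\,y .
\end{equation*}
Taking spectral norms and using submultiplicativity yields
\begin{equation*}
\|y\|\le \|(\lambda\mathbb{I}-\Lambda)^{-1}\|_\infty\,\|V^{-1}\|_\infty\,\|A\|_\infty\,\|V\|_\infty\,\|y\| .
\end{equation*}
Dividing by $\|y\|>0$ gives $1\le \|(\lambda\mathbb{I}-\Lambda)^{-1}\|_\infty\,\kappa(V)\,\|A\|_\infty$.

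The only remaining step, and the one place where care is needed, is evaluating the norm of the inverse diagonal matrix. As recalled earlier in the appendix, the spectral norm of a diagonal matrix equals its largest entry in absolute value. Therefore
\begin{equation*}
\|(\lambda\mathbb{I}-\Lambda)^{-1}\|_\infty=\max_j|\lambda-\Lambda_j|^{-1}=\bigl(\min_j|\lambda-\Lambda_j|\bigr)^{-1}.
\end{equation*}
Substituting this into the previous inequality gives $\min_j|\lambda-\Lambda_j|\le\kappa(V)\|A\|_\infty$, which is~\cref{eq:bauer-fike}.

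There is no real obstacle in this argument. The points to watch are the degenerate case $\lambda\in\operatorname{spec}(P)$, which is handled at the start, and the fact that the norm must be the operator $2$-norm, so that the diagonal-matrix identity applies.
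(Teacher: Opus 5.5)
Your proof is correct. The paper does not prove this lemma itself and only cites Bauer and Fike; your argument, which reduces the claim to the diagonal resolvent $(\lambda\mathbb{I}-\Lambda)^{-1}$ in the eigenbasis of $P$, is the standard proof of that cited result. One minor remark: you do not need to restrict to the spectral norm, since a diagonal matrix has induced norm equal to its largest entry in modulus for every induced $\ell^p$ norm, which is the generality of the original theorem.
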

Applying this lemma with $P = E^\infty$ and $A = \Delta^{q,\infty}$, and denoting as $V$ a matrix of eigenvectors of $E^\infty$, confines each eigenvalue of $E^{q,\infty}$ to within a distance $\kappa(V)\,\|\Delta^{q,\infty}\|_\infty$ from $\textrm{spec}(E^{\infty})=\{0, 1\}$. To progress further, we need to bound $\kappa(V)$. For an orthogonal projector this condition number would be one; however, a little more care is required as $E^{\infty}$ is not orthogonal.

Let us organise the matrix of eigenvectors as $V = (u, W)$ with $\dket{u} := \dket{\sigma^2}/\|\dket{\sigma^2}\|_2$ the normalised eigenvector corresponding to the eigenvalue $1$ and $W$ an orthonormal basis of eigenvectors (arranged column-wise) corresponding to the $0$ eigenvalue.
There is freedom in the choice of $W$, and we take the first vector (denoted $\dket{w_1}$) to be the orthogonal projection of $\dket{u}$ onto the zero eigensubspace (taking the convention that the overlap $\alpha=\dbraket{w_1}{u}>0$), such that $\dket{u}$ is orthogonal to the remaining vectors forming the basis of $W$. In turn, the Gram matrix $V^\dag V$ has a block-diagonal structure with a $2 \times 2$ upper-left matrix 
\begin{equation*}
  \begin{pmatrix}
    1 & \alpha \\
    \alpha & 1
  \end{pmatrix},
\end{equation*}
and the bottom-right part the identity $\mathbb{I}_{\chi^2-2}$. 
As discussed earlier, the zero eigensubspace is $\ker(\dbra{\mathbb{I}_\chi})$ and we can evaluate the overlap as
\begin{equation*}
  \alpha = \sqrt{1-m^2},
  \qquad
  m := \frac{\dbraket{\mathbb{I}_\chi}{\sigma^2}}{\|\dket{\sigma^2}\|\,\|\dket{\mathbb{I}_\chi}\|},
\end{equation*}
where $m$ is the normalised overlap between $\dket{\sigma^2}$ and $\dket{\mathbb{I}_\chi}$. Since $\dbraket{\mathbb{I}_\chi}{\sigma^2} = 1$, $\|\dket{\mathbb{I}_\chi}\| = \sqrt{\chi}$ and $\|\dket{\sigma^2}\|\leq 1$ we have 
\begin{equation}
    \frac{1}{m} = O(\sqrt{\chi}).
\end{equation}
Given the structure of $V^\dag V$, its largest eigenvalue is $1+\alpha$ such that $\|V\|_{\infty} = \sqrt{1+\alpha}$, and similarly $\|V^{-1}\|_{\infty} = 1/\sqrt{1-\alpha}$.
We can thus bound the condition number through
\begin{equation*}
  \kappa(V)
  = \sqrt{\frac{1+\alpha}{1-\alpha}}
  = \frac{1+\sqrt{1-m^2}}{m}
  = O(1/m) = O(\sqrt{\chi}).
\end{equation*}
    
Defining $\Gamma^{\prime\prime}:= \sqrt{\chi} \Gamma$, we obtain from~\cref{eq:Delta-q-bound}
\begin{equation}\label{eq:bauer-fike-radius}
  \delta := \kappa(V)\,\|\Delta^{q,\infty}\|_\infty = O\!\bigl(\Gamma^{\prime\prime}\,e^{-q/\xi}\bigr).
\end{equation}
For $q$ large enough such that $\delta \ll 1$, 
and by the continuity of eigenvalues of $E^\infty + t\Delta^{q,\infty}$ for $t$ varied from $0$ to $1$ (from $E^\infty$ to $E^{q,\infty}$), we see that the leading eigenvalue has to stay within $O(\delta)$ of $1$ while the remaining $\chi^2-1$ eigenvalues stay within $O(\delta)$ of $0$.
In other words, we can bound each eigenvalue through
\begin{equation}\label{eq:lambda1-perturb}
  \begin{cases}
    |\lambda_1(E^{q,\infty}) - 1| &\le \delta, \\
    |\lambda_j(E^{q,\infty})| &\le \delta \quad (j \ge 2),
  \end{cases}
\end{equation}
both of order $O(\Gamma^{\prime\prime}\,e^{-q/\xi})$ as per~\cref{eq:bauer-fike-radius}.
Expanding the trace by eigenvalues, we get
\begin{equation}\label{eq:trace-expansion}
  \tr\bigl((E^{q,\infty})^{L'}\bigr) = \lambda_1(E^{q,\infty})^{L'} + \sum_{j=2}^{\chi^2} \lambda_j(E^{q,\infty})^{L'}.
\end{equation}
In the regime where 
$L'\,\Gamma^{\prime\prime}\,e^{-q/\xi} \ll 1$
we see that \cref{eq:lambda1-perturb} gives
\begin{equation}\label{eq:top-eigval-expansion}
  \begin{cases}
    |\lambda_1(E^{q,\infty})^{L'} - 1| = O(L'\,\Gamma^{\prime\prime}\,e^{-q/\xi}), \\
    \Bigl|\sum_{j=2}^{\chi^2}\lambda_j(E^{q,\infty})^{L'}\Bigr| \le (\chi^2 - 1)\,\delta^{L'}.
  \end{cases}
\end{equation}
Given that $\delta=O(\Gamma^{\prime\prime}e^{-q/\xi})\ll1$, the second term is no larger than the leading $O(L'\delta)$ correction.
Therefore
\begin{equation}\label{eq:Tr-Eq-asymp}
  |\tr\bigl((E^{q,\infty})^{L'}\bigr) -1| = O\!\bigl(L'\,\Gamma^{\prime\prime}\,e^{-q/\xi}\bigr).
\end{equation}
Then, combining \cref{def:td_appendices,eq:overlap-trace,eq:Tr-Eq-asymp}, we obtain
\begin{equation}\label{eq:Dtr-from-overlap}
  D_{\rm tr}\bigl(\ket{P^{(q)}(L')},\,\ket{P^{(\infty)}(L')}\bigr) = O\!\left(\sqrt{L'\,\Gamma^{\prime\prime}\,e^{-q/\xi}}\right),
\end{equation}
Hence, a sufficient condition to ensure that $D_{\rm tr} \le \varepsilon$ is
\begin{equation}\label{eq:scaling-q-app}
q=O\left(\xi\log\left(\frac{\Gamma^{\prime\prime} L'}{\varepsilon^2}\right)\right) = O\left(\xi\log\left(\frac{\Gamma L}{\varepsilon}\right)\right)
\end{equation}
which is equivalent, up to constants inside the logarithm, to~\cref{eq:scaling-q} in the main text.
Note that to get the RHS in \cref{eq:scaling-q-app}, we replaced (i) $L'$ by $L$ as $L'=O(L)$, (ii) $\Gamma^{\prime\prime}$ by $\Gamma$ since they only differ by a polynomial in $\chi$ and $\Gamma = O(\chi)$~\eqref{eq:app_res_bound}, and (iii) $\varepsilon^2$ by $\varepsilon$ inside the big-$O$ logarithm.

\section{Block-encoding of the correction map}
\label{app:block-encoding}

In this appendix, we recall how a general linear map can be implemented on a quantum computer by embedding it into a larger unitary and post-selecting on an ancillary register~\cite{nibbi2024}.
We then apply this construction to the correction map $C^{(q)}$ used in~\cref{procedure:prob-mapping}.
Here $C^{(q)}$ denotes the linear operator on the virtual-pair space that maps the fixed-point operator $P^{(\infty)}$ to $P^{(q)}$, while $U_{C^{(q)}}$ denotes a unitary block-encoding of a rescaled version of that operator.

Consider an operator $B \in \mathbb{C}^{n \times n}$ with spectral norm $\|B\|_\infty \leq 1$, that is, $B$ is a contraction.
Assume that we have a unitary $U_B \in \mathbb{C}^{2n \times 2n}$ acting on the system and a single ancilla qubit, such that
\begin{equation}\label{eq:block-encoding-def}
  (\bra{0}_{\mathrm{anc}} \otimes \mathbb{I}_n)\, U_B\, (\ket{0}_{\mathrm{anc}} \otimes \mathbb{I}_n) = B.
\end{equation}

In matrix form, $B$ appears as the top-left $n \times n$ block of the $2n \times 2n$ unitary $U_B$.
For any input state $\ket{\psi_{\mathrm{in}}} \in \mathbb{C}^{n}$, applying $U_B$ to $\ket{0}_{\mathrm{anc}} \otimes \ket{\psi_{\mathrm{in}}}$ and post-selecting the ancilla onto $\ket{0}_{\mathrm{anc}}$ prepares the state
\begin{equation}\label{eq:block-encoding-state}
  \frac{B \ket{\psi_{\mathrm{in}}}}{\sqrt{p_{\mathrm{succ}}}},
\end{equation}
with success probability
\begin{equation}\label{eq:block-encoding-prob}
  p_{\mathrm{succ}} = \| B\ket{\psi_{\mathrm{in}}} \|^2.
\end{equation}
Since $B$ is a contraction, we see that $p_{\mathrm{succ}} \in [0,1]$.

It remains to show that $U_B$ exists and to construct it explicitly.
Write $U_B$ in block form:
\begin{equation}\label{eq:block-encoding-UA}
  U_B =
  \begin{pmatrix}
    B & M_1 \\
    M_2 & M_3
  \end{pmatrix},
\end{equation}
where $M_1, M_2 \in \mathbb{C}^{n \times n}$ and $M_3 \in \mathbb{C}^{n \times n}$.
The unitarity condition $U_B^\dag U_B = U_B U_B^\dag = \mathbb{I}_{2n}$ imposes:
\begin{align}
  B^\dag B + M_2^\dag M_2 &= \mathbb{I}_n, \label{eq:uu-col1}\\
  B B^\dag + M_1 M_1^\dag &= \mathbb{I}_n, \label{eq:uu-row1}\\
  B^\dag M_1 + M_2^\dag M_3 &= 0. \label{eq:uu-cross}
\end{align}
From~\eqref{eq:uu-col1}, $M_2^\dag M_2 = \mathbb{I}_n - B^\dag B \succeq 0$, so we can take
\begin{equation}\label{eq:M2-def}
  M_2 = \sqrt{\mathbb{I}_n - B^\dag B}.
\end{equation}
Similarly, from~\eqref{eq:uu-row1},
\begin{equation}\label{eq:M1-def}
  M_1 = \sqrt{\mathbb{I}_n - B B^\dag}.
\end{equation}
Note that both square roots are well-defined since $B$ is a contraction.
Finally, from~\eqref{eq:uu-cross},
\begin{equation}\label{eq:M3-def}
  M_3 = -M_2^{+}\, B^\dag\, M_1,
\end{equation}
where $M_2^{+}$ denotes the Moore-Penrose pseudo-inverse of $M_2$.
One can verify that this choice satisfies all unitarity conditions.
To apply this construction to the correction map, the case $\|B\|_\infty > 1$ must also be handled.
We rescale to $B/\alpha$ with $\alpha \geq \|B\|_\infty$ before embedding in a unitary $U_{B/\alpha}$. The state obtained after post-selection is the same as before~\eqref{eq:block-encoding-prob}
but with success probability
\begin{equation}\label{eq:block-encoding-prob-rescaled}
  p_{\mathrm{succ}} = \frac{1}{\alpha^2}\| B\ket{\psi_{\mathrm{in}}} \|^2,
\end{equation}
that depends both on the rescaling factor $\alpha$ and the norm $\|B \ket{\psi_{\rm in}}\|$.
In~\cref{procedure:prob-mapping}, we apply this construction to $B = C^{(q)}$ and $\alpha =\|C^{(q)}\|_\infty$.
The resulting unitary $U_{C^{(q)}/\alpha}$ is the block-encoding used in~\cref{eq:block-encoding-Cq}, and is reused for every block-encoded correction throughout the protocols.

\section{Qubit count and ancilla usage}
\label{app:qubit-count}

In this appendix, we show that (asymptotically) all the ancilla qubits required by our circuits, in particular in the context of \cref{protocol:exact-prep}, can be reused. Concretely, we show that the total number of qubits, needed at any point when constructing the circuits preparing the desired MPS, never exceeds the number of qubits $N_{\mathrm{phys}} = L\lceil\log_2 d\rceil$ that is required to store the final state $\ket{A(L)} \in \mathbb{C}^{d^L}$.
As before, $L' = L/q$ is the number of renormalised blocks such that for each renormalised block we shall not use more than the \emph{per-block} qubit budget $q\lceil\log_2 d\rceil$ qubits. 

The MPS preparation protocols start from the RG fixed-point state encoded in $N_{\mathrm{virt}} = L'\lceil\log_2(\chi^2)\rceil$ qubits (or $\lceil \log_2(\chi^2) \rceil$ qubits per renormalised block) and can be followed by:
\begin{itemize}
  \item \textit{Block-encoding of the correction map (\cref{procedure:prob-mapping}).}
    The unitary block-encoding of $C^{(q)}/\alpha$ requires one ancilla qubit per block, for a total of $L'$ ancillas.
    The per-block qubit count after this step is $\lceil\log_2(\chi^2)\rceil + 1$.

  \item \textit{Deterministic preparation via amplitude amplification (\cref{thm:det-aa}).}
    Making the correction step fully deterministic requires generalised reflections around the $\ket{0}^{\otimes L'}$ ancilla subspace.
    Implementing these reflections in depth $O(\log L')$ requires one additional qubit per block. The per-block qubit count after this step is $\lceil\log_2(\chi^2)\rceil + 2$.

  \item \textit{Sequential implementation with OAAI (\cref{thm:Vq-OAAI}).}
    The decomposition of~\cref{eq:Vq_decomp_chain_2} applies $(C^{(q)})^{-1}$ and $C^{(q_h)}$ in sequence, each contributing to one block-encoding ancilla per block before $V^{(q_h)}$ acts.
    The per-block qubit count for this step is thus $\lceil\log_2(\chi^2)\rceil + 2$.

  \item \textit{Renormalised implementation with OAAI (\cref{thm:Vq-RG-OAAI}).}
    The decomposition of~\cref{eq:Vq_rg} first applies $(C^{(q)})^{-1}$ and then splits each renormalised $q$-site block into $q/q_2$ smaller $q_2$-site blocks, on which the correction maps $C^{(q_2)}$ and the corresponding isometries $V^{(q_2)}$ act in parallel.
    Each smaller block is implemented using the sequential OAAI construction of~\cref{thm:Vq-OAAI}, and therefore requires $\lceil\log_2(\chi^2)\rceil + 2$ qubits: the $\chi^2$-dimensional virtual register and two ancillary qubits for the block-encoded corrections.
    Equivalently, this step uses $(q/q_2)(\lceil\log_2(\chi^2)\rceil + 2)$ qubits across the $q/q_2$ small blocks inside a renormalised $q$-site block.
\end{itemize}

Note that the first two items occur during the preparation of the renormalised state $\ket{P^{(q)}(L')}$, after which the ancillas are reset
and become available for the implementation of the isometries ($V^{(q)}$, $V^{(q_h)}$, or $V^{(q_2)}$) discussed in the two last items.
For the sequential constructions, the condition for the reuse of ancilla qubits is $q\lceil\log_2 d\rceil \ge \lceil\log_2(\chi^2)\rceil + 2$. For the renormalised ones this condition is rather $q_2\lceil\log_2 d\rceil \ge \lceil\log_2(\chi^2)\rceil + 2$.
Hence, at fixed $d$ and $\chi$ these conditions are always satisfied for large enough $q$ and $q_2$ (or equivalently, large enough $L$),
and no additional qubits beyond $N_{\mathrm{phys}}$ are required at any stage of the corresponding protocols.

\section{Non-normal MPS}
\label{app:non-normal}

This appendix details how the normal-MPS circuit constructions can be extended to non-normal MPS~\cite{cirac_2017,cirac2021mpspeps}.
Recall the block-diagonal canonical form~\eqref{eq:canonical-form} of an MPS from~\cref{app:normal_mps},
\begin{equation*}
  A^i = \bigoplus_{b=1}^{g} \nu_b\, A^{i}_b,
\end{equation*}
with blocks indexed by $b=1,\dots,g$ ($g>1$ in the non-normal case), and where the representatives $A_b$ are chosen to be inequivalent in the sense of Ref.~\cite{cirac_2017}.
Each $A_b^i \in \mathbb{C}^{\chi_b\times \chi_b}$ is a normal tensor with $\chi = \sum_{b=1}^{g} \chi_b$, and $\nu_b\in\mathbb{C}$.
In what follows, every block-indexed quantity (such as $A_b^{(q)}$, $P_b^{(q)}$, $P_b^{(\infty)}$, $\rho_b$, $\sigma_b$, $\Gamma_b$, $\xi_b$) denotes the corresponding object used in~\cref{app:normal_mps} restricted to the $b$-th normal block.
Furthermore, to keep track of the various dimensions involved, we decompose the full virtual-pair space as
\begin{equation}
  \mathcal H_{\rm full}:=
  \mathbb{C}^{\chi}\otimes\mathbb{C}^{\chi}
  =
  \mathcal H_{\rm diag}\oplus\mathcal H_{\rm off},
\end{equation}
where
\begin{equation}
  \mathcal H_{\rm diag}:=\bigoplus_{b=1}^{g}\mathcal H_b,
  \quad
  \mathcal H_b:=\mathbb{C}^{\chi_b}\otimes\mathbb{C}^{\chi_b},
  \quad
  \mathcal H_{\rm off}:=
  \bigoplus_{b\neq b'}
  \left(\mathbb{C}^{\chi_b}\otimes\mathbb{C}^{\chi_{b'}}\right).
\end{equation}
We denote by $\Pi_{\rm off}$ the projector onto $\mathcal H_{\rm off}$.

Since the shared block structure of the $A^i$ is preserved under products, the $L$-site MPS decomposes as
\begin{equation}
  \ket{A(L)} = \sum_{b=1}^{g} \nu_b^{L}\, \ket{A_b(L)},
  \label{eq:non-normal-state-b}
\end{equation}
up to normalisation of the state.
For each representative block $b$, the tensor blocked  over $q$ sites, $A_b^{(q)} \in \mathbb{C}^{d^q \times \chi^2}$, is defined through its entries as
\begin{equation}
  \bigl[A_b^{(q)}\bigr]^{\mathbf{i}}_{\alpha,\beta}
  := \bigl[A_b^{i_1}A_b^{i_2}\cdots A_b^{i_q}\bigr]_{\alpha,\beta}.
\end{equation}
Blocking preserves the block-diagonal structure of $A^i$ as
\begin{equation}
  [A^{(q)}]^{\mathbf{i}} = \bigoplus_{b=1}^{g} \nu_b^q\, [A_b^{(q)}]^{\mathbf{i}}.
  \label{eq:Aq-nonnormal-b}
\end{equation}
We can then apply the polar decomposition to the matrix $A^{(q)}$ to obtain
\begin{equation}
  A^{(q)} = V^{(q)} P^{(q)},
\end{equation}
with $P^{(q)} = \sqrt{(A^{(q)})^\dag A^{(q)}} \in \mathbb{C}^{\chi^2 \times \chi^2}$ 
a positive semidefinite operator on $\mathcal H_{\rm full}$,
and $V^{(q)} := A^{(q)}(P^{(q)})^+ \in \mathbb{C}^{d^q \times \chi^2}$ the corresponding isometry, where $(\cdot)^+$ denotes the pseudo-inverse.
Since $A^{(q)}$ is block diagonal on the virtual-pair space, the same is true for each renormalised physical-index slice of $P^{(q)}$: 
For each renormalised physical index $\mu=1,\dots,\chi^2$,
the slice $[P^{(q)}]^\mu \in \mathbb{C}^{\chi\times\chi}$ decomposes as
\begin{equation}
  [P^{(q)}]^{\mu} = \bigoplus_{b=1}^{g} |\nu_b|^q \, [P_b^{(q)}]^{\mu},
  \label{eq:non-normal-polar-b}
\end{equation}
where $[P_b^{(q)}]^\mu \in \mathbb{C}^{\chi_b\times\chi_b}$. 
The corresponding operator $P_b^{(q)} \in \mathbb{C}^{\chi^2\times\chi_b^2}$ is the $b$-th block column of $P^{(q)}$ after factoring out $|\nu_b|^q$, mapping $\mathcal H_b$ to $\mathcal H_{\rm full}$ and non-square whenever $g>1$.
Since distinct blocks never mix under multiplication of $A^i$, the operator $P^{(q)}$ vanishes on $\mathcal H_{\rm off}$ and is supported on $\mathcal H_{\rm diag}$, which has dimension $\sum_b\chi_b^2$.
In particular, $P^{(q)}$ has at least $\chi^2-\sum_b\chi_b^2$ zero singular values.
Within each block, the normal-case analysis~\eqref{eq:fixed_point_one} yields a density matrix $\rho_b \in \mathbb{C}^{\chi_b\times \chi_b}$ satisfying
\begin{equation}
  \sum_i A_b^i \rho_b A_b^{i\dag} = \rho_b,
\end{equation}
together with $\tr(\rho_b)=1$ and $\rho_b \succ 0$.
As before, we define $\sigma_b = \sqrt{\rho_b} \in \mathbb{C}^{\chi_b\times \chi_b}$.
As in the normal case, after rewriting the $b$-th block column with explicit virtual indices, after defining
\begin{align}
  P_b^{(\infty)} := \sigma_b \otimes \mathbb{I}_{\chi_b} \quad \textrm{and} \quad
  R_b^{(q)}  := P_b^{(q)} - P_b^{(\infty)},
  \label{eq:Pinf-entrywise-b}
\end{align}
one has
\begin{align}
  \|R_b^{(q)}\|_2 &\le \Gamma_b\, e^{-q/\xi_b},
\end{align}
where $\Gamma_b$ is a constant, analogous to $\Gamma$ in~\cref{eq:app_res_bound} for the $b$-th block,
and $\xi_b$ is the correlation length, analogous to $\xi$ in~\cref{eq:correll} for the $b$-th block.
To match the shape of $P_b^{(q)} \in \mathbb{C}^{\chi^2 \times \chi_b^2}$ in~\cref{eq:Pinf-entrywise-b}, $P_b^{(\infty)} \in \mathbb{C}^{\chi_b^2 \times \chi_b^2}$ is embedded as a map $\mathcal H_b\to\mathcal H_{\rm full}$ by padding the row index with zeros.
We now define the reference-point operator $P_{\mathrm{ref}}^{(q)}$ on $\mathcal H_{\rm full}$, the non-normal generalisation of the fixed-point operator $P^{(\infty)}$, by assembling the fixed points of the different blocks: for each renormalised physical index $\mu$,
\begin{equation}
  [P_{\mathrm{ref}}^{(q)}]^{\mu}
  = \bigoplus_{b=1}^{g} |\nu_b|^q\, [P_b^{(\infty)}]^{\mu}.
  \label{eq:Pref-nonnormal-b}
\end{equation}
The corresponding reference-point state on $L'$ renormalised sites decomposes as
\begin{equation}
  \ket{P_{\mathrm{ref}}^{(q)}(L')}
  = \sum_{b=1}^{g} c_b(L)\, \ket{P_b^{(\infty)}(L')},
\end{equation}
where $L=qL'$ and $c_b(L):=|\nu_b|^L/\sqrt{\sum_k|\nu_k|^{2L}}$, together with
\begin{equation}
  \ket{P_b^{(\infty)}(L')} = \ket{\sigma_b}^{\otimes L'}.
\end{equation}
Different representative blocks occupy orthogonal subspaces $\mathcal H_b$ of the renormalised physical (virtual-pair) space, such that $\braket{\sigma_b | \sigma_k} = \tr(\sigma_b^\dag \sigma_k) = \delta_{bk}$.
With this orthogonality in hand, the reference-point state can be prepared in constant depth using MCM-FF as discussed in~\cite{piroli2021,briegel2001persistent}.

As for the normal case, the correction map is defined as the map from the reference-point state $\ket{P^{(\infty)}(L')}$ to $\ket{P^{(q)}(L')}$.
Let $\iota_b:\mathcal H_b\to\mathcal H_{\rm full}$ be an inclusion map (a map that embeds elements of $\mathcal H_b$ into the corresponding block of $\mathcal H_{\rm full}$).
Then, its adjoint $\iota_b^\dag:\mathcal H_{\rm full}\to\mathcal H_b$ is the projector onto $\mathcal H_b$.
Each $\sigma_b$ is strictly positive, so $P_b^{(\infty)}$ is invertible on $\mathcal H_b$, and we define the block-wise correction map as
\begin{equation}
  C_b^{(q)} := P_b^{(q)}\bigl(P_b^{(\infty)}\bigr)^{-1}
  = \iota_b + R_b^{(q)}\bigl(P_b^{(\infty)}\bigr)^{-1} \in \mathbb{C}^{\chi^2 \times \chi_b^2},
  \label{eq:Cq-block-b}
\end{equation}
where $(P_b^{(\infty)})^{-1}$ denotes the inverse of $P_b^{(\infty)}$ restricted to $\mathcal H_b$.
Thus $C_b^{(q)}:\mathcal H_b\to\mathcal H_{\rm full}$.
The full correction map is then the operator on $\mathcal H_{\rm full}$
\begin{equation}
  C^{(q)} :=
  \sum_{b=1}^{g} C_b^{(q)} \iota_b^\dag
  \in \mathbb{C}^{\chi^2 \times \chi^2}.
  \label{eq:Cq-block-diag-b}
\end{equation}
Equivalently,
\begin{equation}
  C^{(q)}
  =
  \mathbb{I}_{\chi^2}
  +
  \sum_{b=1}^{g}
  R_b^{(q)}\bigl(P_b^{(\infty)}\bigr)^{-1}\iota_b^\dag .
\end{equation}
Using the triangle inequality, we obtain
\begin{equation}
  \|C^{(q)}\|_\infty
  \leq 1+\sum_{b=1}^g
  \Gamma_b\,\|(P_b^{(\infty)})^{-1}\|_\infty\,e^{-q/\xi_b}
  \leq 1+\Gamma e^{-q/\xi},
\end{equation}
where we have defined $\xi:=\max_b\xi_b$ and $\Gamma:=\sum_b\Gamma_b\,\|(P_b^{(\infty)})^{-1}\|_\infty$.
With these, the same block-encoding and post-selection constructions as in the normal case apply, with success probability controlled by $\|C^{(q)}\|_\infty$, which approaches one exponentially in $q$. That is, the results for the preparation of $\ket{P^{(q)}(L')}$ (\cref{sec:exact-prep}) can be ported to the non-normal TI-MPS case.

Now we turn our focus to the implementation of the isometries $V^{(q)}$.
For the sequential approach of \cref{subsec:vq_seq},
we can simply replicate the normal-case construction, without considering the block structure.
The depth scaling of the circuits is therefore the same:
\begin{equation}
  D_V^{\rm seq, NN}(q) = O(q\chi^4).
\end{equation}

For the \seqAA approach of \cref{subsec:SEQ_OAAI}, the 
factorization of \eqref{eq:Pq-inv-decomp} extends per block. Since $P^{(q)} = C^{(q)} P_{\mathrm{ref}}^{(q)}$ with both factors invertible on $\mathcal H_{\mathrm{diag}}$, the pseudo-inverse of the product reverses their order~\cite{greville1966}, giving
\begin{equation}
  P^{(q)+} = (P_{\mathrm{ref}}^{(q)})^+\,(C^{(q)})^+,
  \label{eq:Pq-inv-decomp-b}
\end{equation}
where $(P_{\mathrm{ref}}^{(q)})^+$ is block-diagonal:
\begin{equation}
  [(P_{\mathrm{ref}}^{(q)})^+]^i
  = \bigoplus_{b=1}^{g} |\nu_b|^{-q}\, [\sigma_b^{-1}\otimes \mathbb{I}_{\chi_b}]^{i}.
  \label{eq:Pref-inv-b}
\end{equation}
Blocking the first $q_h$ sites on the left and applying the same polar
decomposition as in the normal case, as seen in \cref{eq:Vq_decomp_chain_2}, we get
\begin{equation}
  V^{(q)}
  = V^{(q_h)}\cdot C^{(q_h)}\cdot A_{\mathrm{phase}}^{(q-q_h)}\cdot \bigl(C^{(q)}\bigr)^{-1},
  \label{eq:Vq-cancelled-NN}
\end{equation}
Here $(C^{(q)})^{-1}$ is the inverse of the 
correction map defined in \cref{eq:Cq-block-diag-b}, while  the chain of tensors $A_{\mathrm{phase}}^{(q-q_h)}$
is defined as
\begin{equation}
  A_{\mathrm{phase}}^{(q-q_h)}
  := \bigoplus_{b=1}^{g} e^{i(q-q_h)\theta_b}\, A_b^{(q-q_h)},
  \qquad e^{i\theta_b} := \nu_b/|\nu_b|.
  \label{eq:A-phase-NN}
\end{equation}
Since $|e^{i\theta_b}|=1$ and each $A_b$ is right-canonical, each of the $(q-q_h)$ tensors of
$A_{\mathrm{phase}}^{(q-q_h)}$ is again a right-canonical isometry, so the chain admits a sequential 
implementation using circuit depth $O((d\chi)^2)$ per tensor, similar to the
normal case.
As for the normal case, applying the exact OAAI then deterministically implements
$V^{(q)}$ and requires $O(1)$ repetitions. The resulting depth is thus
\begin{equation}
  D_V^{\seqAA,\mathrm{NN}}(q,q_h) = O(q\chi^2) + O(q_h\chi^4),
  \label{eq:depth-Vq-seqAA-NN}
\end{equation}
structurally identical to the normal-case formula provided in \eqref{eq:vq_seq_oaai_partial}:
the $O(q\chi^2)$ contribution is the depth of the sequential implementation of 
$A_{\mathrm{phase}}^{(q-q_h)}$ 
while the $O(q_h\chi^4)$ corresponds to the sequential implementation of $V^{(q_h)}$, together with 
 the two correction maps implemented by block-encodings. 

Finally, for the \renAA approach of \cref{subsec:ren-OAAI}, 
the product state $\ket{\sigma}^{\otimes(q/q_2 - 1)}$ of the normal case is replaced by the coherent superposition $\sum_b c_b(L)\, \ket{\sigma_b}^{\otimes(q/q_2 - 1)}$ that can be implemented in depth $O(\chi^2)$ using MCM-FF, as mentioned earlier in this appendix.
The depth scaling of the circuit is thus the same as \eqref{eq:depth-Vq-OAAI-ren}: 
\begin{equation}
  D_V^{\renAA, NN} = O(\chi^2\, \xi \log(\Gamma q)) + O(\chi^4\, \xi \log(\Gamma)).
\end{equation}
Overall, this shows that, albeit more demanding to deal with, the circuit depth of the circuits that exactly prepare the TI-MPS remains similar to the normal case.

\section{Preparability of non-TI MPS}
\label{app:nonTI}

In this section, we prove the logarithmic block-size scaling stated in~\cref{thm:random-nonTI} for IID non-TI MPS.

\subsection{Preliminary}
As in the main text, we consider a non-TI MPS with open boundary conditions and uniform bond dimension $\chi$:
\begin{equation}
  \ket{\tilde A(L)}
  =
  \raisebox{-0.25\height}{\hspace{1em}\includegraphics[width=0.3\textwidth]{figs/NTIMPS.pdf}\hspace{3pt},  \\[6pt] }
  \label{eq:nonTI_MPS_def_app}
\end{equation}
specified by site-dependent tensors $A^{[k]}\in\mathbb{C}^{\chi\times d\times \chi}$ (padding the first and last tensor).
These are assumed to be in left-canonical form, so for all $k$,
\begin{equation}\label{eq:nonti_leftcan}
  \sum_{s=1}^d [A^{[k]}]^{s\dag} [A^{[k]}]^s=\mathbb{I}_{\chi}.
\end{equation}
For an interval $B=[a,b]\subseteq\{1,\ldots,L\}$, define its length as
$q_B=b-a+1$ and let $A^{[a,b]}$ denote the blocking of all sites in $B$.
Using the same notations, its polar decomposition is given by
\begin{equation}\label{eq:polar_block}
  A^{[a,b]} = V^{[a,b]} P^{[a,b]},
\end{equation}
with $P^{[a,b]}$ positive semidefinite and $V^{[a,b]}$ a partial isometry.

For each site $k$, define the map
\begin{equation}
  \phi_k(X)
  := \sum_{s=1}^{d} A^{[k],s}\,X\,A^{[k],s\dag},
  \qquad X\in\mathbb{C}^{\chi\times\chi}.
  \label{eq:phi-k-random-nonTI}
\end{equation}
Given~\eqref{eq:nonti_leftcan}, we see that each $\phi_k$ is a quantum channel.
For $a\le b$, define the composition of maps
\begin{equation}
  \Phi_{a,b}
  := \phi_{a}\circ\phi_{a+1}\circ\cdots\circ\phi_{b}.
  \label{eq:block-channel-random-nonTI}
\end{equation}
Under vectorisation, operators are treated as vectors and channels are represented by matrices. Given a channel $\Lambda$ we denote its representative matrix as $\hat{\Lambda}$.
In particular, taking the convention that $\ket{i}\bra{j} \mapsto \ket{i}\otimes \ket{j}$, we have that the channel $\phi_k$~\eqref{eq:phi-k-random-nonTI} is represented by the MPS transfer matrix of $A^{[k]}$, namely
\begin{equation}
  E_k := \sum_{s=1}^{d} A^{[k],s}\otimes \overline{A^{[k],s}} \in \mathbb{C}^{\chi^2 \times \chi^2}.
\end{equation}
That is, $\hat{\phi}_k = E_k$ and $\hat{\Phi}_{a,b} = E_a E_{a+1}\cdots E_b=:E_{a,b}$.
This identifies the channel composition used in Ref.~\cite{Movassagh} with the MPS transfer matrix of a block of tensors.
When a physical index of the blocked positive factor is fixed, the same block MPS transfer matrix can also be expressed as
\begin{equation}\label{eq:polar-block-transfer}
    E_{a,b} = \sum_{\mu=1}^{\chi^2} [P^{[a,b]}]^\mu\otimes \overline{[P^{[a,b]}]^\mu} \in \mathbb{C}^{\chi^2 \times \chi^2}.
\end{equation}

We use two related norm conventions. For an operator $X$, $\|X\|_{p}$ denotes the Schatten $p$-norm, namely the $\ell_p$ norm of its singular values. In addition to the Frobenius norm $\|X\|_2$ and the spectral norm $\|X\|_\infty$ that we already encountered, we have the trace norm $\|X\|_1$. For $1\leq p \leq q < \infty$,
\begin{equation}
    \| X \|_q \leq \| X \|_p \leq \operatorname{rank}(X)^{1/p - 1/q} \| X \|_q.
\end{equation}
One specialization of this inequality was already encountered in~\cref{eq:ineq_schatten}. For $X \in \mathbb{C}^{D \times D}$, we also have
\begin{equation}\label{ineq_schatten_spec}
    \| X \|_2 \leq \| X \|_1 \leq \sqrt{D} \| X \|_2.
\end{equation}
For a linear map $\Lambda$ acting on operators, the induced Schatten $p$-norm is defined as
\begin{equation}
    \| \Lambda \|_{p \rightarrow p} := \max_{X: \|X\|_p=1} \|\Lambda(X)\|_{p},
\end{equation}
which is distinct from the Schatten norm $\|\hat{\Lambda}\|_p$ of the channel represented as a matrix. 
Still, they can sometimes be related. In particular, under our vectorisation convention we have that
\begin{equation}
    \|\Lambda\|_{2 \rightarrow 2} = \|\hat{\Lambda}\|_{\infty}.
\end{equation}
For a channel $\Lambda: \mathbb{C}^{D \times D} \rightarrow \mathbb{C}^{D \times D}$, making repeated use of \cref{ineq_schatten_spec} yields the bound
\begin{equation}\label{eq:app_bound_11}
\begin{aligned}
  \| \hat{\Lambda}\|_{\infty}
  &= \max_{X: \|X\|_2=1} \|\Lambda(X)\|_{2} \\
  &\leq \max_{X: \|X\|_1=\sqrt{D}} \|\Lambda(X)\|_{2} \\
  &\leq \max_{X: \|X\|_1=\sqrt{D}} \|\Lambda(X)\|_{1} \\
  &\leq \sqrt{D} \|\Lambda\|_{1 \rightarrow 1}.
\end{aligned}
\end{equation}

Finally, we highlight that, in what follows, we regard the tensors $A^{[k]}$, and thus the maps $\phi_k$~\eqref{eq:phi-k-random-nonTI}, as random variables drawn from a fixed distribution. When emphasising this randomness, we write the corresponding quantities as functions of a random realisation $\omega$.

\subsection{Preparation of random MPS}
\label{app:random-nonTI}

Theorem~1 of Ref.~\cite{Movassagh} holds for a sequence of \emph{ergodic} quantum channels $\phi_0,\phi_1,\ldots$ satisfying two additional technical assumptions (\emph{Assumptions A.1 and A.2} of Ref.~\cite{Movassagh}).
The first one requires that some finite product of channels is strictly positive with nonzero probability.
The second one excludes nonzero positive operators in the kernel of the dual channel.
They are expected to hold whenever the dynamics has non-negligible decoherence.
For example, in the IID Haar-channel model considered in Appendix~B of Ref.~\cite{Movassagh}, each channel is strictly positive almost surely, so both assumptions are satisfied almost surely.
Adapting Theorem~1 of Ref.~\cite{Movassagh} to our notation gives the following.

\begin{theorem}[Movassagh-Schenker, adapted notation]
\label{thm:movassagh-adapted}
Assume that the sequence $\{\phi_k\}$ is ergodic and satisfies the two technical assumptions stated above with respect to the ordering used in~\cref{eq:block-channel-random-nonTI}.
Then there exists a constant $\tilde{\xi}>0$ (akin to a correlation length), random prefactors $\kappa_a(\omega)<\infty$, and an ergodic sequence of positive-definite density matrices
$\rho_a(\omega)\succ0$ (almost surely) such that, for every density matrix $X$ and every interval $[a,b]$,
\begin{equation}
  \bigl\|\phi_a\circ\phi_{a+1}\circ\cdots\circ\phi_b(X)-\rho_a(\omega)\bigr\|_1
  \le \kappa_a(\omega)\,e^{-(b-a+1)/\tilde{\xi}}.
  \label{eq:movassagh-adapted-bound}
\end{equation}
\end{theorem}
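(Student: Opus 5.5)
The plan is to obtain the statement as a change of notation in Theorem~1 of Ref.~\cite{Movassagh}, rather than to reprove it. I will proceed in three steps.

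\textbf{Step 1: match the composition order.} In Ref.~\cite{Movassagh} the relevant object is a product of channels in which the newly added channel always acts on the same side. Their theorem states that such a product is exponentially close, in the induced norm, to a rank-one \emph{replacement} channel $X\mapsto\tr(X)\,Z(\omega)$, where $Z(\omega)$ is a random density matrix that depends only on the end of the product that stays fixed. Our composition $\Phi_{a,b}=\phi_a\circ\cdots\circ\phi_b$ keeps $a$ fixed and grows the product by adding new channels on the innermost side, at $b$. This is the ``backward'' ordering, in which the outermost channel is held fixed.
\begin{itemize}
\item If their theorem is stated for the opposite ordering, I would pass to the time-reversed sequence $\phi'_k:=\phi_{-k}$. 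For this I first extend the sequence to a two-sided stationary sequence indexed by $\mathbb{Z}$, which is possible for IID channels and, more generally, by Kolmogorov extension for any stationary sequence.
\item Time reversal preserves both stationarity and ergodicity, because the reversed shift is the inverse of the original measure-preserving, ergodic shift.
\item Assumptions A.1 and A.2 only involve finite products of channels and the dual channels, so they transfer to the reversed sequence. This is the sense of the phrase ``with respect to the ordering used'' in the statement.
\end{itemize}

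\textbf{Step 2: specialise the operator-level bound to states.} Applying their result at the starting index $a$, I get a random density matrix $\rho_a(\omega):=Z_a(\omega)$ and a random, almost surely finite constant $K_a(\omega)$ such that, for every $n=b-a+1$,
\begin{equation*}
\bigl\|\Phi_{a,b}-\rho_a\tr(\cdot)\bigr\|\le K_a(\omega)\,e^{-\mu n}.
\end{equation*}
The deterministic rate $\mu>0$ comes from the Lyapunov-type gap in their theorem. It is deterministic because ergodicity makes the relevant limit almost surely constant. I then set $\tilde\xi:=1/\mu$.

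To reach the stated form:
\begin{itemize}
\item For a density matrix $X$ we have $\tr X=1$, so $\Phi_{a,b}(X)-\rho_a=(\Phi_{a,b}-\rho_a\tr(\cdot))(X)$.
\item Since $\|X\|_1=1$, the trace-norm deviation is bounded by the $1\to1$ induced norm of this difference.
\item If their norm is a different one, for example the induced $\infty$ or $2$ norm, I convert using the Schatten inequalities stated just above, such as \cref{ineq_schatten_spec} and \cref{eq:app_bound_11}. These cost only factors $\mathrm{poly}(\chi)$, which I absorb into $\kappa_a(\omega):=\mathrm{poly}(\chi)\,K_a(\omega)$.
\end{itemize}

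\textbf{Step 3: remaining properties of $\rho_a$.} Positive definiteness, $\rho_a\succ0$ almost surely, should follow from assumption A.1. Some finite product of channels is strictly positive with nonzero probability, and by ergodicity such a product occurs almost surely somewhere inside the block. Each such occurrence maps every state to a full-rank state, so the limit $\rho_a$ has full rank. The sequence $(\rho_a)_a$ is ergodic because $\rho_a(\omega)=\rho_0(T^a\omega)$ for the shift $T$; the same relation shows that the law of $\kappa_a$ is independent of $a$.

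\textbf{Expected main obstacle.} The hard part is bookkeeping rather than analysis. I need to check that the paper's convention $\ket{i}\bra{j}\mapsto\ket{i}\otimes\ket{j}$, together with the Schrödinger-picture maps $\phi_k$, corresponds to the ordering and picture in which Ref.~\cite{Movassagh} proves convergence to a replacement channel, and not to the dual statement about convergence to the identity in the Heisenberg picture. I also need to confirm that their bound holds uniformly in $b$ with a single random constant, rather than only as an asymptotic $\limsup$ statement. If it is only the latter, I would define
\begin{equation*}
\kappa_a:=\sup_{n\ge1} e^{\mu' n}\bigl\|\Phi_{a,a+n-1}-\rho_a\tr(\cdot)\bigr\|
\end{equation*}
for some $\mu'<\mu$. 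This is finite almost surely because the $\limsup$ bound controls all sufficiently large $n$ and the finitely many small $n$ give finite values. I would then set $\tilde\xi=1/\mu'$.
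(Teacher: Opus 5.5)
Your Steps 1--2 follow the paper's proof. The paper simply invokes Theorem~1 of \cite{Movassagh} in the ordering of $\Phi_{a,b}=\phi_a\circ\cdots\circ\phi_b$, renames $Z_a\to\rho_a$ and $C_a\to\kappa_a$, and writes $\mu=e^{-1/\tilde\xi}$. Your additional bookkeeping is a more explicit version of the same argument and is sound: the time reversal $\phi'_k=\phi_{-k}$, the passage from the induced norm to density matrices via $\operatorname{tr}X=1$, and the $\sup$-definition of $\kappa_a$ for the case where only a $\limsup$ bound is available. One small point: \cref{eq:app_bound_11} bounds the $2\to2$ norm by the $1\to1$ norm, which is the opposite direction to the one you need. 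The reverse bound $\|\Lambda\|_{1\to1}\le\sqrt{D}\,\|\hat\Lambda\|_\infty$ does follow from the same Schatten inequalities, so the $\mathrm{poly}(\chi)$ absorption is fine.

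\textbf{Step 3 fails as written.} Your argument for $\rho_a\succ0$ has two problems.
\begin{itemize}
\item A strictly positive product $\phi_j\circ\cdots\circ\phi_{j+N-1}$ occurring at some $j>a$ is still followed by the outer channels $\phi_{j-1},\dots,\phi_a$. A general channel can map a full-rank state to a singular one, so strict positivity "somewhere inside" gives nothing by itself.
\item A limit of full-rank states need not be full rank.
\end{itemize}

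The missing ingredient is Assumption A.2. A nonzero $Y\succeq0$ with $\phi^\ast(Y)=0$ exists if and only if $\phi(\mathbb{I})$ is singular. Hence A.2 says exactly that each $\phi_k$ maps positive-definite states to positive-definite states.

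With this, argue via the exact consistency relations $\rho_a=\Phi_{a,j-1}(\rho_j)$ and $\rho_j=\Phi_{j,j+N-1}(\rho_{j+N})$, which follow from the convergence bound and continuity, rather than via limits. Strict positivity gives $\rho_j\succ0$. Propagating through $\phi_{j-1},\dots,\phi_a$ using A.2 then gives $\rho_a\succ0$.

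The almost-sure existence of such a $j\ge a$ uses A.1 in the paper's ordering together with ergodicity. It only guarantees that $j$ lies somewhere in $[a,\infty)$, not inside a given finite block. Alternatively, do as the paper does and read $\rho_a\succ0$ off the cited theorem along with the rest of its conclusion, instead of re-deriving it.
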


\begin{proof}
This is Theorem~1 of Ref.~\cite{Movassagh} applied in the ordering of~\cref{eq:block-channel-random-nonTI}, after renaming $Z_a$ to $\rho_a$ and $C_a$ to $\kappa_a$, the latter to avoid overloading the symbol $C$ already used for the correction map,
and recasting $\mu \in (0,1)$ as $e^{-1/\tilde{\xi}}$ with $\tilde{\xi}>0$.
\end{proof}

Let us comment on the previous. 
First,~\cref{eq:movassagh-adapted-bound} holds for arbitrary density matrices $X$, and can be rewritten as a bound over distances between channels:
\begin{equation}
  \bigl\|\Phi_{a,b}(\cdot)-\rho_a(\omega)\operatorname{tr}(\cdot)\bigr\|_{1 \rightarrow 1}
  \le \kappa_a(\omega)\,e^{-(b-a+1)/\tilde{\xi}},
  \label{eq:movassagh-adapted-bound-channel}
\end{equation}
where the channel $X\mapsto \rho_a(\omega)\operatorname{tr}(X)$  is the replacement channel.
Second, for each random realisation of the channels, a different value of $\kappa_a(\omega)$ applies.
This value can be treated as fixed for that realisation, although in general it cannot be evaluated analytically.
Similarly, the distribution of $\rho_a(\omega)$ is generally unknown.
Thus, except in special ensembles, \cref{thm:movassagh-adapted} is existential.
For concrete computations one can identify a proxy for $\rho_a(\omega)$ as detailed in the following remark.

\begin{remark}[Practical computation of $\sigma_n$]
\label{rem:practical-sigma}
The fixed-point density matrix $\rho_a(\omega)$ is the limiting state associated with the left endpoint $a$ of the channel product in~\cref{eq:block-channel-random-nonTI}.
In practice, for a finite block $B=[a,b]$, one computes $\Phi_{a,b} = \phi_a \circ \cdots \circ \phi_b$~\eqref{eq:block-channel-random-nonTI} and finds the density matrix $\rho_a$ that minimises
\begin{equation}
  \bigl\|\Phi_{a,b}(\cdot) - \rho_a \operatorname{tr}(\cdot)\bigr\|_{1\rightarrow 1},
\end{equation}
i.e., the best rank-one approximation to $\Phi_{a,b}$.
For a sufficiently large block size $q_B$, this is well approximated by the leading right eigenvector of the block MPS transfer matrix $E_{a,b}$.
The fixed-point state is then obtained as $\sigma_a = \sqrt{\rho_a}$.
\end{remark}
 
Finally, the ergodicity requirement includes IID channels and the translationally invariant case as special cases (more details regarding ergodic channels can be found in Ref.~\cite{Movassagh}).
For stationary ensembles, the marginal distributions of $\rho_a(\omega)$ and $\kappa_a(\omega)$ do not depend on $a$.

\subsubsection{From ergodic convergence to blockwise fixed points}

We now connect \cref{thm:movassagh-adapted} to our preparation protocol with the following theorem.

\begin{theorem}[Blockwise fixed-point convergence for ergodic MPS]
\label{thm:ergodic-blockwise-fixedpoint}
Consider a non-TI tensor~\eqref{eq:nonTI_MPS_def_app} generated by a sequence of tensors $\{A^{[k]}\}$ whose corresponding channels~\eqref{eq:phi-k-random-nonTI} are ergodic and satisfy the assumptions of \cref{thm:movassagh-adapted}.
Then there exists a constant $\tilde{\xi}>0$ (akin to a correlation length), random prefactors $\kappa_a(\omega)<\infty$, and an ergodic sequence of positive-definite density matrices
$\rho_a(\omega)\succ0$ (almost surely) such that with
\begin{equation}
 P^{(\infty)}_a:=\sigma_a(\omega)\otimes\mathbb{I}_{\chi} \quad {\rm where} \quad \sigma_a(\omega) := \sqrt{\rho_a(\omega)},
\end{equation}
for every block $B=[a,b]$ of length $q_B=b-a+1$ the following bound holds:
\begin{equation}
   \|P^{[a,b]} - P^{(\infty)}_a\|_2 =: \|R^{[a,b]}\|_2
  \le
  \Gamma_a(\omega)\,e^{-q_B/\tilde{\xi}},
  \label{eq:polar-bound-random-nonTI}
\end{equation}
where $\Gamma_a(\omega) := \chi^{3/2}\, \kappa_a(\omega)/\lambda_{\min}(\sigma_a(\omega))$ is a random prefactor.
\end{theorem}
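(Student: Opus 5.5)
The plan is to mirror the TI residual derivation of~\cref{app:residual-bound}, with~\cref{thm:movassagh-adapted} supplying the exponential convergence in place of the spectral decomposition of $E$. There are three steps: (i) turn the channel bound~\cref{eq:movassagh-adapted-bound-channel} into a bound on the transfer-matrix difference $E_{a,b}-\hat{E}^\infty_a$, where $\hat{E}^\infty_a := \dket{\rho_a}\dbra{\mathbb{I}_\chi}$ represents the replacement channel $X\mapsto\rho_a\tr(X)$; (ii) convert this into a bound on $(P^{[a,b]})^2-(P^{(\infty)}_a)^2$ using the change of matrix view; (iii) control $R^{[a,b]}$ itself through the Sylvester-type identity~\cref{eq:sylvester-Rq}, exactly as in~\cref{eq:Rq-upper}.

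For step (i), the matrix representing $\Phi_{a,b}-\rho_a\tr(\cdot)$ is $E_{a,b}-\dket{\rho_a}\dbra{\mathbb{I}_\chi}$, since $\dbra{\mathbb{I}_\chi}$ implements the trace. Applying~\cref{eq:app_bound_11} with $D=\chi$ gives
\begin{equation*}
\|E_{a,b}-\dket{\rho_a}\dbra{\mathbb{I}_\chi}\|_\infty\le\sqrt{\chi}\,\kappa_a(\omega)e^{-q_B/\tilde\xi}.
\end{equation*}
This matrix acts on a $\chi^2$-dimensional space, so~\cref{eq:ineq_schatten} gives an extra factor $\chi$ when passing to the Frobenius norm, for a total of $\chi^{3/2}$, which matches the prefactor in $\Gamma_a$.

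For step (ii), I use~\cref{eq:polar-block-transfer}: $E_{a,b}$ is a matrix view of $(P^{[a,b]})^\dag P^{[a,b]}=(P^{[a,b]})^2$, via the Gram-matrix relation of~\cref{sec:sm-notation}. I also need to check that $\dket{\sigma_a^2}\dbra{\mathbb{I}_\chi}$ is a view of $(\sigma_a\otimes\mathbb{I}_\chi)^2=\rho_a\otimes\mathbb{I}_\chi$. This is the same index computation as~\cref{eq:Einfty-rank-one}: the entries $[\rho_a]_{\alpha\gamma}\delta_{\beta\delta}$ are rearranged between the groupings $(\alpha,\gamma),(\beta,\delta)$ and $(\alpha,\beta),(\gamma,\delta)$. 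The main obstacle is to get the index grouping consistent with the left-canonical convention and with the ordering $\phi_a\circ\cdots\circ\phi_b$. In particular I must confirm that the fixed point sits on the \emph{left} leg (index $a$), so that $P^{(\infty)}_a=\sigma_a\otimes\mathbb{I}_\chi$ rather than $\mathbb{I}_\chi\otimes\sigma_a$. I will settle this by checking the TI special case, where $\rho_a=\rho$ and the construction must reduce to~\cref{eq:pinf_product_app}. Because a change of view preserves the Frobenius norm, this step yields
\begin{equation*}
\|(P^{[a,b]})^2-(P^{(\infty)}_a)^2\|_2\le\chi^{3/2}\kappa_a e^{-q_B/\tilde\xi}.
\end{equation*}

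For step (iii), both $P^{[a,b]}\succeq0$ and $P^{(\infty)}_a\succ0$ hold almost surely, since $\rho_a\succ0$ by~\cref{thm:movassagh-adapted}. I write the identity $P^{[a,b]}R+RP^{(\infty)}_a=(P^{[a,b]})^2-(P^{(\infty)}_a)^2$ with $R=R^{[a,b]}$, take the Hilbert–Schmidt inner product with $R$, and drop the nonnegative term $\tr(P^{[a,b]}RR^\dag)$. Cauchy–Schwarz then gives
\begin{equation*}
\lambda_{\min}(P^{(\infty)}_a)\,\|R\|_2\le\|(P^{[a,b]})^2-(P^{(\infty)}_a)^2\|_2.
\end{equation*}
Since $\lambda_{\min}(P^{(\infty)}_a)=\lambda_{\min}(\sigma_a)$, combining with step (ii) gives~\cref{eq:polar-bound-random-nonTI} with $\Gamma_a=\chi^{3/2}\kappa_a/\lambda_{\min}(\sigma_a)$. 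Here $\Gamma_a$ is finite almost surely, because $\kappa_a<\infty$ and $\sigma_a\succ0$.
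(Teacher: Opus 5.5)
Your proposal is correct and follows the paper's proof step for step. The Movassagh--Schenker channel bound becomes $\|E_{a,b}-\dket{\rho_a}\dbra{\mathbb{I}_\chi}\|_\infty\le\sqrt{\chi}\,\kappa_a e^{-q_B/\tilde\xi}$ via \cref{eq:app_bound_11}; a change of matrix view plus the rank factor $\chi$ gives the $\chi^{3/2}$ Frobenius bound on $(P^{[a,b]})^2-\rho_a\otimes\mathbb{I}_\chi$; and the Sylvester-identity argument of \cref{eq:sylvester-Rq,eq:Rq-upper} supplies the factor $1/\lambda_{\min}(\sigma_a)$, while the index-grouping check you flag in step (ii) is the same one the paper delegates to its TI computation, and any transpose or conjugation it introduces is harmless because $\|\cdot\|_2$ and $\lambda_{\min}(\sigma_a)$ are invariant under it.
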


\begin{proof}
By using~\cref{thm:movassagh-adapted}, vectorisation of the channels, and 
~\cref{eq:app_bound_11} (for $D=\chi$), we get
\begin{equation}
  \bigl\|\, E_{a,b} - \dket{\rho_a(\omega)}\dbra{\mathbb{I}_{\chi}}\,\bigr\|_{\infty}
  \le \sqrt{\chi}\, \kappa_a(\omega)\,e^{-(b-a+1)/\tilde{\xi}}.
  \label{eq:squared-bound-random-nonTI}
\end{equation}
Using~\eqref{eq:polar-block-transfer}, and changing the matrix view, as done in~\cref{eq:P2-bound}, we get
\begin{equation}
  \label{eq:P2-bound-random-nonTI}
  \|(P^{[a,b]})^2 - \rho_a(\omega)\otimes\mathbb{I}_{\chi}\|_2
  \leq \chi^{3/2}\, \kappa_a(\omega)\,e^{-(b-a+1)/\tilde{\xi}}.
\end{equation}
The fixed-point operator $P^{(\infty)}_a$ is invertible on the virtual-pair space since $\rho_a(\omega)\succ0$ (almost surely) from \cref{thm:movassagh-adapted}.
Through the same derivation as~\cref{eq:sylvester-Rq,eq:Rq-upper},
\begin{equation}
  \label{eq:Rq-upper-nonti}
  \|P^{[a,b]} - P^{(\infty)}_a\|_2 =: \|R^{[a,b]}\|_2
  \leq \frac{\|(P^{[a,b]})^2 - (P^{(\infty)}_a)^2\|_2}
            {\lambda_{\min}(P^{(\infty)}_a)}
  \leq \frac{\chi^{3/2}\, \kappa_a(\omega)}{\lambda_{\min}\!\bigl(\sigma_a(\omega)\bigr)}\,e^{-(b-a+1)/\tilde{\xi}},
\end{equation}
which is~\eqref{eq:polar-bound-random-nonTI} after identifying $q_B=b-a+1$.
\end{proof}

For each interval $B=[a,b]$, define the correction map
\begin{align}\label{eq:def-Cq-nonTI-app}
  P^{[a,b]} = C^{[a,b]} P_a^{(\infty)}.
\end{align}
In turn, following~\cref{eq:spect_Cq} and using~\cref{eq:polar-bound-random-nonTI}, its distance to the identity can be bounded through
\begin{equation}
  \|C^{[a,b]}-\mathbb{I}_{\chi^2}\|_\infty
  \le
  \|R^{[a,b]}\|_2\,\|(P^{(\infty)}_a)^{-1}\|_\infty
  \le
  \Gamma'_a(\omega)\,e^{-q_B/\tilde{\xi}},
  \label{eq:C-bound-random-nonTI}
\end{equation}
where $\Gamma'_a(\omega) := \frac{\Gamma_a(\omega)}{\lambda_{\min}(\sigma_a(\omega))}$.
With this bound, we can relate the success probability of a block-encoded correction map to the interval length $q_B$.

\subsubsection{
Proof of~\cref{thm:random-nonTI}
}\label{proof_th6_app}
For convenience, we restate the \cref{thm:random-nonTI} of the main text that is proven below.
\begin{theorem}
Assume that the random tensors $\{ A^{[k]}\}_{k=1}^{L}$ in~\eqref{eq:nonTI_MPS_def} are IID and satisfy the non-negligible-decoherence condition of Ref.~\cite{Movassagh}.
Then there exists a
constant $\tilde{\xi}>0$, depending only on the underlying distribution, such that for small $\delta$ one can choose a uniform block size $q_\ell = q$ for all $\ell$
in \cref{protocol:nonTI-exact} such that
\begin{equation*}
  q=O\!\left(\tilde \xi \log\!\frac{L}{\delta}\right),
\end{equation*}
and the success probability of the correction step obeys
\begin{equation*}
  p_{\mathrm{succ}} \ge 1-\delta-O(\delta^2)
\end{equation*}
almost surely over the IID draw.
\end{theorem}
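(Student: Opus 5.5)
The plan is to mimic the proof of \cref{thm:constant-success} block by block, using \cref{thm:ergodic-blockwise-fixedpoint} in place of the TI residual bound~\eqref{residual}. For the uniform blocking with $L'=L/q$ blocks $B_\ell=[a_\ell,a_\ell+q-1]$, \cref{eq:C-bound-random-nonTI} gives
\begin{equation*}
  \delta_\ell = 2\bigl(\|C_\ell^{(q)}\|_\infty-1\bigr)\le 2\,\Gamma'_{a_\ell}(\omega)\,e^{-q/\tilde\xi}.
\end{equation*}
The success probability of \cref{procedure:prob-mapping} is $p_{\mathrm{succ}}=\prod_\ell\|C_\ell^{(q)}\|_\infty^{-2}\,\|\ket{\tilde A(L)}\|^2$. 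Here $\|\ket{\tilde A(L)}\|=1$ up to the boundary padding, since the state is related to $\bigotimes_\ell C_\ell P_\ell^{(\infty)}$ by isometries and the left-canonical open chain is normalised. Hence $p_{\mathrm{succ}}\ge\prod_\ell(1+\delta_\ell/2)^{-2}\ge \exp(-\sum_\ell\delta_\ell)\ge 1-\delta-O(\delta^2)$ once the global budget~\eqref{eq:global_budget_nonTI} $\sum_\ell\delta_\ell\le\delta$ holds. So the whole task reduces to choosing $q$ so that $2e^{-q/\tilde\xi}\sum_{\ell=1}^{L'}\Gamma'_{a_\ell}(\omega)\le\delta$.

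The main obstacle is that the prefactors $\Gamma'_a(\omega)=\chi^{3/2}\kappa_a(\omega)/\lambda_{\min}(\sigma_a(\omega))^2$ are random and not uniformly bounded. A worst-case bound on $\max_\ell\Gamma'_{a_\ell}$ is therefore unavailable. I would instead control the sum by ergodic averaging. The IID channel sequence is stationary and ergodic under the shift, and $\Gamma'_a(\omega)=f(T^a\omega)$ for a fixed measurable $f$, because $\kappa_a$ and $\rho_a$ in \cref{thm:movassagh-adapted} are constructed covariantly. If $\mathbb{E}[f]<\infty$, Birkhoff's ergodic theorem (applied along the subsequence $a_\ell=(\ell-1)q+1$, or to the sum over all sites $a\le L$, which dominates it since $f\ge0$) gives almost surely
\begin{equation*}
  \sum_{\ell}\Gamma'_{a_\ell}\le\sum_{a=1}^{L}f(T^a\omega)\le 2L\,\mathbb{E}[f]
\end{equation*}
for all large $L$. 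Then $q=\tilde\xi\log(4L\,\mathbb{E}[f]/\delta)=O(\tilde\xi\log(L/\delta))$ suffices.

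If integrability of $f$ cannot be extracted from Ref.~\cite{Movassagh}, I would fall back on the almost-sure growth bound for stationary sequences with a heavier tail. If $\Pr(f>t)$ decays at least polynomially (so some moment $\mathbb{E}[f^{\epsilon}]<\infty$), Borel--Cantelli gives $\max_{a\le L}f(T^a\omega)\le L^{c}$ eventually, almost surely. This again yields $\sum_\ell\Gamma'_{a_\ell}\le L^{1+c}$. That choice only changes the constant inside $q=O(\tilde\xi\log(L/\delta))$, at the cost of enlarging the $O(\cdot)$ constant by the factor $(1+c)$. The condition $\lambda_{\min}(\sigma_a)>0$ almost surely enters only through finiteness of $f$.

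In both routes $\tilde\xi$ depends only on the distribution, as in \cref{thm:movassagh-adapted}, completing the argument.
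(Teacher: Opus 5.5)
Your proposal is correct and follows the paper's proof. Both use the blockwise bound $\|C_\ell^{(q)}-\mathbb{I}\|_\infty\le\Gamma'_\ell e^{-q/\tilde\xi}$ from the ergodic fixed-point theorem, the product bound on $p_{\mathrm{succ}}$ under the budget $\sum_\ell\delta_\ell\le\delta$, and an averaging argument showing $\sum_\ell\Gamma'_\ell=O(L)$ almost surely under the assumed integrability $\mathbb{E}[\Gamma'_0]<\infty$. Your version of the last step is in fact more careful than the paper's, which applies a law of large numbers to identically distributed but dependent (tail-dependent), $q$-dependent block prefactors; Birkhoff's theorem on the $q$-independent site sum $\sum_{a\le L}f(T^a\omega)$ handles both issues, and your polynomial-tail fallback weakens the integrability assumption.
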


\begin{proof}
Recall from~\cref{sec:nonTI} that the chain of $L$ sites is partitioned into $L'$ blocks which,
unlike the TI case, can be of different sizes. The $\ell$-th block occupies sites $a_\ell,\dots,b_\ell$, with
\begin{equation*}
  q_\ell=b_\ell-a_\ell+1,
  \qquad
  \sum_{\ell=1}^{L'}q_\ell=L.
\end{equation*}
We use the left endpoint $a_\ell$ as the representative site of the block, and write $\sigma_\ell:=\sigma_{a_\ell}$, $P_\ell^{(\infty)}:=P_{a_\ell}^{(\infty)}$, $P_\ell^{(q_\ell)}:=P^{[a_\ell,b_\ell]}$, and $C_\ell^{(q_\ell)}:=C^{[a_\ell,b_\ell]}$.
Note that~\cref{thm:ergodic-blockwise-fixedpoint} and~\cref{eq:C-bound-random-nonTI} apply to any choice of pair of indices and thus to any of the blocks.
The bound of the norm of the residual, for each block, is given by
\begin{equation}
  \label{eq:Rq-upper-nonti_blocks}
  \|P_\ell^{(q_\ell)} - P^{(\infty)}_\ell\|_2 =: \|R_\ell^{(q_\ell)}\|_2
  \leq \frac{\chi^{3/2}\, \kappa_{a_\ell}(\omega)}{\lambda_{\min}\!\bigl(\sigma_{\ell}(\omega)\bigr)}\,e^{-q_\ell/\tilde{\xi}}.
\end{equation}
Let us define
\begin{equation}
  \label{eq:gamma-prime-ell-def}
  \Gamma'_\ell(\omega) := \frac{\chi^{3/2}\, \kappa_{a_\ell}(\omega)}{\lambda_{\min}\!\bigl(\sigma_{\ell}(\omega)\bigr)^2},
\end{equation}
which carries an extra factor of $\lambda_{\min}(\sigma_\ell(\omega))^{-1}$ relative to the residual prefactor of \cref{eq:Rq-upper-nonti_blocks}, coming from $\|(P^{(\infty)}_\ell)^{-1}\|_\infty$.
Then \cref{eq:C-bound-random-nonTI} restricted to the block $\ell$ becomes
\begin{equation}
  \|C_\ell^{(q_\ell)}-\mathbb{I}_{\chi^2}\|_\infty
  \le
  \Gamma'_\ell(\omega)\,e^{-q_\ell/\tilde{\xi}}.
  \label{eq:C-bound-random-nonTI_blocks}
\end{equation}

In what follows, we choose block sizes such that they satisfy
\begin{equation}\label{app:bound_sum_cl}
  \sum_\ell \|C_\ell^{(q_\ell)} - \mathbb{I}_{\chi^2}\|_\infty \le \delta /2
\end{equation}
for a small constant $\delta$.
Let us focus on the sequential implementation of $V^{(q)}$ as per~\cref{subsec:vq_seq}.
In that case, since all the isometries are implemented in parallel, the depth of the corresponding circuit depends on
$q_{\max}=\max_\ell q_\ell$.
Thus to guarantee \cref{app:bound_sum_cl}, it is enough to take a uniform block size $q_\ell=q_{\max}$ satisfying
\begin{equation}
  \sum_\ell
  \Gamma'_\ell(\omega)\,e^{-q_{\max}/\tilde{\xi}} \le \delta /2.
\end{equation}
Equivalently, one can choose
\begin{equation}
  q_\ell=q_{\max}\quad\text{for all }\ell,
  \label{eq:q-choice-proportional}
\end{equation}
with
\begin{equation}
  q_{\max} =
  \left\lceil
  \tilde{\xi}
  \log\!\left( 2
    \frac{\sum_{l=1}^{L'}\Gamma'_\ell(\omega)}{\delta}
  \right)
  \right\rceil .
  \label{eq:qmax-proportional}
\end{equation}
This is the non-TI analogue of the TI block-size choice, with realisation-dependent sum $\sum_l\Gamma'_\ell(\omega)$
replacing $L\Gamma$ in \cref{eq:scaling-q}.

Finally, to see that~\cref{app:bound_sum_cl} ensures a constant success probability when implementing the correction maps in Step~4 of~\cref{protocol:nonTI-exact}, repeat the argument of~\cref{thm:constant-success}.
Define $e_\ell:=\Gamma'_\ell(\omega)e^{-q_{\max}/\tilde{\xi}}$, such that $\sum_\ell e_\ell\le\delta/2$.
Then
\begin{equation}
    \|C_\ell^{(q_\ell)}\|_\infty
  \le
  1 + e_\ell.
\end{equation}
For small $\delta \ll 1$, and up to exponentially small normalisation corrections,
\begin{equation}
    p_{\mathrm{succ}}
    = \prod_{\ell=1}^{L'} \|C_\ell^{(q_\ell)}\|_\infty^{-2}
    \geq \prod_{\ell=1}^{L'}(1+e_\ell)^{-2}
    = 1 - 2\sum_{\ell=1}^{L'}e_\ell + O\!\left(\left(\sum_{\ell=1}^{L'}e_\ell\right)^2\right)
    \geq 1-\delta-O(\delta^2).
    \label{eq:psucc-random-nonTI-app}
\end{equation}
This proves the constant-success part of~\cref{thm:random-nonTI}.

It remains to justify the logarithmic scaling of~\cref{eq:qmax-proportional}.
The ergodic theorem above gives a realisation-dependent prescription, but by itself it does not prevent rare large values for $\Gamma'_\ell(\omega)$ from making $\sum_\ell\Gamma'_\ell(\omega)$ grow faster than linearly in the number of blocks.
This is where the IID assumption in~\cref{thm:random-nonTI} is used.
For IID site tensors and disjoint blocks, the $\{\Gamma'_\ell(\omega)\}$ are identically distributed. Furthermore, we assume that
\begin{equation}
  \mathbb{E}[\Gamma'_0] < \infty,
\end{equation}
which is needed to apply the law-of-large-numbers\footnote{\makebox[0pt][l]{\parbox[t]{0.92\textwidth}{This condition is expected for the regular random ensembles considered here. We state it explicitly because the law of large numbers used below requires integrability of $\Gamma'_0$. Pathological heavy-tailed distributions could violate this assumption and require larger realisation-dependent block sizes.}}}:
\begin{equation}
  \label{eq:gamma-prime-lln}
  \frac{1}{L'}\sum_{\ell=1}^{L'}\Gamma'_\ell(\omega)
  \xrightarrow{\;a.s.\;}
  \mathbb{E}[\Gamma'_0].
\end{equation}
Therefore, $\sum_{\ell=1}^{L'}\Gamma'_\ell(\omega)=O(L')=O(L)$ almost surely, and~\cref{eq:qmax-proportional} yields
\begin{equation}
  q_{\max}
  = O\!\left(\tilde{\xi}\log\frac{L}{\delta}\right).
  \label{eq:qmax-from-sum}
\end{equation}
This is the (almost surely) logarithmic block-size scaling stated in~\cref{thm:random-nonTI}.
\end{proof}

\subsection{Extension to other implementations of the block isometries}
\label{app:nonTI-other-Vq}

The analysis above focused on the sequential implementation of the isometries
$V_\ell^{(q_\ell)}$ and resulted in a depth
\begin{equation}
  D_V^{\mathrm{seq}}(\vec q)=O(\chi^4 q_{\max}),
  \qquad
  q_{\max}:=\max_\ell q_\ell ,
\end{equation}
for the implementation used for the non-TI statement in the main text.
The quasi-probabilistic implementation of~\cref{subsec:quasi-prob} can be adapted in the same direct way: the sampling overhead is governed by the chosen block sizes, and by $q_{\max}$, where in the case of IID random tensors, the block sizes are uniform and 
$q_{\max}=O\!\left(\tilde{\xi}\log\frac{L}{\delta}\right)$.

Extending the OAAI-based constructions of~\cref{subsec:SEQ_OAAI,subsec:ren-OAAI}
to the non-TI setting is more delicate. In the TI case, the auxiliary
block size $q_h$ that appears in the OAAI implementation is controlled by a single transfer matrix and a single fixed point. In the non-TI case, the corresponding local
fixed points and prefactors depend on the block position. A direct extension would therefore require choosing position-dependent auxiliary blocks and controlling the associated success probabilities uniformly over the chain.
We therefore leave a full non-TI analysis of the OAAI-based implementations, including the renormalised construction of~\cref{subsec:ren-OAAI}, as future work. The results stated in~\cref{sec:nonTI} rely only on the sequential implementation above.

\section{Depth-estimation and optimisation of free parameters}\label{app:numerics}

\subsection{Depth-estimation details}
\label{app:depth-estimation}
All the depths reported in the numerical studies of \cref{sec:numerics} are estimated rather than obtained from compilation of the circuits into some given primitive gate set. Throughout, we use a CNOT-depth lower bound as the basis of these estimates.
For a generic isometry from an input space of $m$ qubits to an output space of $n$ qubits, this lower bound is given by
\begin{equation}
  N_{\mathrm{iso}}(m,n)
  =
  \tfrac{1}{4}\Bigl(2^{\,n+m+1} - 2^{\,2m} - 2n - m - 1\Bigr),
\end{equation}
following \citet{iten2016}, who provide explicit circuit decompositions achieving this bound up to a multiplicative constant factor close to two.
Based on this lower bound, we define elementary depth estimates as
\begin{equation}
  D_\infty = N_{\mathrm{iso}}(1,\chi^2) = O(\chi^2)
\end{equation}
for the preparation of the RG fixed-point state $\ket{{P^{(\infty)}}(L')}$~\eqref{eq:p_infinity_line},
\begin{equation}
  D_C = N_{\mathrm{iso}}(2\chi^2,2\chi^2) = O(\chi^4)
\end{equation}
for the implementation of the block-encoding of the correction map $C^{(q)}$~\eqref{eq:def-Cq},
\begin{equation}
  D_A = N_{\mathrm{iso}}(\chi,d\chi) = O(d \chi^2 )
\end{equation}
for the implementation of the isometry corresponding to one tensor $A\in \mathbb{C}^{\chi \times d \times \chi}$ appearing in \cref{eq:MPS-def}, and
\begin{equation}
  D_{A'} = N_{\mathrm{iso}}(\chi^2,d\chi^2) = O(d \chi^4)
\end{equation}
for the implementation of the isometry corresponding to one tensor $A' := A \otimes \mathbb{I}_\chi \in \mathbb{C}^{\chi^2 \times d \times \chi^2}$ appearing in \cref{eq:Vq-chain-open}. %

Building on the previous, we can further estimate the depth of the circuits implementing the isometry $V^{(q)}$. For the three implementations used in the numerics, we estimate their depths as follows. For the sequential implementation (\cref{subsec:vq_seq}), we have
\begin{equation}
  D_V^{\rm Seq}(q) = q\,D_{A'}.
\end{equation}
For the amplitude-amplified sequential implementation (\cref{subsec:SEQ_OAAI}), the depth depends on both $q$ and the auxiliary block size $q_h$ and is given by
\begin{equation}
  D_V^{\seqAA}(q,q_h)
  = D_V^{\rm Seq}(q_h) + 3 \bigl((q-q_h)\,D_A + D_\infty + 2D_C\bigr).
\end{equation}
As per~\cref{thm:OAAI}, the factor of $3$ accounts for the $2r+1=3$ calls to $U$ or $U^\dag$ in a one-round ($r=1$) implementation of exact-OAAI,
and the $2$ accounts for the two correction maps appearing in~\cref{eq:Vq_decomp_chain_2}.
To ensure that we only need one round of exact-OAAI, for each fixed $q$, we choose the smallest $q_h$ such that the estimated success probability of the \seqAA construction is larger than $1/2$. Finally, for the renormalised implementation in \cref{eq:Vq_rg}, the depth depends on $q$, $q_2$, and $q_h$:
\begin{equation}
  D_V^{\renAA}(q,q_2,q_h)
  =
  3\bigl(D_\infty + 2D_C\bigr)
  +
  D_V^{\seqAA}(q_2,q_h).
\end{equation}
In practice, for each fixed $q$, we choose the smallest $q_2$ such that the estimated success probability is at least $1/2$, and then evaluate the nested \seqAA contribution with the $q_h$ fixed as per the previous discussion.

Finally, these estimates for the circuit depths of $V^{(q)}$ are  inserted into \cref{eq:Daao,eq:Dsc} to get end-to-end circuit depths. In addition, these depths require computing the probabilities of success given in \cref{eq:p-success,eq:p-success-cat} that are readily obtained based on the spectral norms of the correction maps~\eqref{eq:def-Cq}.
As discussed in the main text, upon evaluation of these circuit depths, we systematically minimised them over $q$ for \cref{eq:Daao} or over $(q,Q)$ for~\cref{eq:Dsc}. These are the optimised depths reported all over~\cref{sec:numerics}.

\subsection{Optimisation plots}
\label{app:optimisation-plots}
In this subsection, we report optimisation landscapes over the parameters $q$ and $Q$.
These are based on a single tensor $A$, defining TI MPS~\eqref{eq:MPS-def} for different $L$, with $A$ drawn from a Haar distribution and with $\chi=4$, $d=2$, and a correlation decay rate $\xi \approx 3.3$.
The qualitative behaviour shown for this single instance is representative of typical TI MPS with the same properties.

\subsubsection{All-at-once strategy}
\cref{fig:d_vs_q_aao} reports the circuit depths for the all-at-once preparation (\cref{subsec:exact-protocol}), obtained from~\cref{eq:Daao}, as a function of the block size $q$, for three representative system sizes $L$. Consider first the linear implementation (left panel) of $V^{(q)}$, for which $D_V(q)\propto q$. At small $q$, the correction success probability $p_{\rm succ}$ vanishes and the depth diverges. Past the optimum, the depth is dominated by $D_V(q)$ and grows linearly in $q$. The numerically selected optima $q^{\star}$ (star) are $18/23/25$ for $L=10^{3}/5\cdot 10^{3}/10^{4}$, consistent with the predicted $q^{\star}=O(\log L)$ scaling. For the logarithmic implementation (right panel), where $D_V(q)\propto  \log q$, we find a characteristic step-like landscape. 
All three sizes $L$ assessed share the same optimum $q^{\star}=32$. In all cases, the optimisation over $q$ is straightforward.

\begin{figure*}[!tp]
  \centering
  \begin{minipage}[t]{0.49\textwidth}
    \centering
    \includegraphics[width=\linewidth]{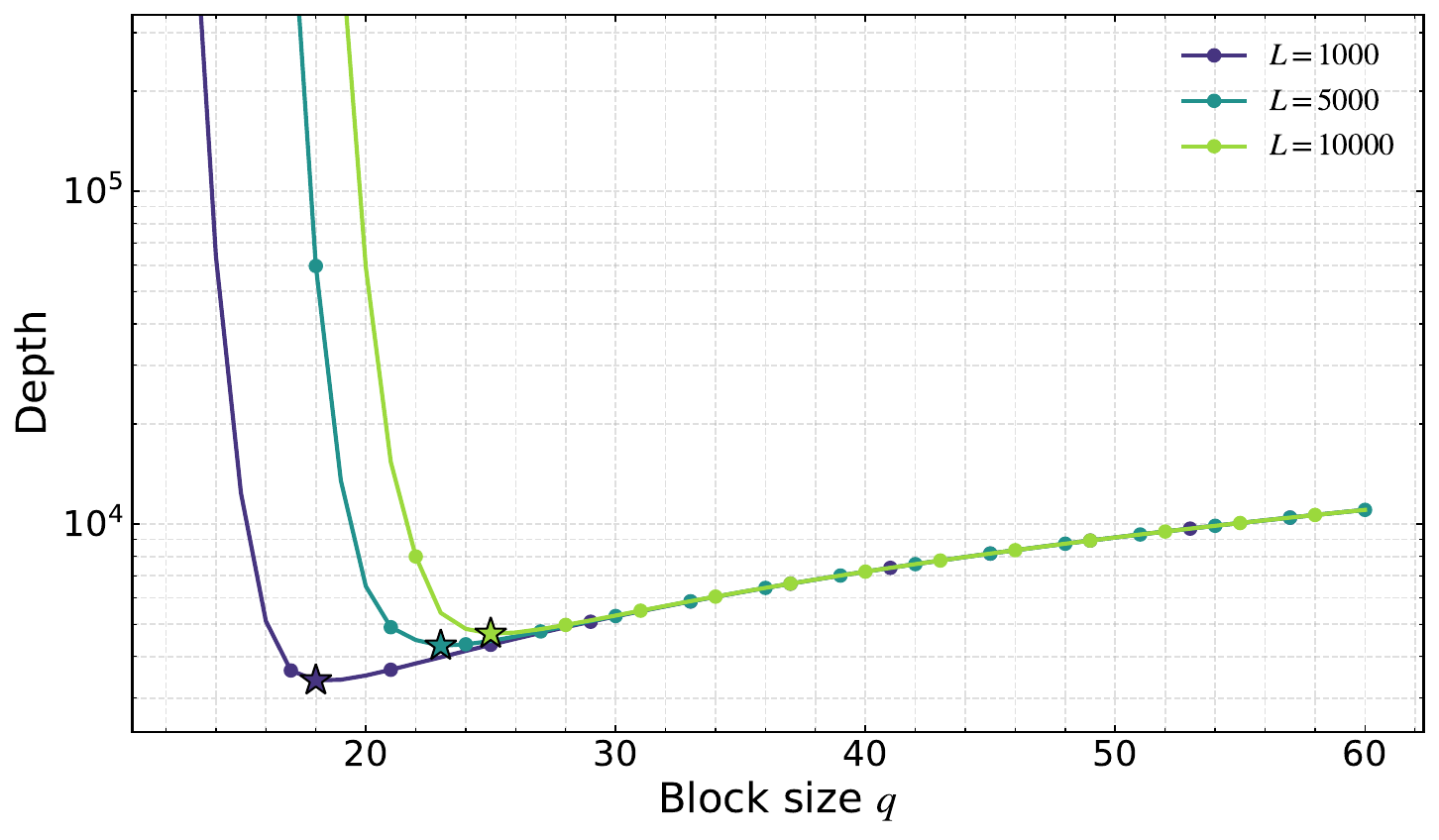}
  \end{minipage}\hfill
  \begin{minipage}[t]{0.49\textwidth}
    \centering
    \includegraphics[width=\linewidth]{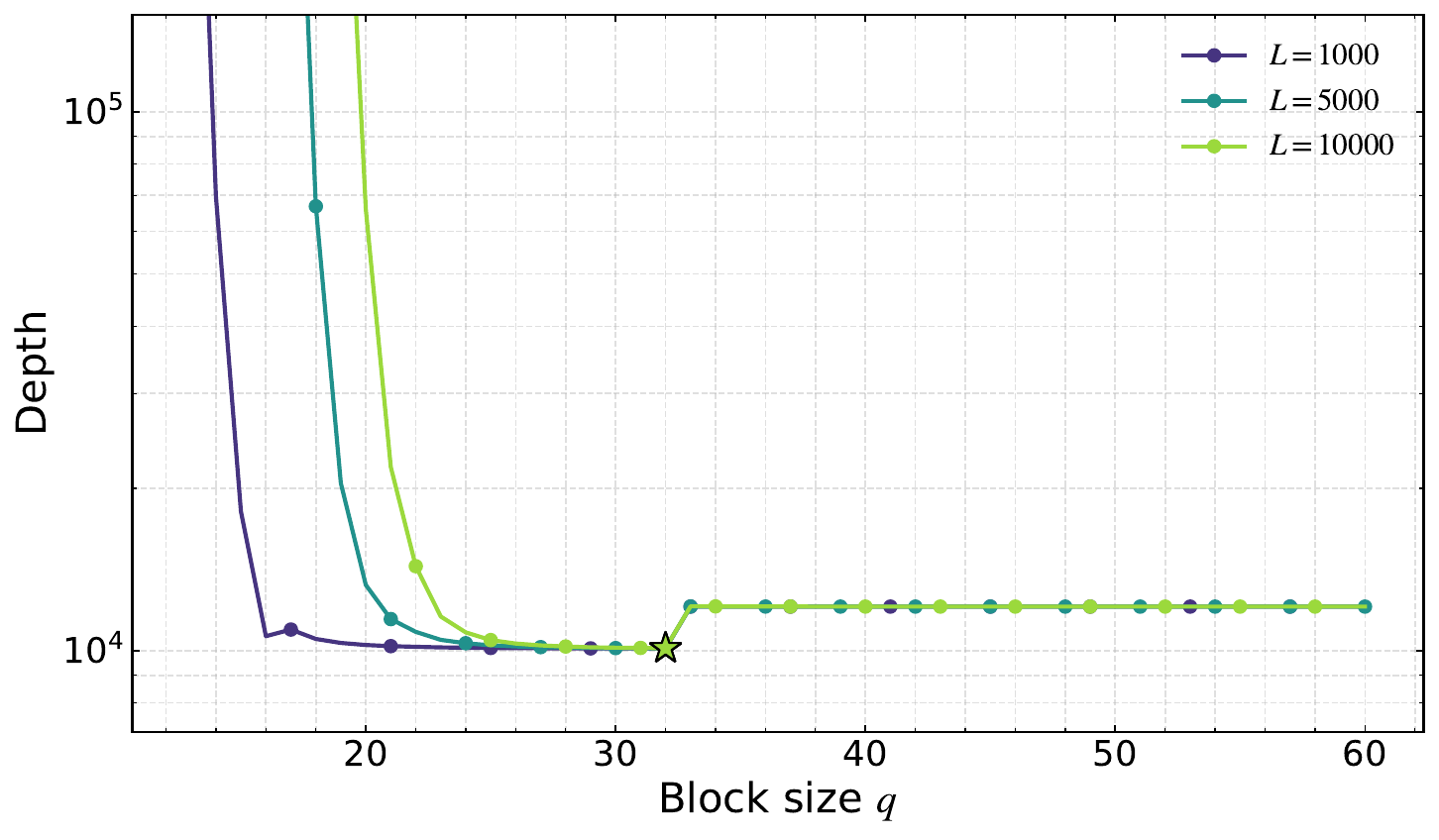}
  \end{minipage}
  \caption{
    Optimisation of the expected depth~\eqref{eq:Daao} of the all-at-once circuit preparations with respect to the block size $q$. The results are based on a single tensor $A$ randomly generated with $\chi=4$, $d=2$ and $\xi \approx 3.3$.
    We consider a \emph{linear} (left panel) and a logarithmic (right panel) implementation of $V^{(q)}$, and three system sizes $L$ (colors in legend). Stars mark the numerically selected optima.
  }
  \label{fig:d_vs_q_aao}
\end{figure*}

\begin{figure*}[!tp]
  \centering
  \begin{minipage}[t]{0.49\textwidth}
    \centering
    \includegraphics[width=\linewidth]{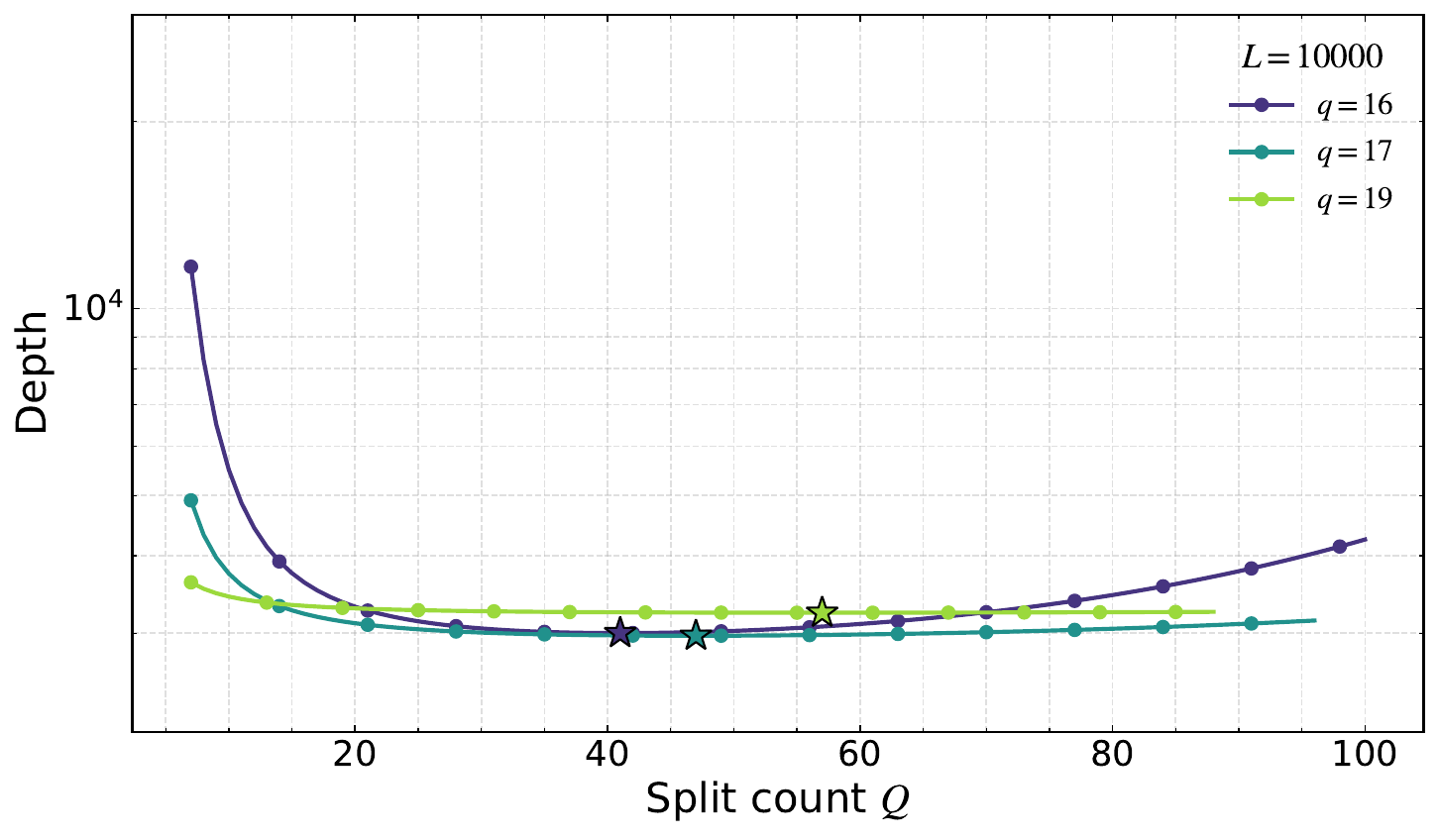}
  \end{minipage}\hfill
  \begin{minipage}[t]{0.49\textwidth}
    \centering
    \includegraphics[width=\linewidth]{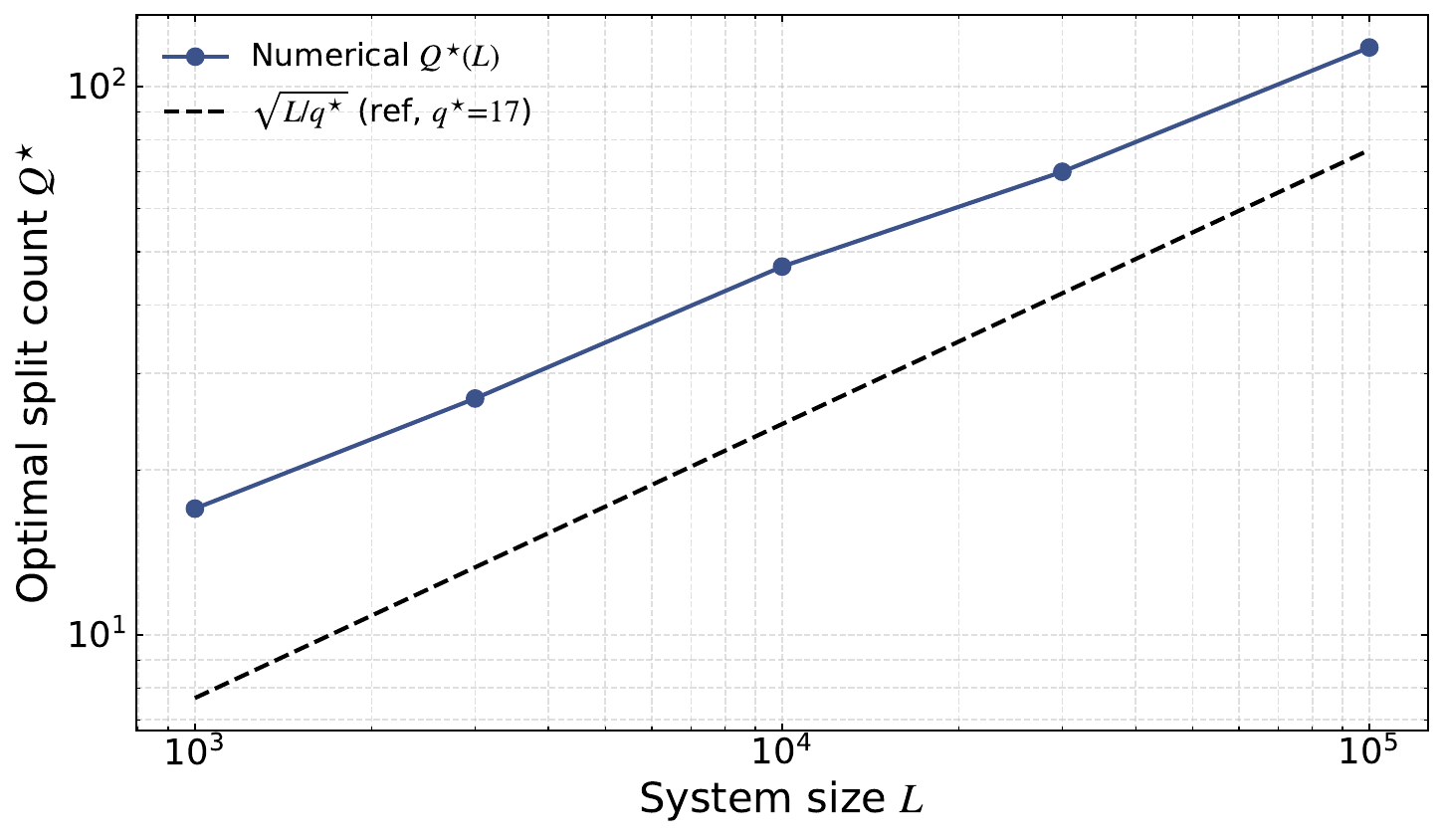}
  \end{minipage}
  \caption{
    Optimisation of the expected depth~\eqref{eq:Dsc} of the split-and-concatenate circuit preparations with respect to the split size $Q$. The results are based on a single tensor $A$ randomly generated with $\chi=4$, $d=2$ and $\xi \approx 3.3$. (Left) Average depth as a function of the split size $Q$ at $L=10^{4}$ for three block sizes $q$. Stars mark the optimum $Q^{\star}(q)$. 
    (Right) Numerically obtained optima $Q^{\star}(L)$ for the split size as a function of the system size $L$ (solid line) together with the analytic $\sqrt{L/q^{\star}}$ reference (dashed line) using $q^{\star}=17$. 
    This confirms the predicted $Q^{\star}\sim\sqrt{L/q^{\star}}$ scaling.
  }
  \label{fig:d_vs_qQ_sac}
\end{figure*}

\subsubsection{Split-and-concatenate strategy}
For the split-and-concatenate protocol (\cref{subsec:sandc}), the average depth given in~\cref{eq:Dsc} depends on both the block size $q$ and the number of splits $Q$. For our analyses, we separate the effect of these two parameters. Fixing $q$, the map $Q\mapsto D_{\snc}(q,Q)$ is convex, with the optimum $Q^{\star}(q)$ close to $\sqrt{L/q}$ as predicted by the cost analysis of~\cref{app:sac}. At fixed $Q$, the map $q\mapsto D_{\snc}(q,Q)$ is again convex in $q$. The joint optimum $(q^{\star},Q^{\star})$ is therefore easily located by a two-stage search.
\cref{fig:d_vs_qQ_sac} (left panel) reports the optimisation landscape of $Q$ at $L=10^{4}$ and for three block sizes around the optimum $q^{\star}=17$. Each curve exhibits a clear convex minimum in $Q$, with the per-$q$ optimum $Q^{\star}(q)$ (depicted as a star) close to $\sqrt{L/q}$. \cref{fig:d_vs_qQ_sac} (right panel) reports the numerically selected $Q^{\star}(L)$ as the system size is varied from $L=10^3$ to $10^5$. The analytic $\sqrt{L/q^{\star}}$ reference, with $q^{\star}=17$, is depicted as a dashed black line 
confirming the predicted $Q^{\star}\sim\sqrt{L/q^{\star}}$ scaling.

\section{Proof of exact oblivious amplitude amplification}
\label{app:proofs_OAAI}
In this section, we prove~\cref{thm:OAAI} that is restated below for convenience.
The proof closely follows the proof of lemma~3.6 in~\cite{Berry2014},
modifying it to extend it to isometries (rather than unitaries) and to make it exact~\cite{hoyer2000}.

\begin{lemma}[exact OAAI]
Let $V:\mathcal H_{S_i}\to\mathcal H_{S_o}$ be an isometry ($V^\dag V=\mathbb{I}_{S_i}$), and let
$U:\mathcal H_{A_i}\otimes\mathcal H_{S_i}\to\mathcal H_{A_o}\otimes\mathcal H_{S_o}$ be a unitary, where the input and output ancillary registers have $n_{A_i}$ and $n_{A_o}$ qubits with $n_{A_o}\le n_{A_i}$.
Assume that
\begin{equation}\label{eq:decomp_0}
U\bigl(\ket{0}_{A_i}\ket{\Psi}_{S_i}\bigr)
=
\sin(\theta)\ket{0}_{A_o}V\ket{\Psi}_{S_i}
+\cos(\theta)\ket{\Phi^\perp}
\end{equation}
holds for any $\ket{\Psi}_{S_i}$.
Here, $(\bra{0}_{A_o}\otimes \mathbb{I}_{S_o})\ket{\Phi^\perp}=0$.
Then, one can synthesize a unitary $\tilde{U}$ such that
\begin{equation}\label{eq:decomp}
\tilde{U}\bigl(\ket{0}_{A_i}\ket{\psi}_{S_i}\bigr)
=
\ket{0}_{A_o}V\ket{\psi}_{S_i},
\end{equation}
using $r=O(1/|\sin(\theta)|)$ amplification rounds, each consisting of one call to $U^\dag$, two generalised reflections around $\ket{0}\bra{0}_{A_i}$ and $\ket{0}\bra{0}_{A_o}$, and one additional call to $U$.
Including the initial application of $U$, this gives $2r+1$ calls to $U$ or $U^\dag$ and $2r$ generalised reflections.
Whenever $\sin^2{\theta}\ge 1/2$, one round is sufficient.
\end{lemma}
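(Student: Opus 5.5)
The plan follows the Berry et al.\ oblivious amplitude amplification argument (Lemma~3.6 of~\cite{Berry2014}). The two changes are that we track different input and output ancilla registers, and that we add a final phase-adjusted round in the style of~\cite{hoyer2000} to make the procedure exact. Define the projectors $\Pi_i := \ket{0}\bra{0}_{A_i}\otimes \mathbb{I}_{S_i}$ and $\Pi_o := \ket{0}\bra{0}_{A_o}\otimes \mathbb{I}_{S_o}$. By \eqref{eq:decomp_0},
\begin{equation*}
\Pi_o U \Pi_i = \sin(\theta)\, \ket{0}\bra{0}\otimes V .
\end{equation*}
Since $V^\dag V = \mathbb{I}_{S_i}$, this gives the key identity
\begin{equation*}
\Pi_i U^\dag \Pi_o U \Pi_i = \sin^2(\theta)\, \Pi_i .
\end{equation*}
In words, the operator $\Pi_o U\Pi_i$ has every singular value equal to $\sin\theta$ on the range of $\Pi_i$. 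The isometry property enters here: for a general block we would only get $V^\dag V$ rather than the identity.

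\textbf{Step 1: reduce to a family of two-dimensional invariant subspaces.} For a fixed input $\ket{\psi}$, set $\ket{a} := \ket{0}_{A_i}\ket{\psi}$. Then $U\ket{a} = \sin\theta\,\ket{b} + \cos\theta\,\ket{\Phi^\perp_\psi}$ with $\ket{b} := \ket{0}_{A_o}V\ket{\psi}$. The key identity shows that
\begin{equation*}
U^\dag \ket{b} = \sin\theta\,\ket{a} + \cos\theta\,\ket{a^\perp},
\end{equation*}
where $\ket{a^\perp}$ is a normalised vector orthogonal to the range of $\Pi_i$. From these two relations, the pair of planes $\mathrm{span}\{\ket{a},\ket{a^\perp}\}$ (input side) and $\mathrm{span}\{\ket{b},\ket{\Phi^\perp_\psi}\}$ (output side) is mapped into itself by $U$, by $U^\dag$, and by the generalised reflections $e^{-i\alpha\Pi_i}$ and $e^{-i\beta\Pi_o}$. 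The angle $\theta$ is the same for every $\psi$, which is what makes the procedure oblivious to the input. Linearity in $\psi$ then lets us check the whole construction on each plane separately and get the result for all inputs at once.

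\textbf{Step 2: run the standard iteration.} On each such plane, one round consists of
\begin{equation*}
U\, e^{-i\alpha\Pi_i}\, U^\dag\, e^{-i\beta\Pi_o}.
\end{equation*}
Restricted to the output plane, this acts as a product of two reflections, i.e.\ a rotation by $2\theta$ when $\alpha=\beta=\pi$. This is exactly Grover or amplitude amplification in a two-dimensional space. After $r$ rounds the amplitude on $\ket{b}$ is $\sin((2r+1)\theta)$.

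\textbf{Step 3: make it exact.} Choose $r = \lceil (\pi/(2\theta) - 1)/2 \rceil = O(1/|\sin\theta|)$. Following~\cite{hoyer2000}, we then either lower the effective angle by padding with an extra rotated ancilla qubit, or choose the phases $\alpha,\beta$ of the last round so that the final amplitude on $\ket{b}$ has modulus exactly one. The relevant equation is the two-by-two phase-matching equation. It depends only on $\theta$ and not on $\psi$, so the same phases work for every input. When $\sin^2\theta\ge 1/2$, we have $\theta\ge\pi/4$, and a single round with suitably chosen phases already reaches the target. This is the Hoyer one-step condition: one needs $3\theta \ge \pi/2$ possible with phases, which holds for $\theta \ge \pi/6$. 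A global phase left on $\ket{b}$ is also independent of $\psi$ and can be removed. Counting operations gives one initial $U$ plus $r$ rounds of $(U^\dag, U)$ with two reflections each, i.e.\ $2r+1$ calls to $U$ or $U^\dag$ and $2r$ generalised reflections.

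\textbf{Main obstacle.} The delicate step is Step~1, where the ancilla dimensions differ ($n_{A_o}\le n_{A_i}$) and the system spaces differ. Here $U$ maps between differently factorised spaces of equal total dimension, so the reflections must be defined on the correct side. We must also verify that $\ket{a^\perp}$ is normalised and orthogonal to $\mathrm{range}(\Pi_i)$, using only $\Pi_i U^\dag \Pi_o U\Pi_i=\sin^2\theta\,\Pi_i$. Once this invariance is in place, the exactness argument is the standard scalar one.
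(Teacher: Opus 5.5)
Your proposal is correct and follows essentially the same route as the paper. Your identity $\Pi_i U^\dag \Pi_o U \Pi_i=\sin^2\theta\,\Pi_i$ is the paper's $\mathcal{Q}=\sin^2\theta\,\mathbb{I}_{S_i}$ (which gives $\Pi_i\ket{\Psi^\perp}=0$), and the rest is the same reduction to two-dimensional invariant subspaces, standard rounds with phases $\pi$, and a phase-tuned final round. Your final step is slightly more careful than the paper's: you choose $r=\lceil(\pi/(2\theta)-1)/2\rceil$ and solve the last round's phases against the original $\theta$, whereas the paper plugs the boosted success probability into its single-round formula. You also note that the leftover global phase is input-independent and can be removed, which the paper does not address.
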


\begin{proof}

Let us define
$\ket{\Psi}:=\ket{0}_{A_i}\ket{\psi}_{S_i}$
 and $\ket{\Phi}:=\ket{0}_{A_o}V\ket{\psi}_{S_i}$,
such that~\cref{eq:decomp_0} can be rewritten as
\begin{equation}
U\ket{\Psi}
= \sin\theta\ket{\Phi}+\cos\theta\ket{\Phi^\perp},
\end{equation}
Denote as $\Pi_i=\ket{0}\!\bra{0}_{A_i}\otimes \mathbb{I}_{S_i}$ and as
$\Pi_o=\ket{0}\!\bra{0}_{A_o}\otimes \mathbb{I}_{S_o}$ the projectors onto the ancilla subspaces. We have that $\Pi_o\ket{\Phi^\perp}=0$.
Proceed by defining $\ket{\Psi^\perp}$ implicitly through
\begin{equation}
U\ket{\Psi^\perp}
=
\cos\theta\,\ket{\Phi}-\sin\theta\,\ket{\Phi^\perp},
\end{equation}
and note that it follows from this definition that $\braket{\Psi|\Psi^\perp} = \braket{\Psi|U^\dag U|\Psi^\perp}=0$.
We further claim that
\begin{equation}
  \label{eq:orthogonality_oaai}
  \Pi_i\ket{\Psi^\perp}=0.
\end{equation}
The argument extends the technique of Lemma~3.7 of~\cite{Berry2014} from unitary to isometric $V$.
Define the operator
\begin{equation}
  \mathcal{Q} := (\bra{0}_{A_i}\otimes \mathbb{I}_{S_i})\,U^\dag\,\Pi_o\,U\,(\ket{0}_{A_i}\otimes I_{S_i})
\end{equation}
acting on $\mathcal{H}_{S_i}$.
For every $\ket{\psi}\in\mathcal{H}_{S_i}$,
\begin{equation}
  \bra{\psi}\mathcal{Q}\ket{\psi}
  = \|\Pi_o U\ket{0}_{A_i}\ket{\psi}\|^2
  = \|\sin\theta\,\ket{0}_{A_o}V\ket{\psi}\|^2
  = \sin^2\theta,
\end{equation}
where the last equality uses the isometric property $V^\dag V = \mathbb{I}_{S_i}$.
Since $\mathcal{Q}$ is self-adjoint and its quadratic form is constant, $\mathcal{Q} = \sin^2\theta\cdot \mathbb{I}_{S_i}$.
Inverting the defining equations of $\ket{\Psi}$ and $\ket{\Psi^\perp}$ gives $U^\dag\ket{\Phi} = \sin\theta\,\ket{\Psi}+\cos\theta\,\ket{\Psi^\perp}$, hence
\begin{equation}
\begin{aligned}
  \sin^2\theta\,\ket{\psi}
  &= \mathcal{Q}\ket{\psi}
   = (\bra{0}_{A_i}\otimes \mathbb{I}_{S_i})\,U^\dag\,\Pi_o\,U\ket{\Psi}\\
  &= \sin\theta\,(\bra{0}_{A_i}\otimes \mathbb{I}_{S_i})\,U^\dag\ket{\Phi}\\
  &= \sin^2\theta\,\ket{\psi}
   + \sin\theta\cos\theta\,(\bra{0}_{A_i}\otimes \mathbb{I}_{S_i})\ket{\Psi^\perp},
\end{aligned}
\end{equation}
so $\sin\theta\cos\theta\,(\bra{0}_{A_i}\otimes \mathbb{I}_{S_i})\ket{\Psi^\perp}=0$.
For $\theta\in(0,\pi/2)$ this forces $\Pi_i\ket{\Psi^\perp}=0$, as claimed.
In what follows we engineer the amplitude amplification in the two two-dimensional subspaces 
\begin{equation}\label{def:2d_sub}
\begin{split}
\mathcal K_\psi
&:=\mathrm{span}\{\ket{\Psi},\ket{\Psi^\perp}\},\\
U\mathcal K_\psi
&=\mathrm{span}\{\ket{\Phi},\ket{\Phi^\perp}\}, 
\end{split}
\end{equation}
This is achieved through the generalised reflections defined as
\begin{equation}
    \begin{split}
S_i(\phi)
&=e^{i\phi \Pi_i} = \mathbb{I}-(1-e^{i\phi})\Pi_i, \\
S_o(\varphi)
&=e^{i\varphi \Pi_o} =\mathbb{I}-(1-e^{i\varphi})\Pi_o.
\end{split}
\end{equation}
It can be verified that these leave the subspaces invariant as we have
\begin{equation}
\begin{split}
S_o(\varphi)\ket{\Phi}
&=e^{i\varphi}\ket{\Phi}\; \textrm{and} \; S_o(\varphi)\ket{\Phi^\perp}
=\ket{\Phi^\perp},\\
S_i(\phi)\ket{\Psi}
&=e^{i\phi}\ket{\Psi}\; \textrm{and} \;
S_i(\phi)\ket{\Psi^\perp}
=\ket{\Psi^\perp}.
\end{split}
\label{eq:SoDiag}
\end{equation}
That is, we can represent them in the corresponding subspaces~\eqref{def:2d_sub} as the $2 \times 2$ matrices
\begin{equation}
S_o(\varphi) = \begin{pmatrix}
e^{i\varphi}&0
\\[6pt]
0&1
\end{pmatrix}\; \textrm{and} \; S_i(\phi) = \begin{pmatrix}
e^{i\phi}&0
\\[6pt]
0&1
\end{pmatrix}
\end{equation}
together with $U$ represented as a matrix, from $\mathcal{K}_\psi$ to $U\mathcal{K}_\psi$, as
\begin{equation}
U = \begin{pmatrix}
\sin\theta& \cos\theta
\\[6pt]
\cos\theta & - \sin\theta
\end{pmatrix}.
\end{equation}

This implies that we can use an analysis analogous to the standard exact amplitude amplification~\cite{hoyer2000}, expressed in terms of the success probability $p=\sin^2\theta$.
For arbitrary phases $\phi$ and $\varphi$, define one round of amplification as the application of the operator $U S_i(\phi)U^\dag S_o(\varphi)$ which is represented as
\begin{equation}
\begin{pmatrix}
e^{i\varphi}\!\left(\cos^2\theta + \sin^2\theta\,e^{i\phi}\right)
&
-\left(1 - e^{i\phi}\right)\sin\theta\cos\theta
\\[6pt]
- e^{i\varphi}\!\left(1 - e^{i\phi}\right)\sin\theta\cos\theta
&
\sin^2\theta + \cos^2\theta\,e^{i\phi}.
\end{pmatrix}
\end{equation}
We apply this operator to $U\ket{\Psi}$, which can be expressed in the $U\mathcal{K}_{\psi}$ subspace as
\begin{equation}
  \begin{pmatrix}
    \sin\theta \\
    \cos\theta
  \end{pmatrix},
\end{equation}
and the coefficient corresponding to the orthogonal component $\ket{\Phi^\perp}$ becomes
\begin{equation}
- e^{i\varphi}\!\left(1 - e^{i\phi}\right)
\sin\theta \cos\theta \,\sin\theta
+
\left(\sin^{2}\theta + \cos^{2}\theta\, e^{i\phi}\right)\cos\theta.
\end{equation}
Factoring out $\cos\theta$, this coefficient vanishes precisely when $e^{i\varphi}=\bigl[p+(1-p)e^{i\phi}\bigr]/\bigl[\,p\,(1-e^{i\phi})\,\bigr]$. Since $e^{i\varphi}$ has unit modulus, this fixes $\phi$ and $\varphi$ such that
\begin{equation}\label{eq:angles_final_round}
  \begin{split}
  &\cos\phi
=
1
-
\frac{1}{2p},\\
&\varphi
=
\frac{\pi}{2}
-
\frac{\phi}{2}
+
\arctan\!\left(
\frac{(1-p)\sin\phi}
{p + (1-p)\cos\phi}
\right).
  \end{split}
\end{equation}
At this solution one has $\varphi=\phi$, and the principal branch of $\arctan$ is the correct one because the denominator $p+(1-p)\cos\phi$ remains positive for $p\ge 1/2$. For $1/2\le p<1$, this gives a  choice of $\phi$ and $\varphi$ and produces the desired state with certainty.
The case $p=1$ is already exact and needs no amplification.

For a general initial angle $\theta$, first apply ordinary oblivious amplitude amplification rounds, corresponding to the special choice $\phi=\varphi=\pi$.
The same two-dimensional representation shows that after $k$ such rounds the amplitude onto $\ket{\Phi}$ has been boosted to $\sin((2k+1)\theta)$.
Choosing $k=O(1/|\sin\theta|)$ makes the amplified success probability $\sin^2((2k+1)\theta)\geq1/2$. 
Applying the final arbitrary-phase round with angles given by~\cref{eq:angles_final_round} then removes any $\ket{\Phi^\perp}$ contribution, and thus achieves~\cref{eq:decomp} with the stated $r=O(1/|\sin\theta|)$ amplification-round count.
\end{proof}

\section{Cost analysis of the split-and-concatenate}
\label{app:sac}

In this appendix, we provide additional details to the analysis of the expected circuit depth of the split-and-concatenate (\snc) strategy introduced in~\cref{subsec:improve_success}. We further explain why, when combined with the sequential implementation of the isometries, it effectively halves the required block size $q$ compared to the all-at-once (\aao) approach, as is observed in~\cref{sec:compare-approx-rg}.
For convenience, in what follows we define 
\begin{equation}\label{eq:gamma_prime}
    \Gamma' := \Gamma \lVert (P^{(\infty)})^{-1}\|_\infty.
\end{equation}

Let us first recall results of the all-at-once preparation of the fixed point state from \cref{subsec:exact-protocol}.
From \cref{eq:p-success}, and omitting the norm of the state since it is exponentially close to one, we know that the all-at-once protocol succeeds with probability \begin{equation}
    p_{\mathrm{succ}} = \|C^{(q)}\|_\infty^{-2L'}
\end{equation}
due to the $L'=L/q$ correction maps implemented through block-encoding. 
Using $\|C^{(q)}\|_\infty \le 1 + \Gamma'\,e^{-q/\xi}$, which follows from the residual bound~\eqref{eq:spect_Cq} and the definition of $\Gamma'$~\eqref{eq:gamma_prime}, the requirement that $p_{\rm succ} \geq 1-\delta$ for small $\delta$ becomes $L'\,\Gamma'\,e^{-q/\xi} \le \delta/2$, up to $O(\delta^2)$ corrections. This gives a minimum block size
\begin{equation}
  \label{eq:q-star-aao}
  q^{\star}_{\aao} \;=\; \xi\,\ln\!\frac{2\,L'\,\Gamma'}{\delta}.
\end{equation}
In what follows, we will show that the corresponding block size for the split-and-concatenate approach instead yields $q^{\star}_{\snc} = \tfrac{1}{2}\,\xi\,\ln L'$, half of $q^{\star}_{\aao}$ to leading order.

Let us now analyse the split-and-concatenate approach.
As depicted in~\cref{fig:sac}~(b), the chain consisting of $L' = L/q$ renormalised blocks is further partitioned into $L_{\mathrm{sp}} = L'/Q$ disjoint splits of size $Q$.
Each split is corrected independently, involving $Q-1$ correction maps per split. Once all splits have been successfully corrected, they are merged by applying $L_{\mathrm{sp}}$ additional correction maps at their boundaries (the concatenation step).
The success probability of correcting a single split is
\begin{equation}
  \label{eq:p-sp}
  p_{\mathrm{sp}} = \bigl(\lVert C^{(q)}\rVert_\infty^{-2}\bigr)^{Q-1},
\end{equation}
from which we can derive the expected number of rounds before concatenation.
Let $G_i$ be the number of correction rounds needed for the $i$-th split to succeed for the first time.
Corrections on different splits are independent of each other, and each $G_i$ follows a geometric distribution with parameter $p_{\mathrm{sp}}$ such that
\begin{equation}
    G_i \stackrel{\mathrm{IID}}{\sim} \mathrm{Geom}(p_{\mathrm{sp}}).
\end{equation}
Because all splits can be attempted in parallel at each round, the total number of rounds before concatenation is
\begin{equation}
  G^{\max}_{L_{\mathrm{sp}}} = \max\{G_1,\dots,G_{L_{\mathrm{sp}}}\}.
\end{equation}
The cumulative distribution function of this random variable is given by
\begin{equation}
  \Pr\!\bigl(G^{\max}_{L_{\mathrm{sp}}} \le t\bigr)
  = \bigl(1-(1-p_{\mathrm{sp}})^t\bigr)^{L_{\mathrm{sp}}},
\end{equation}
and its expectation value is obtained through the tail-sum formula:
\begin{equation}
  E\!\bigl[G^{\max}_{L_{\mathrm{sp}}}\bigr]
  = \sum_{t=0}^{\infty}\Bigl[1-\bigl(1-(1-p_{\mathrm{sp}})^t\bigr)^{L_{\mathrm{sp}}}\Bigr].
  \label{eq:E-Gmax-app}
\end{equation}
This formula is used to numerically evaluate the depth of~\cref{eq:Dsc} reported in the main text.
To make the expectation amenable to further analysis and to derive the halving of the effective block size, set $s := -\ln(1-p_{\mathrm{sp}}) > 0$ and $f(t) := 1-(1-e^{-st})^{L_{\mathrm{sp}}}$.
In turn, the Euler-Maclaurin formula gives
\begin{equation}
  \sum_{t=0}^{\infty} f(t)
  \approx \int_0^{\infty} f(u)\,du + \tfrac{1}{2}f(0) + o(1)
  = \frac{H_{L_{\mathrm{sp}}}}{s} + \frac{1}{2} + o(1),
\end{equation}
where $H_n = \sum_{k=1}^{n} 1/k$ is the $n$-th harmonic number.
Using $H_n = \ln n + \gamma_E + o(1)$ with $\gamma_E$ the Euler-Mascheroni constant, we obtain
\begin{equation}
  E\!\bigl[G^{\max}_{L_{\mathrm{sp}}}\bigr]
  \approx \frac{\ln L_{\mathrm{sp}} + \gamma_E}{-\ln(1-p_{\mathrm{sp}})} + \frac{1}{2}
  =: J.
  \label{eq:J-approx}
\end{equation}
For convenience, we recall the expected depth~\eqref{eq:Dsc} of the \snc protocol,
\begin{equation}
  \label{eq:Dsc-recall}
  D_{\snc}(q,Q) = \frac{(D_{\infty}+D_C)\,\mathbb{E}[G^{\max}_{L_{\rm sp}}] + D_C}{p_{\rm cat}} + D_V(q),
\end{equation}
with $\mathbb{E}[G^{\max}_{L_{\rm sp}}]$ given by~\eqref{eq:J-approx}. This is an explicit function of the block size $q$ and the split size $Q$, and the optimal gate count is, in principle, obtained by minimising it jointly over $q$ and $Q$. As detailed further in \cref{app:optimisation-plots}, we perform this minimisation numerically, and report the resulting optimisation landscape over $Q$ in~\cref{fig:d_vs_qQ_sac}, where the selected optimum is seen to lie close to $Q=\sqrt{L'}$. A closed-form joint minimisation over $q$ and $Q$ is not straightforward, but the optimum admits a simple approximate characterisation, which we now derive and which explains the halving of the block size relative to the \aao baseline of~\eqref{eq:q-star-aao}.

The expected depth~\eqref{eq:Dsc-recall} is dominated by the isometry layer, which for the sequential implementation scales as $D_V^{\mathrm{lin}}(q)\sim\chi^4 q$ and grows linearly with the block size $q$. Minimising the depth therefore amounts to choosing the smallest $q$ for which both the splitting and the concatenation succeed with bounded overhead.

The splitting requires the $Q$ maps of a single split to all succeed in the same round, which they do with probability $p_{\rm sp}$~\eqref{eq:p-sp}. For the number of rounds $\mathbb{E}[G^{\max}_{L_{\rm sp}}]$ in~\cref{eq:J-approx} not to blow up, its denominator $-\ln(1-p_{\rm sp})$ must be kept at a value independent of $L$. We therefore choose $Q$ and $q$ so that $p_{\rm sp}=1-\epsilon_{\rm sp}$ for a fixed constant $\epsilon_{\rm sp}$. From~\eqref{eq:p-sp} and $\lVert C^{(q)}\rVert_\infty\le 1+\Gamma'\,e^{-q/\xi}$, we have $1-p_{\rm sp}\approx 2Q\,\Gamma'\,e^{-q/\xi}$ to leading order, so $p_{\rm sp}=1-\epsilon_{\rm sp}$ requires
\begin{equation}
  \label{eq:sac-q-split}
  q\;\ge\;\xi\,\ln\frac{2Q\,\Gamma'}{\epsilon_{\rm sp}}.
\end{equation}
On the other hand, the concatenation merges the $L_{\rm sp}=L'/Q$ splits and succeeds with probability~\eqref{eq:p-success-cat}: 
\begin{equation}
    p_{\rm cat}=\lVert C^{(q)}\rVert_\infty^{-2L_{\rm sp}}.
\end{equation}
Requiring $p_{\rm cat}=1-\epsilon_{\rm cat}$ for a small constant $\epsilon_{\rm cat}$ gives $\epsilon_{\rm cat}\approx 2(L'/Q)\,\Gamma'\,e^{-q/\xi}$ to leading order, which requires
\begin{equation}
  \label{eq:sac-q-cat}
  q\;\ge\;\xi\,\ln\frac{2(L'/Q)\,\Gamma'}{\epsilon_{\rm cat}}.
\end{equation}
The block size must satisfy both~\eqref{eq:sac-q-split} and~\eqref{eq:sac-q-cat}, so it is set by the largest bound given by
\begin{equation}
  \label{eq:sac-q-max}
  q^{\star}(Q)\;\approx\;\xi\,\ln\max\{\,Q,\;L'/Q\,\}+\mathrm{const}.
\end{equation}
The term with the maximum is minimized when its two arguments coincide,
\begin{equation}
  \label{eq:sac-Q-balance}
  Q\;=\;L'/Q.
\end{equation}
That is, it is minimized for $Q=\sqrt{L'}$ that entails $L_{\rm sp}=L'/Q=\sqrt{L'}$ as well. At this choice both bounds yield
\begin{equation}
  \label{eq:q-star-sac}
  q^{\star}_{\snc}\;= \ln\frac{2\sqrt{L'}\,\Gamma'}{\epsilon_{\rm cat}}
  \approx
  \;\tfrac{1}{2}\,q^{\star}_{\aao}.
\end{equation}
This choice halves the required block size relative to the \aao protocol of~\eqref{eq:q-star-aao}, which is the origin of the empirical behaviour observed in~\cref{sec:compare-approx-rg}.
Strictly, $Q=\sqrt{L'}$ is the optimum of this approximate analysis rather than the exact minimiser of $D_{\snc}(q,Q)$. Numerically, however, the optimal split size indeed follows $Q\propto\sqrt{L'}$ closely, as shown in~\cref{fig:d_vs_qQ_sac}~(right panel), so the approximate analysis already captures the observed behaviour.

\paragraph{Achieving the halving exactly with amplitude amplification.}
The factor $\mathbb{E}[G^{\max}_{L_{\rm sp}}]$ appearing in~\cref{eq:J-approx}, due to the necessity of probabilistically reattempting the correction of the splits, can be eliminated by applying exact amplitude amplification in the same spirit as~\cref{thm:det-aa}.
Using $1-p_{\rm sp}\approx 2Q\,\Gamma'\,e^{-q/\xi}$ to leading order, and requiring $p_{\rm sp}>1/2$ (to ensure amplitude amplification in one round), results in the condition 
$Q\,\Gamma'\,e^{-q/\xi}\le \tfrac{1}{4}$.
The optimisation of the depth at $Q^{\star}=\sqrt{L'}$ proceeds as before 
and yields the same leading-order block size $q^{\star}_{\snc} \approx \tfrac{1}{2}\,\xi\,\ln L'$ as in~\eqref{eq:q-star-sac}.
Then, we can simply substitute $\mathbb{E}[G^{\max}_{L_{\rm sp}}]=1$ in \cref{eq:Dsc-recall}, to recover the halving with the amplitude amplification supported variant of the \snc protocol.

\section{Quasi-probabilistic decomposition}
\label{app:incoherent}

In this appendix, we provide a detailed account of the incoherent
implementation of the correction maps $C^{(q)}$ that is used in~\cref{subsec:quasi-prob}. 
As is discussed in~\cref{app:qpd_bg}, where we provide an overview of quasi-probabilistic decompositions (QPDs)~\cite{temme2017,endo2018,cai2023,dai_koczor_2026}, such an approach is limited to tasks of estimation of expectation values and incurs sampling overhead. In~\cref{app:th5}, we prove~\cref{thm:Cq-qpd} of the main text, showing upper bounds for the sampling overhead when implementing a single correction map through QPD. This, in turn, is used to identify scalings of the block size ensuring a constant overall sampling overhead for the preparation of the renormalised state (\cref{app:qpd_aao}) and for the implementation of the isometries (\cref{app:qpd_vq}).

\subsection{Quasi-probabilistic decomposition}\label{app:qpd_bg}

As discussed in the main text, the key idea is to replace the action of a correction map $C^{(q)}\in \mathbb{C}^{\chi^2 \times \chi^2}$ through QPD. This is done by first expressing the correction superoperator (a non-physical map)
$\mathcal{C}(\cdot) := C^{(q)}(\cdot)\,(C^{(q)})^{\dag}$ as a linear combination of
physical quantum channels $\{\mathcal{B}_i\}$,
\begin{equation}\label{eq:qpd_app}
  \mathcal{C} = \sum_i \gamma_i \mathcal{B}_i,
\end{equation}
where the coefficients $\gamma_i \in \mathbb{R}$ may take negative values. We define the $\ell^1$-norm $|\gamma|_1 := \sum_i |\gamma_i|$, such that $|\gamma|_1\geq 1$ with equality only when $\mathcal{C}$ is a quantum channel.

In the QPD implementation, an occurrence of $\mathcal{C}$ is replaced by a channel $\mathcal{B}_{i}$ sampled according to the probability distribution $\{|\gamma_i|/|\gamma|_1\}$.
In particular, consider the task of estimating the expectation value 
\begin{equation}\label{app:expval}
\operatorname{tr}(O \mathcal{C}(\rho))
\end{equation}
for an arbitrary state $\rho$.
Denote as $N_{\mathrm{shots}}$ the total number of measurements (in the eigenbasis of $O$) performed and as $m$ the index of each measurement. Let $\mu_m$ denote the measurement outcomes and $i_m$ the channel indices sampled at each run $m=1,\hdots,N_{\mathrm{shots}}$.
It can be verified from linearity of the trace and~\cref{eq:qpd_app} that
\begin{equation}
  \label{eq:qpd-estimator}
  \hat{\mu}
  :=
  \frac{|\gamma|_1}{N_{\mathrm{shots}}}
  \sum_{m=1}^{N_{\mathrm{shots}}} \mathrm{sign}(\gamma_{i_m})\,\mu_m
\end{equation}
provides an \emph{unbiased} estimator of the desired expectation value.
That is, by rescaling each measurement outcome by a factor $|\gamma|_1 \mathrm{sign}(\gamma_{i_m})$, one recovers the right expectation value on average. Note however that
while this estimator is unbiased, its variance is increased by a factor
$|\gamma|_1^{2}$ (called the \emph{sampling overhead}) compared to a coherent implementation of $\mathcal{C}$ (whenever such implementation exists). That is, to guarantee the same variance one would need to use $|\gamma|_1^2$ times more shots when using QPD.
When $Q$ applications of the correction maps are replaced by QPD, this sampling overhead becomes $|\gamma|_1^{2Q}$.

\begin{table}[t]
\centering
\begin{tabular}{c l l}
\hline\hline
Index & Basis operation & Implementation \\ \hline
1  & $[\mathbb{I}]$
   & Identity channel (no operation). \\[0.3em]
2  & $[X]$
   & Apply a Pauli-$X$ gate. \\[0.3em]
3  & $[Y]$
   & Apply a Pauli-$Y$ gate. \\[0.3em]
4  & $[Z]$
   & Apply a Pauli-$Z$ gate. \\[0.3em]
5  & $[R_x]$
   & $\pi/2$ rotation about the $x$-axis. \\[0.3em]
6  & $[R_y]$
   & $\pi/2$ rotation about the $y$-axis. \\[0.3em]
7  & $[R_z]$
   & $\pi/2$ rotation about the $z$-axis. \\[0.3em]
8  & $[R_{yz}]$
   & Apply $[R_x][R_z]^2$. \\[0.3em]
9  & $[R_{zx}]$
   & Apply $[R_z][R_x][R_z]$. \\[0.3em]
10 & $[R_{xy}]$
   & Apply $[R_x]^2 [R_z]$. \\[0.3em]
11 & $[\pi_x]$
   & Project in the $X$ basis. \\[0.3em]
12 & $[\pi_y]$
   & Project in the $Y$ basis. \\[0.3em]
13 & $[\pi_z]$
   & Project in the $Z$ basis. \\[0.3em]
14 & $[\pi_{yz}]$
   & Apply $[R_z]^3 [R_x]^3 [\pi_z][R_x]^3 [R_z]$. \\[0.3em]
15 & $[\pi_{zx}]$
   & Apply $[R_x][\pi_z][R_x]^3 [R_z]^2$. \\[0.3em]
16 & $[\pi_{xy}]$
   & Apply $[\pi_z][R_x]^2$. \\
\hline\hline
\end{tabular}
\caption{Sixteen quantum channels that form a basis for all single-qubit channels, used for quasi-probabilistic
decompositions.
Each channel admits an $O(1)$-depth implementation.}
\label{tab:basis-channels}
\end{table}

\subsection{Proof of~\texorpdfstring{\cref{thm:Cq-qpd}}{Theorem~5}}\label{app:th5}
For convenience, we restate the \cref{thm:Cq-qpd} of the main text, that is proven below.
\begin{theorem}[Quasi-probabilistic decomposition of $C^{(q)}$]
Let $C^{(q)}$ be the correction map for a block size~$q$~\eqref{eq:def-Cq-imp}, and $\mathcal{C}(\cdot) = C^{(q)}(\cdot)\,{C^{(q)}}^{\dag}$ the corresponding superoperator.
The QPD of $\mathcal{C}$ in a basis of
quantum channels implementable in $O(1)$ 
depth~\cite{temme2017,endo2018}, has an $\ell^1$-norm bounded through
\begin{equation*}
  |\gamma|_1 = 1 + O\!\bigl(\Gamma\,\mathrm{poly}(\chi)\, e^{-q/\xi}\bigr).
\end{equation*}
\end{theorem}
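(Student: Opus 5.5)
The approach is to write the correction superoperator as the identity channel plus a small perturbation, and then to decompose only the perturbation in the basis of \cref{tab:basis-channels}. Set $K := C^{(q)} - \mathbb{I}_{\chi^2} = R^{(q)}(P^{(\infty)})^{-1}$, as in~\cref{eq:def-Cq}. Expanding gives
\begin{equation*}
  \mathcal{C}(\rho) = \rho + \mathcal{K}(\rho), \qquad \mathcal{K}(\rho) := K\rho + \rho K^\dag + K\rho K^\dag .
\end{equation*}
We then take the QPD $\mathcal{C} = 1\cdot\mathrm{id} + \sum_i \gamma_i' \mathcal{B}_i$, where $\sum_i \gamma_i'\mathcal{B}_i$ is a decomposition of $\mathcal{K}$. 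This gives $|\gamma|_1 \le 1 + |\gamma'|_1$, so it suffices to show $|\gamma'|_1 = O(\Gamma\,\mathrm{poly}(\chi)\,e^{-q/\xi})$. From~\cref{residual} and~\cref{eq:spect_Cq} we already have $\|K\|_\infty \le \|R^{(q)}\|_2\|(P^{(\infty)})^{-1}\|_\infty \le \Gamma'e^{-q/\xi}$. The quantity $\|(P^{(\infty)})^{-1}\|_\infty$ is a constant of the tensor and gets absorbed into $\Gamma$ (up to $\mathrm{poly}(\chi)$ factors), as the paper does elsewhere.

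Next, we bound the cost of decomposing an arbitrary Hermiticity-preserving map $\mathcal{K}$ on $n = \lceil\log_2\chi^2\rceil$ qubits. The sixteen single-qubit channels of \cref{tab:basis-channels} span all single-qubit superoperators~\cite{endo2018}, so their $n$-fold tensor products span all superoperators on $n$ qubits. Each product is still an $O(1)$-depth circuit. The expansion coefficients are given by a fixed linear map $T_n = T_1^{\otimes n}$, where $T_1$ is the inverse of the $16\times16$ change of basis from the Pauli transfer matrix (PTM) to the basis channels. To control them, we first write $\mathcal{K}$ in PTM form, $\mathcal{K} = \sum_{a,b} k_{ab}\,P_a(\cdot)$-type elements; equivalently, the real PTM has entries $k_{ab} = 2^{-n}\tr(P_a\mathcal{K}(P_b))$. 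Each entry satisfies $|k_{ab}| \le \|\mathcal{K}\|_{\infty\to\infty}$-type bounds, namely $|k_{ab}| \le 2\|K\|_\infty + \|K\|_\infty^2$. There are $4^{2n} = \mathrm{poly}(\chi)$ such entries. Since the condition number of $T_1$ is a fixed constant $c$, we get $|\gamma'|_1 \le c^n\,4^{2n}(2\|K\|_\infty + \|K\|_\infty^2) = \mathrm{poly}(\chi)\,\Gamma e^{-q/\xi}(1+o(1))$, because $c^n = \chi^{2\log_2 c}$ is polynomial in $\chi$.

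The main obstacle is the middle step: controlling the $\ell^1$-norm of the coefficients when $\mathcal{K}$ is expanded in tensor products of the basis channels, and in particular showing that the overhead is only polynomial in $\chi$. The crude bound above goes through a fixed $16\times16$ change of basis whose tensor power costs $c^{n}$. This is polynomial only because $n = O(\log\chi)$, so I would first carry the argument out carefully in the PTM/Pauli basis, where each Pauli-pair term $P_a\rho P_b + P_b\rho P_a$ can be written explicitly using a constant number of the table's channels. A second, minor point is that $\mathcal{K}$ is not trace-preserving; this is harmless because the basis already contains the projective channels $[\pi_\cdot]$, which handle non-unital and trace-decreasing components. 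Finally, the identity term contributes exactly $1$ and all remaining terms are linear in $\|K\|_\infty$ to leading order, which gives $|\gamma|_1 = 1 + O(\Gamma\,\mathrm{poly}(\chi)\,e^{-q/\xi})$, as claimed in~\cref{eq:bound_l1}.
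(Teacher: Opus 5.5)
Your proposal is correct and is essentially the paper's proof: write $\mathcal{C}=\mathcal{I}+\mathcal{D}$ with $\mathcal{D}(\rho)=\Delta_C\rho+\rho\Delta_C^\dagger+\Delta_C\rho\Delta_C^\dagger$, and bound every Pauli-transfer-matrix entry of $\mathcal{D}$ by $2\eta+\eta^2$ with $\eta=\Gamma' e^{-q/\xi}$. The resulting $\ell^1$ vector then goes through the inverse synthesis matrix $(b^{-1})^{\otimes n_v}$, whose cost $\|b^{-1}\|_{1\to 1}^{n_v}$ is $\mathrm{poly}(\chi)$ because $n_v=O(\log\chi)$, while the identity channel (itself a basis element) contributes exactly $1$.

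Two small points to tighten. First, the constant you need is the induced norm $\|b^{-1}\|_{1\to 1}$ (maximum absolute column sum, which is multiplicative under tensor products), not a condition number. Second, the leftover factor $\|(P^{(\infty)})^{-1}\|_\infty=1/\lambda_{\min}(\sigma)$ is a tensor-dependent constant, not $\mathrm{poly}(\chi)$. It is at most $\Gamma/\chi$, so you really get $\Gamma'\le\Gamma^2/\chi$ in place of $\Gamma$. The paper glosses over this too, and it is harmless because $\Gamma$ only enters the block size logarithmically.
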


\begin{proof}
A convenient choice of basis for a channel acting on one qubit is given by the sixteen single-qubit operations listed in~\cref{tab:basis-channels}.
This basis, widely used in probabilistic error
cancellation~\cite{temme2017,endo2018}, is generated by Clifford operations
and projective maps. 
This is readily extended to a basis of channels acting on $m$ qubits by taking tensor products of the single-qubit channels (for a total of $16^m$ channels). This ensures that we can always obtain a decomposition of the form~\cref{eq:qpd_app} and that each basis channel can be implemented in constant depth.
We now prove~\cref{thm:Cq-qpd}, that bounds the sampling overhead of the correction superoperator in this basis.

  We encode the $\chi^2$-dimensional virtual-pair space in
  $n_v = 2\lceil \log_2\chi\rceil$ qubits, with Hilbert-space dimension
  $d_v = 2^{n_v} \geq \chi^2$.
  The $16^{n_v}$ tensor-product operations formed from the single-qubit
  channels in~\cref{tab:basis-channels} define the basis used for the
  QPD~\eqref{eq:qpd_app}~\cite{temme2017,endo2018}.
  From~\cref{eq:def-Cq}, we have that the correction map to be implemented is expressed as $C^{(q)} = \mathbb{I}_{\chi^2} + \Delta_C$ with
  $\Delta_C := R^{(q)} (P^{(\infty)})^{-1}$.
  By the residual bound~\eqref{residual} and submultiplicativity, we further get that
  \begin{equation}
    \label{eq:Delta-op-bound}
    \|\Delta_C\|_\infty
    \leq \|R^{(q)}\|_2\,\|(P^{(\infty)})^{-1}\|_\infty
    \leq \Gamma\,\|(P^{(\infty)})^{-1}\|_\infty\,e^{-q/\xi}
    = \Gamma'\,e^{-q/\xi}
    =: \eta.
  \end{equation}
  We recall from \cref{eq:gamma_prime} that $\Gamma':= \Gamma \|(P^{(\infty)})^{-1}\|_\infty$.
  The correction superoperator decomposes as
  $\mathcal{C} = \mathcal{I} + \mathcal{D}$, with $\mathcal{I}$ the identity channel and
  \begin{equation}\label{eq:dec_D}
    \mathcal{D}(\rho)
    := \Delta_C\rho + \rho\,\Delta_C^\dag + \Delta_C\rho\,\Delta_C^\dag.
  \end{equation}
  
  Let $\{P_\alpha\}_{\alpha=0}^{4^{n_v}-1}$ be the $n_v$-qubit Pauli operators,
  such that $\tr[P_\alpha P_\beta] = d_v\,\delta_{\alpha\beta}$.
  The Pauli transfer matrix $T$ of $\mathcal{C}$ has entries
  $[T]_{\alpha,\beta} = d_v^{-1}\tr[P_\alpha\, \mathcal{C}(P_\beta)]$,
  and decomposes as $T = \mathcal{I} + \mathcal{E}$ in terms of the transfer matrix of identity part, with entries $[\mathcal{I}]_{\alpha,\beta} = \delta_{\alpha\beta}$,
  and the transfer matrix of $\mathcal{D}$, with entries $[\mathcal{E}]_{\alpha,\beta} = d_v^{-1}\tr[P_\alpha\,\mathcal{D}(P_\beta)]$. To bound the magnitude of the entries of $\mathcal{E}$, first note that
  \begin{equation}
      \tr[P_\alpha\Delta_C\,P_\beta] \leq \|P_\alpha\|_2 \|\Delta_C\,P_\beta \|_2 \leq d_v \|\Delta_C\|_\infty \leq d_v \eta.
  \end{equation}
  We used Hölder's inequality to obtain the first inequality. For the second one, we used that $\|\Delta_C P_\beta\|_2 \le \|\Delta_C\|_\infty\|P_\beta\|_2$ and
  $\|P_\alpha\|_2 = \sqrt{d_v}$. The last one follows from \cref{eq:Delta-op-bound}. A similar bound applies to $\tr[P_\alpha\,P_\beta\, \Delta^\dag_C]$. In addition, we have
\begin{equation}
      \tr[P_\alpha\Delta_C\,P_\beta\Delta^\dag_C] \leq \|P_\alpha\Delta_C\|_2 \|\,P_\beta \Delta^\dag_C\|_2 \leq d_v \|\Delta_C\|^2_\infty \leq d_v \eta^2.
  \end{equation}
  Hence, from the decomposition in \cref{eq:dec_D}, we get
  \begin{equation}
    \label{eq:E-entry-bound}
    |[\mathcal{E}]_{\alpha,\beta}|
    \leq 2\eta + \eta^2.
  \end{equation}
  It remains to relate this upper bound, on elements of the transfer matrix of $\mathcal{E}=T - \mathcal{I}$, to a bound on the $\ell^1$-norm $|\gamma|_1$. This is readily achieved by noting that the set of coefficients ${\gamma_i}$ and the entries of $T$ are decompositions of $\mathcal{C}$ in different bases.
  
  Let $\{\mathcal{B}_i\}_{i=1}^{16^{n_v}}$ be the tensor-product basis channels obtained from~\cref{tab:basis-channels}, and denote by $\gamma\in\mathbb{R}^{16^{n_v}}$ the vector of weights $\gamma_i$ from~\cref{eq:qpd_app}.
  Collect, as the columns of a synthesis matrix $B\in\mathbb{R}^{16^{n_v}\times 16^{n_v}}$, the vectorised Pauli transfer matrices of the basis channels, using a vectorisation $\mathrm{vec}(\cdot)$ that respects the qubit tensor factorization.
  The Pauli transfer matrix $T$ of $\mathcal{C}$ then satisfies $\mathrm{vec}(T) = B\gamma$.
  The $16^{n_v}$ basis channels are linearly independent and span the space of superoperators acting on $n_v$ qubits, so $B$ is invertible. In turn, we have
  \begin{equation}
    \gamma = B^{-1}\,\mathrm{vec}(T).
  \end{equation}
  Both the channel basis and the Pauli basis factorize over the $n_v$ qubits, so $B = b^{\otimes n_v}$ for the single-qubit synthesis matrix $b\in \mathbb{R}^{16 \times 16}$, and hence $B^{-1} = (b^{-1})^{\otimes n_v}$.
  
  Recall that $T = \mathcal{I} + \mathcal{E}$, with $\mathcal{I}$ the Pauli transfer matrix of the identity channel.
  Since the identity channel is itself a basis element, $B^{-1}\,\mathrm{vec}(\mathcal{I})$ is the corresponding unit coordinate vector with $\ell^1$-norm $\|B^{-1}\,\mathrm{vec}(\mathcal{I})\|_1 = 1$.
  Therefore
  \begin{equation}
    |\gamma|_1 = \bigl\| B^{-1}\,\mathrm{vec}(\mathcal{I} + \mathcal{E}) \bigr\|_1 \le 1 + \|B^{-1}\|_{1\to 1}\,\|\mathrm{vec}(\mathcal{E})\|_1,
  \end{equation}
  where $\|B^{-1}\|_{1\to 1}$ is the \emph{maximum absolute column sum norm} of $B^{-1}$.
  This induced norm is multiplicative under tensor products, so
  \begin{equation}
    \|B^{-1}\|_{1\to 1}
    = \|b^{-1}\|_{1\to 1}^{\,n_v}
    = \mathrm{poly}(\chi),
  \end{equation}
  as $b$ (and $b^{-1}$) have entries and dimensions independent of $\chi$ and $q$.
  By~\eqref{eq:E-entry-bound} and since $\mathcal{E}$ has $16^{n_v}$ entries,
  \begin{equation}
    \|\mathrm{vec}(\mathcal{E})\|_1
    = \sum_{\alpha,\beta} |[\mathcal{E}]_{\alpha,\beta}|
    \le 16^{n_v}\,(2\eta+\eta^2)
    = \mathrm{poly}(\chi)\,(2\eta+\eta^2).
  \end{equation}
  
  Combining these bounds with $\eta = \Gamma'\,e^{-q/\xi}$ from~\cref{eq:Delta-op-bound},
  \begin{equation}
    |\gamma|_1
    \le 1 + \mathrm{poly}(\chi)\,(2\eta+\eta^2)
    = 1 + O\!\bigl(\textrm{poly}(\chi) \Gamma'\,e^{-q/\xi}\bigr),
  \end{equation}
  which was reported in~\cref{eq:bound_l1} in the main text.
  \end{proof}

\subsection{Exact preparation of the renormalised state for the all-at-once protocol of~\cref{subsec:exact-protocol}}\label{app:qpd_aao}
The variance of the estimator~\eqref{eq:qpd-estimator} is increased by a
factor $|\gamma|_1^2$ per application of a correction map.
Since the all-at-once protocol requires the implementation of $L' = L/q$ of such correction maps to prepare $\ket{P^{(q)}(L')}$ (\cref{protocol:exact-prep}), the total sampling overhead relative to a coherent implementation is given by
\begin{equation}
  |\gamma|_1^{2L'}
  = \Bigl(1 + O\!\bigl(\Gamma'\textrm{poly}(\chi)\,e^{-q/\xi}\bigr)\Bigr)^{2L/q}
  = 1 + O\!\Bigl(\Gamma'\textrm{poly}(\chi)\,\tfrac{L}{q}\,e^{-q/\xi}\Bigr).
\end{equation}
This shows that to make this sampling overhead constant, independent of $L$, we can choose a block size
\begin{equation}\label{eq:q_qpd}
    q  =O\!\bigl(\xi\log(\chi\Gamma L)\bigr).
\end{equation}
That is, for this choice of $q$ we achieve a circuit depth $D_C^{\rm QPD}=O(1)$ for the implementation of the correction maps, while maintaining constant sampling overhead.

\subsection{Sequential implementation of the isometries}\label{app:qpd_vq}
We now show that the isometries $V^{(q)}$ themselves are implemented using Bell measurements together with OAAI (\cref{thm:OAAI}) and that $V^{(q)}$ can be implemented with circuit depth $O(\chi^{3} q)$ at constant sampling overhead.
To see this, we begin by rewriting the decomposition
in~\cref{eq:Vq_decomp_chain} as
\begin{equation}
  \label{eq:Vq_Bell}
  \frac{1}{\sqrt{\chi}} V^{(q)} =
  \raisebox{-0.4\height}{\hspace{1em}\includegraphics[width=0.28\textwidth]{figs/Vq_Bell.pdf}\\[6pt] }.
\end{equation}
Here, we introduced a Bell measurement satisfying the equality
\begin{equation}
  \label{eq:bell}
  \raisebox{-0.4\height}{\hspace{1em}\includegraphics[width=0.08\textwidth]{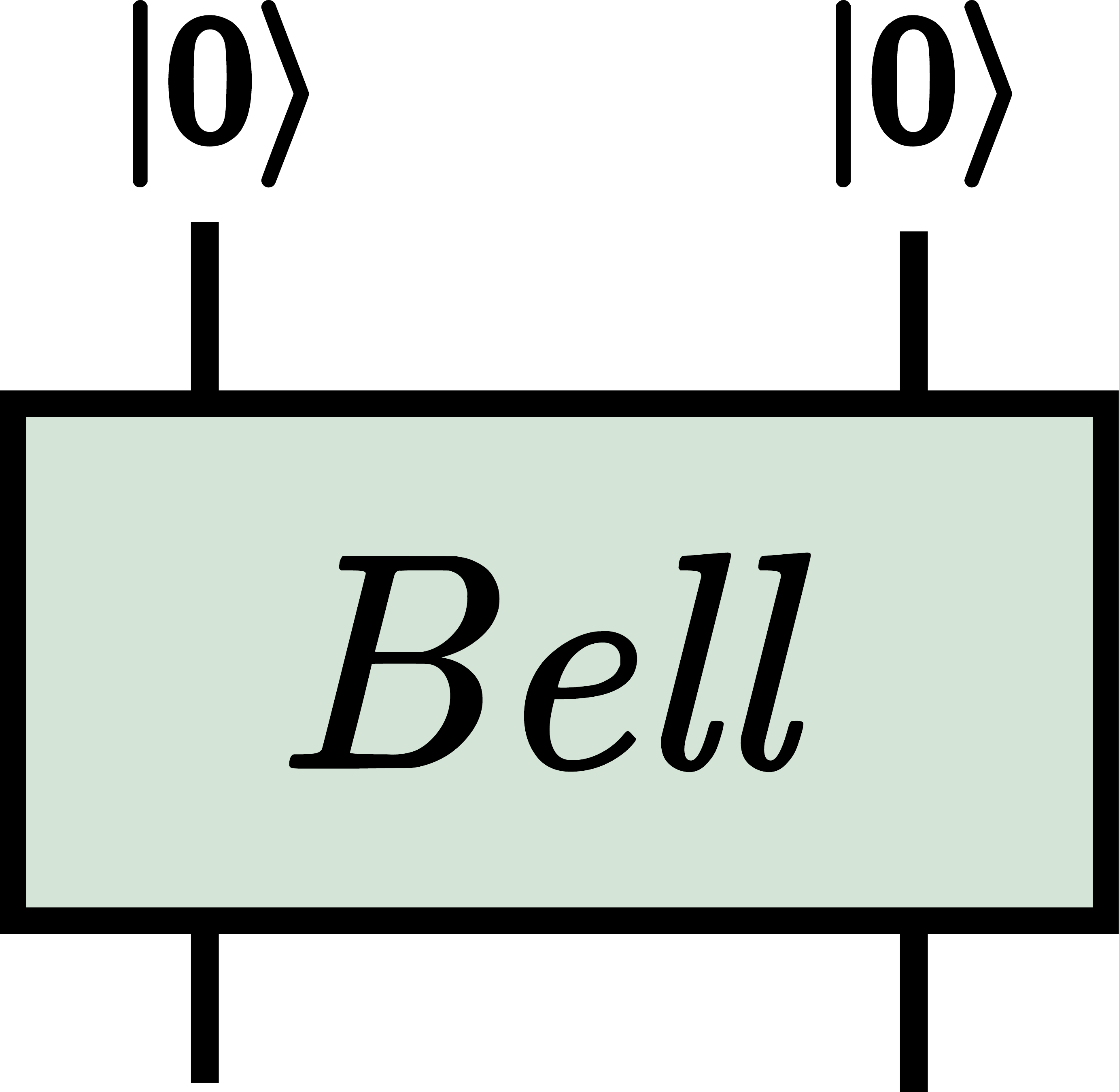}\\[6pt] }
  =
  \raisebox{-0.4\height}{\hspace{1em}\includegraphics[width=0.075\textwidth]{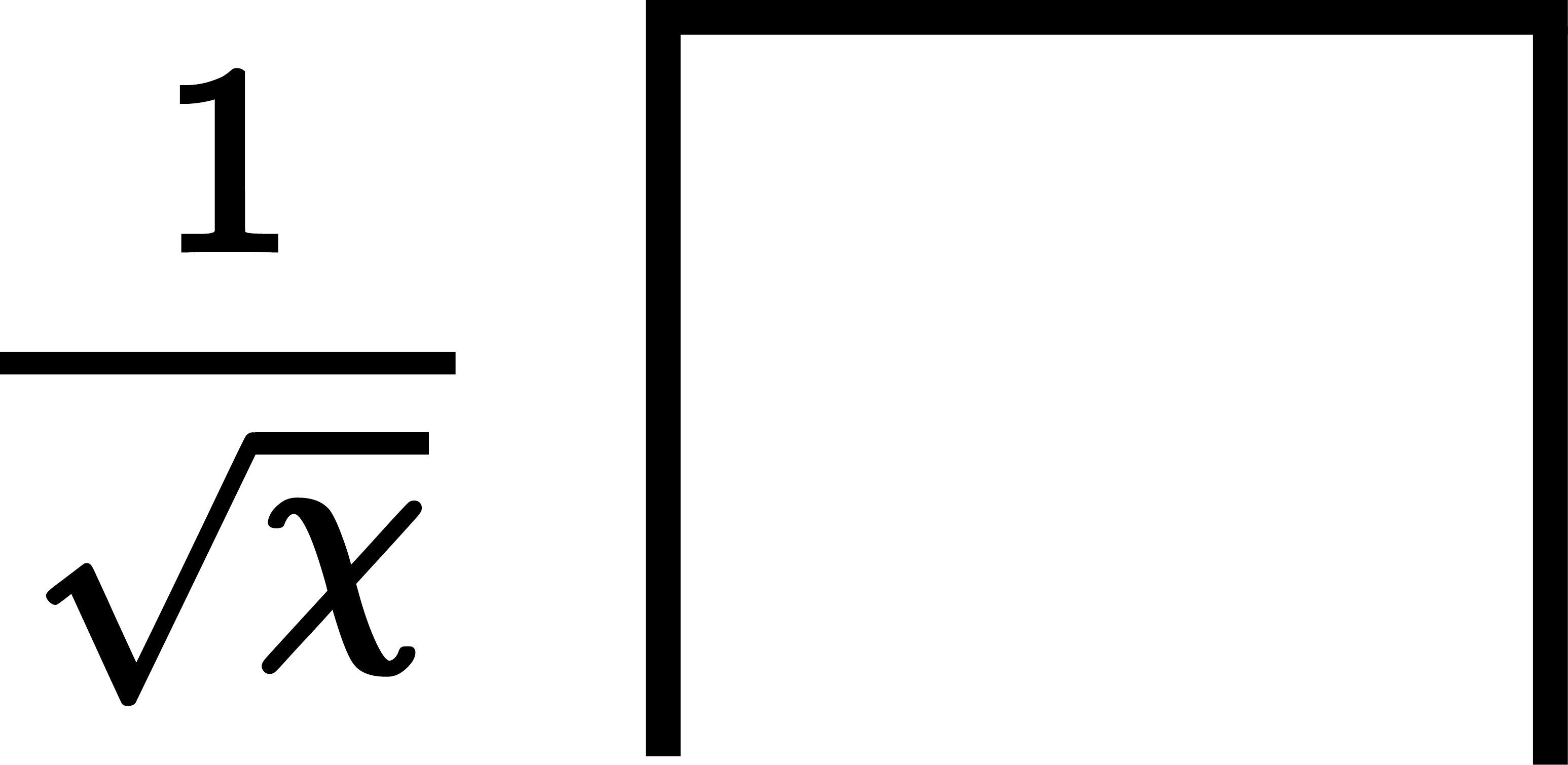}\\[6pt] }.
\end{equation}

Next, we apply a right-canonical decomposition to the length-$q$ MPS
including the $\sigma^{-1}$ term.
By repeatedly performing singular value decompositions from the right to the left, we obtain a right-canonical form in which the tensors satisfy isometric constraints.
As a result, the right-hand side of~\cref{eq:Vq_Bell} can be written as
\begin{equation}
  \label{eq:Vq_rightcanonical}
  \raisebox{-0.4\height}{\hspace{1em}\includegraphics[width=0.33\textwidth]{figs/Vq_rc.pdf}\\[6pt]} \; ,
\end{equation}
where $\{\tilde{A}_k\}_{k=1}^{q}$ are the resulting right-canonical tensors, ordered from left ($k=1$) to right ($k=q$).
For $k=2,\dots,q$, by construction each tensor $\tilde{A}_k$ is an exact isometry:
\begin{equation}
  \label{eq:tildeAk_iso}
  \sum_{i=1}^{d} \tilde{A}_k^{\,i}\bigl(\tilde{A}_k^{\,i}\bigr)^{\dag}
  \;=\; \mathbb{I}_{\chi}.
\end{equation}
The leftmost tensor $\tilde{A}_1$, however, requires more care, as it absorbs
the factor $\sigma^{-1}$ together with the overall scale of the chain.
Consider the entire length-$q$ chain $\sigma^{-1}A^{(q)}$ as
a single map from the left-virtual space to the joint physical and right-virtual
space, i.e.\ as a matrix in $\mathbb{C}^{d^q\chi\times\chi}$. Up to a correction
exponentially small in $q$, this map is proportional to an isometry.
To see that, recall that the tensors $A$ are left-canonical, with $\sum_i A^{i\dag}A^i=\mathbb{I}_\chi$,
so that $\mathbb{I}_\chi$ is the left fixed point of the transfer
matrix~\eqref{eq:transfer-left-ev}. The right-contracted chain is the $q$-fold
iterate of the MPS channel $\mathcal{E}(X)=\sum_i A^iXA^{i\dag}$ applied to the
identity,
\begin{equation*}
  \sum_{\mathbf{i}} [A^{(q)}]^{\mathbf{i}}\bigl([A^{(q)}]^{\mathbf{i}}\bigr)^{\dag}
  \;=\; \mathcal{E}^{q}(\mathbb{I}_\chi)
  \;=\; \chi\,\sigma^2 + O\!\bigl(e^{-q/\xi}\bigr),
\end{equation*}
where we used $E^{\infty}=\dket{\sigma^2}\dbra{\mathbb{I}_\chi}$ from~\cref{eq:Einfty-rank-one}
together with $\dbraket{\mathbb{I}_\chi}{\mathbb{I}_\chi}=\tr(\mathbb{I}_\chi)=\chi$.
The exponentially small remainder is governed by the transfer-matrix gap and decays at the correlation length~\eqref{eq:correll}, as established in~\cref{app:residual-bound}.
The factor $\chi$ comes from $\tr(\mathbb{I}_\chi)$ and originates from the left-isometric normalisation of $A$. In turn, conjugating by the edge
factor $\sigma^{-1}$ (with $\sigma=\sqrt{\rho}\succ0$)
yields 
\begin{equation*}
  \sum_{\mathbf{i}} \bigl(\sigma^{-1}[A^{(q)}]^{\mathbf{i}}\bigr)\bigl(\sigma^{-1}[A^{(q)}]^{\mathbf{i}}\bigr)^{\dag}
  \;=\; \chi\bigl(\mathbb{I}_\chi + O(e^{-q/\xi})\bigr).
\end{equation*}
Since the bulk right-canonical tensors $\tilde{A}_2,\dots,\tilde{A}_q$ are 
isometries~\cref{eq:tildeAk_iso}, their contraction telescopes to $\mathbb{I}_\chi$
and the scale is carried by the leftmost tensor. We thus find that
$\tilde{A}_1$ is isometric up to the overall scale $\chi$ and an
exponentially small corrections in $q$:
\begin{equation}
  \label{eq:tildeA1_almost_iso}
  \sum_{i=1}^{d} \tilde{A}_1^{\,i}\bigl(\tilde{A}_1^{\,i}\bigr)^{\dag}
  \;=\; \chi(\mathbb{I}_{\chi} + O\!\bigl(e^{-q/\xi}\bigr)).
\end{equation}

Since $\tilde{A}_1$ obeys~\cref{eq:tildeA1_almost_iso} rather than an exact
isometric constraint, it has spectral norm $\sqrt{\chi}$ (up to the small corrections). We therefore rescale it to
$\tilde{A}_1/\sqrt{\chi}$, which can be block-encoded into a unitary matrix (\cref{app:block-encoding}). 
Letting $U_q$ denote the resulting unitary block encoding of $V^{(q)}$, for an
arbitrary state $\ket{\phi}$ on the $\chi^2$-dimensional virtual space it acts as
\begin{equation}
  \label{eq:Uq_embeds_Vq}
  U_q\,\ket{\phi}\ket{0}_{a_{\mathrm{in}}}
  \;=\;
  \frac{1}{\chi\bigl(1+O(e^{-q/\xi})\bigr)}\,
  V^{(q)}\,\ket{\phi}\ket{0}_{a_{\mathrm{o}}}
  \;+\;
  \ket{\phi_\perp},
\end{equation}
where $a_{\mathrm{in}}$ and $a_{\mathrm{o}}$ indicate the input and output ancilla registers, respectively, and $\bra{\phi_\perp}(\ket{\cdot}\ket{0}_{a_{\mathrm{o}}})=0$.
The prefactor $1/\chi$ is the product of two contributions of $1/\sqrt{\chi}$ each: one from the postselected Bell measurement~\eqref{eq:bell}, seen in~\cref{eq:Vq_Bell}, and a further $1/\sqrt{\chi}$ 
due to the rescaling of $\tilde{A}_1$~\eqref{eq:tildeA1_almost_iso} 
needed for its block encoding.
Assuming $q$ to be large enough such that $O(e^{-q/\xi})$ is negligible, as would be the case for~\cref{eq:q_qpd}, we can use the exact OAAI of~\cref{app:proofs_OAAI} to exactly implement $V^{(q)}$.
The latter entails $O(\chi)$ calls to $U_q$ and $U_q^\dag$ for a total of $O(\chi)$ (inverse of the) correction maps within the circuit.

First, and following the argument of~\cref{app:qpd_aao}, we see that to implement $L'=O(L)$ isometries, a choice of $q$ scaling as~\cref{eq:q_qpd} is sufficient to ensure constant sampling overhead.
Second, given that the circuit depth of a single $U_q$ scales as $O(q\chi^{2})$ (due to the implementation of the $q$ isometries $\tilde{A}_i$), the overall circuit depth required to implement a single $V^{(q)}$ with unit success probability becomes $O(q\chi^3)$.

Overall this shows that through the use of QPD, and for a choice of block size as per~\cref{eq:q_qpd}, we can achieve both the all-at-once exact preparation (\cref{app:qpd_aao}) and subsequently implement the required isometries (\cref{app:qpd_vq}) with circuit depth scaling as $O(q \chi^3)$ and with constant sampling overhead. That is, we achieve
\begin{equation}
    D^{\rm QPD}=O\!\bigl(\chi^3\,\xi\,\log(\chi\Gamma L)\bigr),
\end{equation}
as reported in~\cref{tab:Vq-depth-summary}.

\end{document}